\newcommand{\strong}{\emph}
\theoremstyle{plain}
\newtheorem{thm}{Theorem}[section]
\newtheorem{lm}[thm]{Lemma}
\newtheorem{prop}[thm]{Proposition}
\newtheorem{cor}[thm]{Corollary}
\newtheorem{conj}[thm]{Conjecture}
\theoremstyle{definition}
\newtheorem{dfn}[thm]{Definition}
\newtheorem{rmk}[thm]{Remark}
\def\dpar#1#2{\frac{\partial #1}{\partial #2}}
\def\Hil{\mathcal{H}\xspace}
\def\Z{\mathbb{Z}\xspace}
\def\R{\mathbb{R}\xspace}
\def\C{\mathbb{C}\xspace}
\def\P{\mathbb{P}\xspace}
\def\T{\mathbb{T}\xspace}
\def\S{\mathbb{S}\xspace}
\def\scal#1#2{\left\langle #1,#2\right\rangle}
\def\classe#1#2{\mathcal{C}^{#1}#2}
\def\bigO#1{\mathcal{O}\mathopen{}\left(#1\right)\mathclose{}}
\newcommand\blfootnote[1]{
  \begingroup
  \renewcommand\thefootnote{}\footnote{#1}
  \addtocounter{footnote}{-1}
  \endgroup
}
\begin{document}

\title{\sc{Symplectic geometry and spectral properties of classical and quantum coupled angular momenta}}

\author{Yohann Le Floch\hspace{2cm} \'Alvaro Pelayo}

\date{}

\maketitle

\begin{abstract}
We give a detailed study of the symplectic geometry of a family of integrable systems obtained by coupling two angular momenta in a non trivial way. These systems depend on a parameter $t \in [0,1]$ and exhibit different behaviors according to its value. For a certain range of values, the system is semitoric, and we compute some of its symplectic invariants. Even though these invariants have been known for almost a decade, this is to our knowledge the first example of their computation in the case of a non\--toric semitoric system on a compact manifold (the only invariant of toric systems is the image of the momentum map). In the second part of the paper we quantize this system, compute its joint spectrum, and describe how to use this joint spectrum to recover information about the symplectic invariants.
\end{abstract}

\blfootnote{\emph{2010 Mathematics Subject Classification.} 53D05,37J35,70H06,81Q20,35P20.}

\blfootnote{\emph{Key words and phrases.} Integrable systems, semitoric systems, symplectic geometry, semiclassical analysis.}

\section{Introduction}

\subsection{A fundamental model in physics: coupled angular momenta}

One of the most mathematically interesting and physically relevant finite dimensional integrable systems with two degrees of freedom is produced by coupling two angular momenta in a non trivial fashion, depending on a parameter $t \in [0,1]$ encoding the different non trivial ways in which the coupling can occur. To describe the system precisely, let $R_2 > R_1 > 0$ be two positive real numbers, which represent the norms of the two angular momenta. Endow $M = \S^2 \times \S^2$ with the coordinates $(x_1,y_1,z_1,x_2,y_2,z_2)$ and the symplectic form
$\omega = -(R_1 \omega_{\S^2} \oplus R_2 \omega_{\S^2}),$ where $\omega_{\S^2}$ is the standard symplectic form on the sphere $\S^2$ (the one giving area $4\pi$). The \strong{coupled angular momenta system} is given by the map $F = (J,H)$ where $J$ and $H$ are defined as
\begin{equation} \begin{cases} J = R_1 z_1 + R_2 z_2, \\ H = (1-t) z_1 + t (x_1 x_2 + y_1 y_2 + z_1 z_2), \end{cases} \label{eq:moment_map}\end{equation}
with $t$ a parameter in $[0,1]$. This system has been introduced by Sadovski\'{i} and Zhilinski\'{i} \cite{SadZhi}.

 It is of great relevance in physics, since it can be used to model a variety of phenomena. Even though it is given by relatively simple formulae, its symplectic and analytic properties are very rich; the characteristics of the system can change drastically as the parameter varies, and it displays many interesting features for which the available  literature is minimal: passing from non\--degenerate to degenerate singularities, degeneration of elliptic fibers into nodal fibers, and so on. One of its most interesting features, and motivation for its study in \cite{SadZhi}, is that it exhibits non trivial monodromy for certain values of $t$; for these values of the parameter, we prove that it is a so-called \strong{semitoric integrable system} with one focus-focus critical value. These systems seem to be common in the physics literature, but from the mathematical point of view, only a few examples are known. To our knowledge, the only (non toric) system which was rigorously proven to be semitoric prior to the present paper was the so-called Jaynes-Cummings system \cite{PelVN}, whose phase space is $\S^2 \times \R^2$. Hence, our paper constitutes the first rigorous study of a semitoric system with compact phase space. Note that, in the time interval between the first and current arXiv versions of this manuscript, Hohloch and Palmer \cite{HohPal} generalized this system to obtain family of semitoric systems with two focus-focus points on $\S^2 \times \S^2$. The symplectic classification of semitoric systems has been understood only recently \cite{PelVuSemi,PelVuConstr}; it relies on five symplectic invariants. See \cite{PelVNAMS,PelAMS} for recent surveys on integrable systems and Hamiltonian symmetries.

\subsection{Past works}

To our knowledge, the computation of these invariants has been performed only in the case of the coupling of a spin and an oscillator, the so-called Jaynes\--Cummings system, so far, in which case some computations are already heavy, in spite of the presences of nice symmetries. Part of this computation was performed in \cite{PelVN}, and this study was recently completed with the computation of the full Taylor series invariant and of the twisting-index invariant by Alonso, Dullin and Hohloch \cite{AloDulHoh}. We should also mention that the computation of one of the invariants has been achieved in the case of the spherical pendulum \cite{Dul}, but the latter is not a semitoric system but a generalized semitoric system, see \cite{PelRatVu}.  

Our first goal is to completely prove that the system at hand is semitoric for some values of the parameter $t$. In addition, we also obtain a complete parameterization for the boundary of the image of the moment map for all $t$, which is quite remarkable. Our second goal is to compute the five symplectic invariants for the coupled angular momentum system, thus providing another example, the first one on a compact phase space; it turns out that it is a quite complicated task, since the system does not exhibit the same kind of symmetries as the spin-oscillator does. In the process of achieving these two goals, we try to give as many details as possible, so that this paper could serve as a starting point when one wants to compute the symplectic invariants for other semitoric systems. This choice sometimes leads to the presence of a lot of technical details, but we think that this is a necessary evil.

Our third goal is to quantize this system and compute the associated joint spectrum, with the help of Berezin-Toeplitz operators; besides being interesting in itself, this also constitutes a good example for a future general study of the description of the joint spectrum of commuting self-adjoint Berezin-Toeplitz operators near a focus-focus value of the underlying integrable system, in the spirit of the work of V{\~u} Ng{\d{o}}c \cite{VN} on pseudodifferential operators on cotangent bundles. This example can also give some insight on the spectral behaviour during the transition in which an elliptic-elliptic point becomes focus-focus.

\subsection{Main results}

The paper emphasizes the interplay between the symplectic geometry of a classical integrable system and the spectral theory of the associated semiclassical integrable system. Our main results concern both the classical coupled angular momentum system and its quantum counterpart.

\subsubsection{Symplectic geometry of classical coupled angular momenta}

We will prove later that the map $F:M \to \R^2$ determined by (\ref{eq:moment_map}) is the momentum map for an integrable system, which means that its components $J$ and $H$ Poisson-commute and that the associated Hamiltonian vector fields $X_J, X_H$ are almost everywhere linearly independent.

\begin{dfn}
An integrable system $F= (J,H): M \to \R^2$ on a connected four-dimensional symplectic manifold $(M,\omega)$ is said to be \strong{semitoric} if $J$ is proper and is the momentum map for an effective Hamiltonian circle action and $F$ has only non-degenerate singularities with no hyperbolic component (if only this last property is satisfied, the system is said to be \strong{almost toric}). A semitoric integrable system is said to be \strong{simple} if there is at most one focus-focus point in each fiber of $J$.
\end{dfn}

\begin{rmk}
This definition implies that in semitoric or almost toric systems only singularities of elliptic-elliptic, elliptic-transverse and focus-focus type can occur (see Section \ref{sect:classical} for more details about singularities of integrable systems).
 \end{rmk}

The symplectic classification of simple semitoric systems has been achieved by the second author and V{\~u} Ng{\d{o}}c \cite{PelVuSemi,PelVuConstr}, and relies on five invariants:\begin{enumerate}
\item the \strong{number} of focus-focus critical values of the system,
\item a family of \strong{convex polygons} obtained by unwinding the singular affine structure of the system,
\item a number $h > 0$ for each focus-focus singularity, the \strong{height invariant}, corresponding to the height of the image of the focus-focus critical value in any of these polygons, and measuring the volume of some reduced space,
\item a \strong{Taylor series} of the form $S^{\infty} = a_1 X + a_2 Y + \sum_{i+j > 1} b_{ij} X^i Y^j$ for each focus-focus singularity,
\item roughly speaking, an integer associated with each focus-focus singularity and polygon in item 2, called the \strong{twisting index}, reflecting the fact that there exists a privileged toric momentum map in a neighborhood of the singularity. When $m_f = 1$, one can always find a polygon in item 2 whose associated twisting index vanishes.
\end{enumerate}

We will describe these invariants in more details in Section \ref{sect:symp_invariant}; for a complete discussion, we refer the reader to \cite{PelVuSemi} and to the recent notes by Sepe and V{\~u} Ng{\d{o}}c \cite{SepeVN}. As we already said before, we will see that the system of coupled angular momenta is of toric type for certain values of the parameter $t$ and semitoric with exactly one focus-focus value for a range of values of $t$ always including $t=1/2$. Our main result is the computation of some of the symplectic invariants in this case $t=1/2$; this is to our knowledge the first time those are computed for a semitoric system on a compact manifold.

\begin{figure}[H]
\begin{center}
\subfigure[Polygon $\Delta_1$ with $\epsilon=1$.]{\includegraphics[scale=0.22]{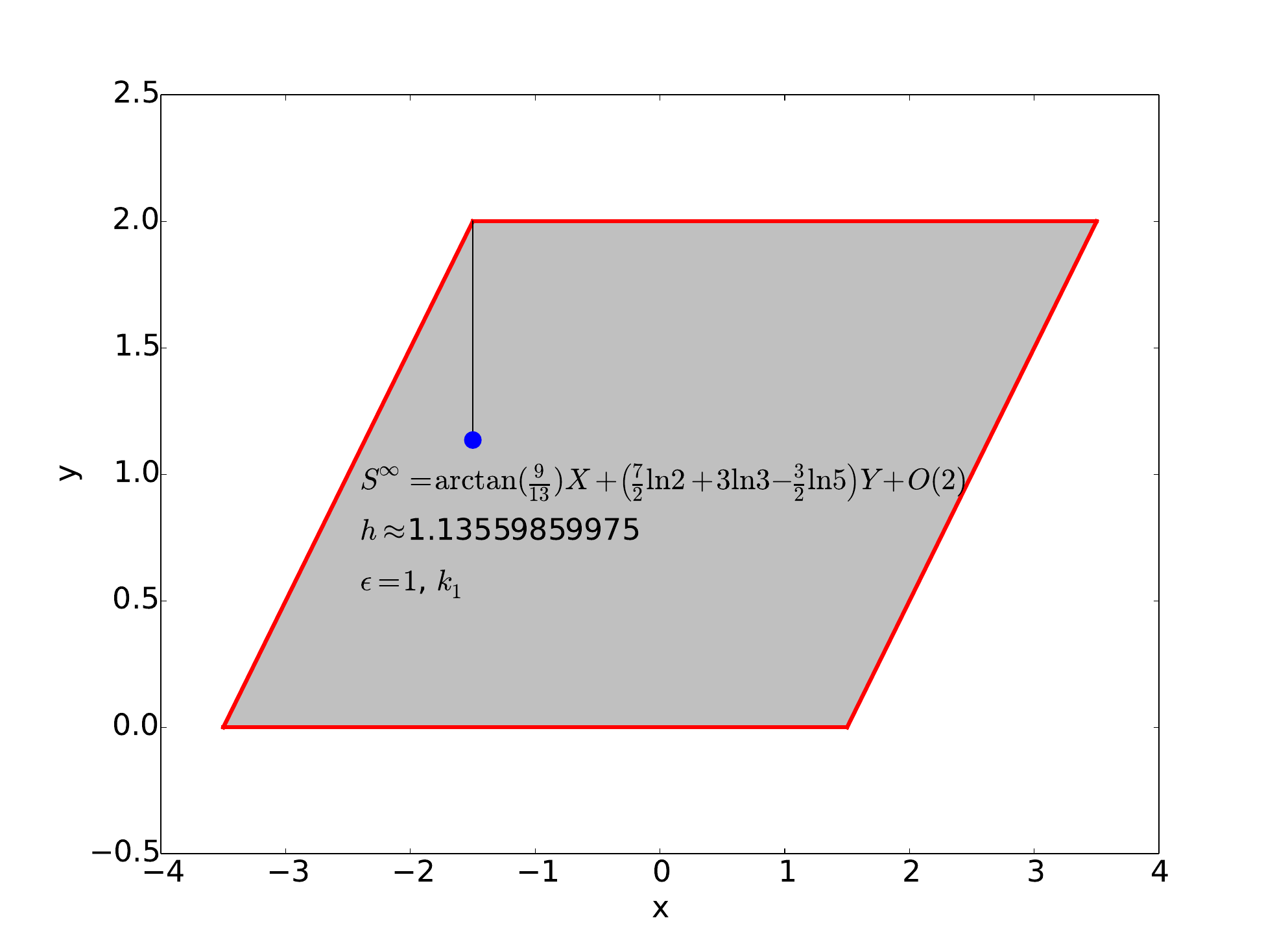} }
\subfigure[Polygon $\Delta_2$ with $\epsilon=-1$.]{\includegraphics[scale=0.22]{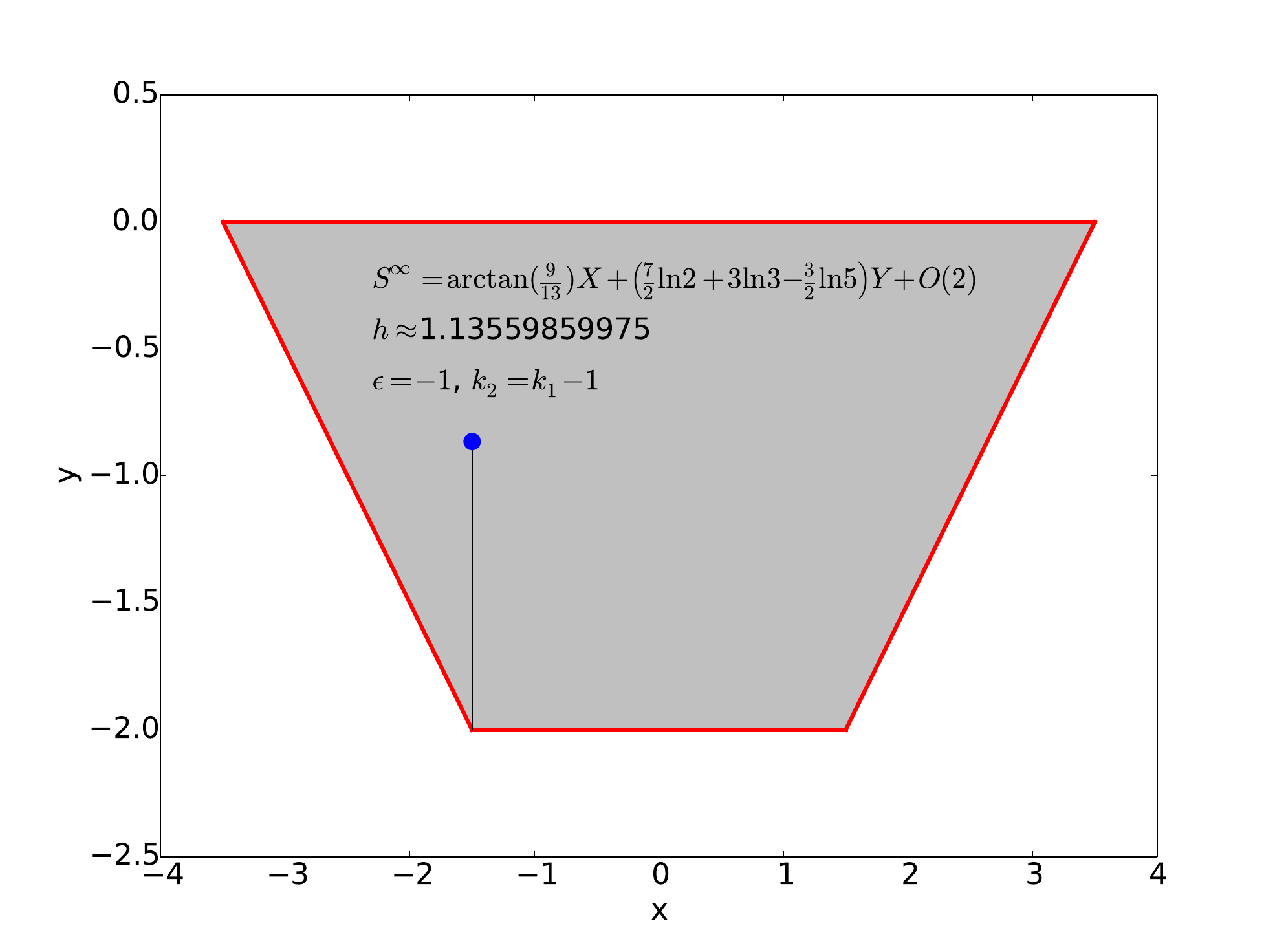} }
\end{center}
\caption{The symplectic invariants of~\cite{PelVuSemi,PelVuConstr} for the coupled angular momentum system $F$ in formula
(\ref{eq:moment_map}) for the values of $R_1=1,R_2=5/2$. The blue dot indicates the image of the focus-focus critical value in the polygon. The number $\epsilon \in\{ -1,1 \}$ corresponds to a choice of vertical half-line at the focus-focus value, as represented by the thin black segment.}
\label{fig:invariants}
\end{figure}

\begin{thm}
For $t = 1/2$, the coupled angular momentum system $(J,H)$ is a simple semitoric integrable system. Its symplectic invariants satisfy the following properties:
\begin{enumerate}
\item the number of focus-focus critical values is equal to one,
\item the polygonal invariant is represented by the two polygons $\Delta_1$ and $\Delta_2$ drawn in Figure \ref{fig:polygons},
\item if we set $\Theta = \frac{R_2}{R_1}$, the height invariant is equal to
\[ h = \frac{R_1}{\pi} \left(2 \arccos\left(\frac{1}{2\sqrt{\Theta}}\right) + \sqrt{4\Theta - 1} - 2(\Theta-1) \arctan\left( \frac{4 \Theta - 1 - \sqrt{4\Theta-1}}{(2 \Theta - 1)(\sqrt{4\Theta - 1} - 1)} \right) \right), \]
\item for $R_1 = 1, R_2 = 5/2$, the two first terms of the Taylor series invariant $S$ are $a_1 = \arctan(\frac{9}{13})$ and $a_2 = \frac{7}{2} \ln 2 + 3 \ln 3 - \frac{3}{2} \ln 5$, which means that
\[ S^{\infty}(X,Y) = \arctan\left(\frac{9}{13}\right) X + \left(\frac{7}{2} \ln 2 + 3 \ln 3 - \frac{3}{2} \ln 5 \right) Y + O(2),\]
 where $O(2)$ stands for terms of total degree strictly greater than one,
\item the twisting indices $k_1,k_2$ of $\Delta_1$, $\Delta_2$ satisfy $k_2 = k_1-1$.

\end{enumerate}
\end{thm}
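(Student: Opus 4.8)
The plan is to establish each of the five items in turn, since they correspond to the five symplectic invariants of a simple semitoric system, and each requires a somewhat different set of tools.

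\paragraph{Strategy for items 1 and 2 (number of focus-focus values and polygons).} First I would analyze the singularities of $F = (J,H)$ for $t = 1/2$ directly from the formulas \eqref{eq:moment_map}. The plan is to compute the critical points of $F$, identify their type (elliptic-elliptic, elliptic-transverse, focus-focus) via the linearization of $X_J, X_H$ and the usual Williamson classification, and check non-degeneracy and absence of hyperbolic blocks; this simultaneously verifies that the system is semitoric and simple. One expects exactly one focus-focus point, giving item 1. For item 2, I would use the Vu Ng\d{o}c construction: start from the set of regular values of $F$ together with the singular affine structure it carries, cut along the vertical half-line from the focus-focus value (with the two choices $\epsilon = \pm 1$), and develop the resulting simply connected region into $\R^2$ using the action variables. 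Concretely this means computing the image $F(M)$, locating its boundary (images of elliptic-transverse strata), and then applying the appropriate $\mathrm{GL}(2,\Z)$-transformations across the cut to obtain the two polygons $\Delta_1, \Delta_2$; this is largely a bookkeeping computation once the edges and their slopes are known.

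\paragraph{Strategy for item 3 (height invariant).} The height $h$ is the volume (in the sense of the induced integral affine measure) of the portion of the polygon lying below the image of the focus-focus value, equivalently it is $\frac{1}{2\pi}$ times the symplectic volume of a reduced space, or equivalently the difference of values of the global action coming from the $S^1$-action $J$ between the focus-focus level and one boundary. The plan is to pick the $S^1$-action generated by $J$, reduce at a regular value $j$ of $J$, compute the symplectic volume of the reduced surface $M_j^{\mathrm{red}}$ (an integral over a circle's worth of reduced phase space), and integrate in $j$ between the relevant limits; this produces an integral that, after the substitution $\Theta = R_2/R_1$ and some trigonometric antidifferentiation, should yield the stated closed form with the $\arccos$, $\sqrt{4\Theta-1}$, and two $\arctan$ terms. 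I expect this computation to be the most technically demanding of the ``elementary'' items: the reduced volume is an explicit but messy function of $j$, and the antiderivative requires care.

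\paragraph{Strategy for items 4 and 5 (Taylor series and twisting index); the main obstacle.} The coefficients $a_1, a_2$ of $S^\infty$ are extracted from the asymptotics of the action integral around the focus-focus fiber: $a_1$ is read off from the ``regularized'' period of the Hamiltonian flow (essentially a residue/limit computation of $\oint \alpha$ on nearby fibers), and $a_2$ from the next-order term, which involves the logarithmic divergence of the action as one approaches the critical value, following the recipe of Vu Ng\d{o}c and Pelayo--Vu Ng\d{o}c. Here I would set $R_1 = 1, R_2 = 5/2$ to get explicit numbers, parametrize a loop on a regular torus near the focus-focus fiber, and carefully compute the relevant period integral and its subleading logarithmic coefficient; the appearance of $\ln 2, \ln 3, \ln 5$ strongly suggests the integrand factors over these primes after the substitution $R_2/R_1 = 5/2$. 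For item 5, the twisting index compares the privileged toric momentum map near the focus-focus point with the ones defining $\Delta_1$ and $\Delta_2$; since $\Delta_1$ and $\Delta_2$ differ by the choice $\epsilon = \pm 1$, the two associated preferred momentum maps differ by a single application of the piecewise-linear ``cutting'' map, which changes the twisting index by exactly one, giving $k_2 = k_1 - 1$. \textbf{The hard part} will be item 4: making the focus-focus action asymptotics fully explicit and rigorous, since this requires both a good parametrization of the singular fiber and its neighbors and a delicate extraction of the constant and logarithmic terms from a divergent integral — exactly the step that was already described as ``heavy'' in the Jaynes--Cummings case, and which is harder here because the system lacks the extra symmetry of the spin-oscillator.
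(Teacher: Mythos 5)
Your overall architecture is the paper's: items 1--2 come from classifying the fixed points and the rank-one critical points (Proposition \ref{prop:focus} and the surrounding analysis), and the remaining items are the four invariant computations. But two of your concrete recipes would not produce the stated invariants as written. For the height invariant, $h$ is \emph{not} the affine area of the part of the polygon below the focus-focus image, and it is \emph{not} obtained by integrating reduced volumes over values $j$ of $J$: it is the vertical affine length along $\ell_0$ from the bottom edge of the polygon to $\mu(m_0)$, which equals $\tfrac{1}{2\pi}$ times the symplectic area of the sublevel set $\{H < H(m_0)\}$ inside the \emph{single} reduced space $J^{-1}(J(m_0))/S^1$. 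Integrating the areas of the surfaces $M_j^{\mathrm{red}}$ in $j$ computes a four-dimensional symplectic volume of a region of $M$, which is a different quantity. The paper's Proposition \ref{prop:height} does exactly the two-dimensional computation: parametrize $J^{-1}(J(m_0))/S^1$ by $(\rho,\alpha)$, compute $\omega^{\mathrm{red}} = \frac{4R_1\rho}{(1+\rho^2)^2} d\rho \wedge d\alpha$, and integrate over $\{H<0\}$; that one integral yields the $\arccos$/$\arctan$ formula. Similarly, your justification of item 5 is off: in the convention used here, flipping the cut $\epsilon$ (applying $t_{\ell_0}^n$) does \emph{not} change the twisting index -- under the $(\mathcal{G}\times\mathcal{T})$-action the index shifts only by the exponent $r$ of the linear part $T^r$ of the affine factor. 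The relation $k_2 = k_1 - 1$ comes from the identity $\Delta_2 = (t_{\ell_0}^1 \circ T^{-1})(\Delta_1)$ of Lemma \ref{lm:twist}: it is the global linear factor $T^{-1}$, needed to land on the trapezoid actually drawn, that shifts the index by $-1$; the cut flip alone would leave the index unchanged and produce a different polygon.

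Two further remarks concern feasibility rather than correctness. For item 2 the paper does not develop the action variables directly: it reads off the change of slope at each vertex from the isotropy weights of $J$ at the elliptic-elliptic points (Theorem 5.3 of V\~u Ng\d{o}c's polygon paper), which reduces Proposition \ref{prop:polygon} to computing Hessians of $J$ at $m_1,m_2,m_3$ -- far lighter than your proposal. For item 4, your plan is the definitional route via action asymptotics on nearby regular fibers; the paper instead uses the singular-fiber formulas of V\~u Ng\d{o}c and Pelayo--V\~u Ng\d{o}c expressing $a_1,a_2$ as regularized integrals of $\kappa_{1,0},\kappa_{2,0}$ along a radial loop in $F^{-1}(c_0)$, together with the key simplification (Lemma \ref{lm:linear_Elia}) that \emph{linear} Eliasson coordinates suffice; this is what makes the computation tractable, via the explicit parametrization of the pinched torus, the loop $\gamma_0$ as an integral curve of $X_{q_2}$, and the symplectic basis giving the matrix $B$. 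Your route is correct in principle but you would still have to produce the Eliasson linearization and control the divergent limits on nearby fibers, i.e., redo the paper's hardest steps in a strictly harder setting; as written, the proposal does not identify these ingredients.
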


This theorem follows from Propositions \ref{prop:focus}, \ref{prop:polygon}, \ref{prop:height}, \ref{prop:a1} and \ref{prop:a2} and Lemma \ref{lm:twist}.

\begin{rmk}
We do not compute $k_1, k_2$ in the theorem.
We do not compute the terms $a_1$ and $a_2$ in the Taylor series invariants for general values of $R_1$ and $R_2$ because this would lead to overly complicated computations, but we give detailed explanations so that the interested reader can compute them for other fixed values of $R_1, R_2$. These results are summed up in Figure \ref{fig:invariants} for the case $R_1=1,R_2=5/2$.
\end{rmk}

\subsubsection{Spectral properties of quantum coupled angular momenta}

Our second result is the construction of a quantum integrable system $(\hat{J}_k,\hat{H}_k)$ quantizing $(J,H)$, that is the data of two commuting self-adjoint operators acting on some Hilbert space, with underlying classical system $(J,H)$. Note that such a quantization may not exist for a general integrable system, see \cite{GvS} for the description of some obstructions. To be more precise, since the coupled angular momentum system is defined on a compact phase space, the relevant quantization involves the so-called geometric quantization \cite{Kos,Sou} and $\hat{J}_k,\hat{H}_k$ are Berezin-Toeplitz operators, see for instance \cite{BouGui,Cha1,Mama,Schli}.

These operators act on finite dimensional Hilbert spaces $\Hil_k$ that we describe explicitly in Section \ref{sect:quantum} and which depend on a semiclassical parameter $k$, a positive integer which tends to infinity. This integer plays the part of the inverse of the Planck constant $\hbar$. The rigourous way to express that $(\hat{J}_k,\hat{H}_k)$ quantizes $(J,H)$ is to say that the principal symbols of $\hat{J}_k,\hat{H}_k$ are equal to $J$ and $H$ respectively.

We sum up some of the results of Section \ref{sect:quantum} in the following theorem.

\begin{thm}
The Hilbert space $\Hil_k$ has dimension $4k^2 R_1 R_2$. There exists a basis $(g_{\ell,m})_{\substack{0 \leq \ell \leq 2kR_1-1 \\ 0 \leq m \leq 2kR_2-1}}$ of $\Hil_k$ such that
\[ \hat{J}_k g_{\ell,m} = \left( R_1 + R_2 - \frac{\ell + m + 1}{k} \right) g_{\ell,m}. \]
Furthermore, using the convention $g_{-1,m} = g_{2kR_1,m} = g_{\ell,-1} = g_{\ell,2kR_2} =  0$, we have that
\[ \begin{split} \hat{H}_k g_{\ell,m} & =  \frac{1}{4k^2R_1R_2} \left( 2t \sqrt{\ell(2kR_1-\ell)(m+1)(2kR_2-1-m)} \ g_{\ell-1,m+1} \right. \\
& \\
& + \left(2(kR_1-\ell)-1\right)(2kR_2 - (2m+1)t) g_{\ell,m} \left. + 2t \sqrt{(\ell+1)(2kR_1-1-\ell)m(2kR_2-m)} \ g_{\ell+1,m-1} \right). \end{split} \]
\end{thm}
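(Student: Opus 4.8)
\emph{Proof strategy.} The plan is to deduce all three statements from the explicit model of the Berezin--Toeplitz quantization of $(\S^2, R_i\,\omega_{\S^2})$ constructed in Section~\ref{sect:quantum}. First I would record that the Hilbert space factors as $\Hil_k = \Hil_k^{(1)} \otimes \Hil_k^{(2)}$, where $\Hil_k^{(i)}$ is the space of holomorphic sections of the relevant prequantum line bundle over $\S^2$ at level $k$; this bundle has degree $2kR_i - 1$, so that $\dim \Hil_k^{(i)} = 2kR_i$ and hence $\dim \Hil_k = 4k^2 R_1 R_2$. Moreover $\Hil_k^{(i)}$ carries the irreducible representation of $SU(2)$ of dimension $2kR_i$, i.e. of spin $s_i = kR_i - \tfrac12$.

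Next I would pin down the quantized coordinate functions. On $(\S^2, R_i\,\omega_{\S^2})$ the three linear coordinates $x_i, y_i, z_i$ span a copy of the three--dimensional irreducible $SU(2)$--representation inside $\classe{\infty}{(\S^2)}$, and Berezin--Toeplitz quantization is $SU(2)$--equivariant; by Schur's lemma the Toeplitz operators $\hat x_i, \hat y_i, \hat z_i$ are therefore a common scalar multiple of the infinitesimal generators $\hat S^{(i)}_x, \hat S^{(i)}_y, \hat S^{(i)}_z$ of the representation, and matching the spectrum of $\hat z_i$ with the known spectrum $\{s_i, s_i - 1, \dots, -s_i\}$ of $\hat S^{(i)}_z$ fixes the constant as $\hat x_i = \tfrac1{kR_i}\hat S^{(i)}_x$, and likewise for $y_i, z_i$. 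Let $(e_\ell)_{0 \le \ell \le 2kR_1 - 1}$ be the weight basis of $\Hil_k^{(1)}$ with $\hat S^{(1)}_z e_\ell = (s_1 - \ell)\,e_\ell$, let $(f_m)$ be the analogous basis of $\Hil_k^{(2)}$, and set $g_{\ell,m} = e_\ell \otimes f_m$. Then $\hat J_k = R_1 \hat z_1 + R_2 \hat z_2 = \tfrac1k(\hat S^{(1)}_z + \hat S^{(2)}_z)$ is diagonal, and $\hat J_k g_{\ell,m} = \tfrac1k\bigl((s_1 - \ell) + (s_2 - m)\bigr) g_{\ell,m} = \bigl(R_1 + R_2 - \tfrac{\ell + m + 1}{k}\bigr) g_{\ell,m}$, which together with the dimension count gives the first two assertions.

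For $\hat H_k$ the key point is that the Szeg\H{o} projector on the product is the tensor product of the Szeg\H{o} projectors on the factors, so the Toeplitz operator of a function $(w_1, w_2) \mapsto a(w_1)\,b(w_2)$ on $\S^2 \times \S^2$ is exactly $\hat a \otimes \hat b$; hence $\hat H_k = (1-t)\,\hat z_1 \otimes I + t\,(\hat x_1 \otimes \hat x_2 + \hat y_1 \otimes \hat y_2 + \hat z_1 \otimes \hat z_2)$. Writing $\hat x_i \otimes \hat x_j + \hat y_i \otimes \hat y_j$ in terms of the ladder operators $\hat S^{(i)}_\pm = \hat S^{(i)}_x \pm i\,\hat S^{(i)}_y$ via $\hat S^{(1)}_x \hat S^{(2)}_x + \hat S^{(1)}_y \hat S^{(2)}_y = \tfrac12\bigl(\hat S^{(1)}_+ \hat S^{(2)}_- + \hat S^{(1)}_- \hat S^{(2)}_+\bigr)$, using the standard matrix elements $\hat S^{(i)}_+ e_\ell = \sqrt{\ell(2kR_i - \ell)}\,e_{\ell - 1}$ and $\hat S^{(i)}_- e_\ell = \sqrt{(\ell + 1)(2kR_i - 1 - \ell)}\,e_{\ell + 1}$ (and the analogues on the $f_m$), and reading off the diagonal part $(1-t)\hat z_1 \otimes I + t\,\hat z_1 \otimes \hat z_2$ from the $\hat z_i$--eigenvalues found above, one obtains the images of $g_{\ell,m}$ under the three pieces of $\hat H_k$; collecting them and simplifying the prefactors yields exactly the stated three--term recurrence. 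Along the way one also checks that $\hat H_k$ is self--adjoint and commutes with $\hat J_k$, both operators preserving each joint weight space $\mathrm{span}\{g_{\ell,m} : \ell + m = \mathrm{const}\}$.

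The final collection and simplification of terms is routine. The genuinely delicate point is keeping all the normalizations mutually consistent: the shift in the degree of the prequantum line bundle (so that $\dim \Hil_k^{(i)} = 2kR_i$ rather than $2kR_i + 1$), the exact proportionality constant $\tfrac1{kR_i}$ between the coordinate Toeplitz operators and the spin generators, and the sign and index conventions in the weight basis and in the ladder--operator formulas. Once these are fixed as in Section~\ref{sect:quantum}, the theorem reduces to bookkeeping with the $\mathfrak{su}(2)$ representations.
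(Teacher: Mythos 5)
Your proposal is correct in substance and, at the structural level, it follows the same route as the paper: the K\"unneth factorization $\Hil_k = H^0\left(\C\P^1,\bigO{2kR_1-1}\right)\otimes H^0\left(\C\P^1,\bigO{2kR_2-1}\right)$ giving $\dim \Hil_k = 4k^2R_1R_2$, the product weight basis $g_{\ell,m}=e_\ell\otimes f_m$, diagonalization of $\hat{J}_k$, and the three-term recurrence for $\hat{H}_k$ obtained from ladder-operator matrix elements on each factor. Where you genuinely differ is in how the quantized coordinates are identified: the paper quotes the explicit formulas for the Toeplitz operators $T_p(x_0), T_p(y_0), T_p(z_0)$ on $H^0(\C\P^1,\bigO{p})$ and their action on the monomial basis (Lemmas \ref{lm:coord_BTO} and \ref{lm:action_BTO}, taken from \cite{BloGol}), and then rescales by $1+\tfrac{1}{2kR_i}$ to define $X_i,Y_i,Z_i$; you instead invoke $SU(2)$-equivariance of the quantization and Schur's lemma to identify the quantized coordinates with multiples of the spin generators in the spin-$\left(kR_i-\tfrac12\right)$ representation. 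This buys conceptual economy (no differential-operator formulas), and since your normalization $\hat{z}_i=\tfrac{1}{kR_i}\hat{S}^{(i)}_z$ coincides exactly with the paper's rescaled $Z_i$, your final bookkeeping reproduces the stated eigenvalue formula and three-term recurrence verbatim.

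The one soft spot is how you fix the proportionality constant. Schur's lemma only gives $T_p(z_0)=c(p)\,\hat{S}_z$ for an unknown scalar $c(p)$, and ``matching the spectrum of $\hat{z}_i$ with that of $\hat{S}_z$'' is circular, since the spectrum of the Toeplitz operator is precisely what has not yet been computed. In fact $c(p)=\tfrac{2}{p+2}$, so the raw Toeplitz operator is \emph{not} $\tfrac{1}{kR_i}\hat{S}_z$; the factor $1+\tfrac{1}{2kR_i}$ in the paper's definition of $X_i,Y_i,Z_i$ is exactly what makes your identification exact, and appealing to ``the conventions of Section \ref{sect:quantum}'' is not available in a blind argument. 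To close this you need one concrete computation, e.g.\ a single matrix element (the action of $T_p(z_0)$ on an extreme monomial) or the explicit formulas of Lemma \ref{lm:coord_BTO}: this both pins down $c(p)$ and justifies that $\tfrac{1}{kR_i}\hat{S}^{(i)}_\bullet$ is a Berezin--Toeplitz operator with principal symbol the corresponding coordinate (rescaling by a sequence of scalars stays in the Toeplitz class only if the sequence has an asymptotic expansion in $k^{-1}$, which the exact value of $c(p)$ provides). This is a small and easily fillable gap; once it is filled, your $\mathfrak{su}(2)$ computation gives the theorem.
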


We compute numerically the joint spectrum of $(\hat{J}_k,\hat{H}_k)$ using these formulas. Furthermore, we state a conjecture regarding the description of this joint spectrum near the focus-focus critical value and explain how to check (assuming this conjecture is true) on the joint spectrum that our computations of the symplectic invariants of the classical system are correct.

\subsection{Structure of the article}

Our study is divided into three parts, and the structure of this article respects them: the general study of the system, the computation of the semitoric symplectic invariants, and the quantization problem. We start, in the next section, by investigating the classical properties of the system (e.g. singularities, image of the momentum map). Ultimately, we obtain that the system is semitoric but not of toric type for some values of the deformation parameter $t$, always including $t=1/2$ no matter which values $R_1$ and $R_2$ take; we study this particular case in the third section. As explained earlier, the symplectic classification of semitoric systems is achieved by five symplectic invariants, and  we compute four of them for this particular example. In the last section, we explain how to quantize the coupled angular momentum system; more precisely, we construct a pair of commuting self-adjoint Berezin-Toeplitz operators whose principal symbols are equal to $J$ and $H$ respectively. We also describe the computation of the joint spectrum of these operators, and display numerical simulations of the latter. We conclude by stating a conjecture about the description of the joint spectrum of commuting self-adjoint Berezin-Toeplitz operators near a focus-focus value of the underlying integrable system, and by providing numerical evidence in favor of this conjecture.

\paragraph{Acknowledgements} YLF was supported by the European Research Council Advanced Grant 338809. Part of this work was done during a visit of YLF to AP at University of California San Diego in April 2016, and he would like to thank the members of this university for their hospitality. AP is supported by NSF CAREER grant DMS-1518420. He also received support from Severo Ochoa Program at ICMAT in Spain. Part of this work was carried out at ICMAT and Universidad Complutense de Madrid. We thank Joseph Palmer for useful comments, and Jaume Alonso and Holger Dullin for pointing out a mistake in the computation of the coefficient $a_1$ in the Taylor series invariant (Proposition \ref{prop:a1}) in an earlier version of this manuscript.

\section{Classical properties of $F$}
\label{sect:classical}

Let us study the classical system $(J,H)$ of coupled angular momenta introduced in Equation (\ref{eq:moment_map}). The first observation that we make is that these two functions indeed Poisson commute. This is a simple exercise, but it is a good way to introduce the convention that we will be using throughout the paper. If $f: M \to \R$ is a smooth function, its \strong{Hamiltonian vector field} $X_f$ is defined by the formula $df + \omega(X_f,\cdot) = 0$. The Poisson bracket of two smooth functions $f,g: M \to \R$ is given by $\{f,g\} = \mathcal{L}_{X_f} g = \omega(X_f,X_g)$, where $\mathcal{L}$ stands for the Lie derivative.

\begin{lm}
We have that $\{J,H\} = 0$.
\end{lm}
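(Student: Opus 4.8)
The plan is to verify $\{J,H\}=0$ by a direct computation using the Poisson structure on $M = \S^2 \times \S^2$. First I would recall the explicit Poisson brackets of the coordinate functions: on a single sphere $\S^2$ with the symplectic form $-R\,\omega_{\S^2}$ (so that the functions $x,y,z$ restrict from $\R^3$), one has, up to the normalization constant coming from $R$, the bracket relations $\{x,y\} = \tfrac{1}{R} z$, $\{y,z\} = \tfrac{1}{R} x$, $\{z,x\} = \tfrac{1}{R} y$ (the precise constant depends on the chosen convention; I would fix it once and for all from the formula $df + \omega(X_f,\cdot)=0$). Since $M$ is a product and the two spheres carry independent symplectic forms, coordinates from the first factor Poisson-commute with coordinates from the second.

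Next I would expand $\{J,H\}$ bilinearly. Write $J = R_1 z_1 + R_2 z_2$ and $H = (1-t) z_1 + t\,(x_1 x_2 + y_1 y_2 + z_1 z_2)$. The term $(1-t) z_1$ contributes $\{R_1 z_1 + R_2 z_2,\,(1-t) z_1\} = (1-t) R_1 \{z_1,z_1\} = 0$, since $z_2$ commutes with everything on the first factor. So the only work is in $\{J,\,x_1 x_2 + y_1 y_2 + z_1 z_2\}$. Using the Leibniz rule and separating the $z_1$ and $z_2$ pieces of $J$, I would compute $\{z_1, x_1 x_2 + y_1 y_2 + z_1 z_2\}$ and $\{z_2, x_1 x_2 + y_1 y_2 + z_1 z_2\}$ separately, using that $\{z_1,\cdot\}$ acts only on first-factor coordinates and $\{z_2,\cdot\}$ only on second-factor coordinates.

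Concretely, $R_1\{z_1,\,x_1 x_2 + y_1 y_2 + z_1 z_2\} = R_1\big(\{z_1,x_1\} x_2 + \{z_1,y_1\} y_2\big) = R_1\big(\tfrac{1}{R_1} y_1 x_2 - \tfrac{1}{R_1} x_1 y_2\big) = y_1 x_2 - x_1 y_2$, while $R_2\{z_2,\,x_1 x_2 + y_1 y_2 + z_1 z_2\} = R_2\big(x_1\{z_2,x_2\} + y_1\{z_2,y_2\}\big) = R_2\big(\tfrac{1}{R_2} x_1 y_2 - \tfrac{1}{R_2} y_1 x_2\big) = x_1 y_2 - y_1 x_2$. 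These two cancel, so $\{J,H\} = t\big((y_1 x_2 - x_1 y_2) + (x_1 y_2 - y_1 x_2)\big) = 0$.

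The only genuine obstacle is bookkeeping: getting the sign and normalization in the elementary brackets $\{x_i,y_i\}$, $\{y_i,z_i\}$, $\{z_i,x_i\}$ right from the stated convention $df + \omega(X_f,\cdot)=0$ with $\omega = -(R_1\omega_{\S^2}\oplus R_2\omega_{\S^2})$, and being careful that cross-factor brackets vanish. Once that is pinned down, the cancellation above is automatic and reflects the rotational invariance of the coupling term $x_1 x_2 + y_1 y_2 + z_1 z_2 = \langle \mathbf{r}_1, \mathbf{r}_2\rangle$ under the diagonal $SO(3)$-action generated by $z_1 + z_2$ (suitably weighted), which is the conceptual reason the computation works. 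I would present the calculation in the chosen convention and remark on this invariance.
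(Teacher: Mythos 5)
Your proposal is correct and follows essentially the same route as the paper: expand $\{J,H\}$ bilinearly, use that the two sphere factors Poisson-commute, apply the Leibniz rule, and let the elementary brackets $\{x_i,y_i\},\{y_i,z_i\},\{z_i,x_i\}$ on each factor produce the cancellation (the paper records these as $\{x_i,y_i\}=-\tfrac{1}{R_i}z_i$ and cyclic permutations, the opposite sign of your tentative normalization, but since both contributions flip together the cancellation is unaffected). Your closing remark on the diagonal $SO(3)$-invariance of $x_1x_2+y_1y_2+z_1z_2$ is a nice conceptual addition not present in the paper's proof.
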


\begin{proof}
Since the coordinates on the first factor Poisson commute with the ones on the second factor, and since $\{z_1,z_1\} = 0$, we obtain that
\[ \{J,H\} = t \left( R_1 \{z_1,x_1 x_2\} + R_1 \{z_1,y_1 y_2\} + R_2 \{z_2,x_1 x_2\} + R_2 \{z_2,y_1 y_2\} \right), \]
and by the Leibniz rule (still using the remark above):
\[ \{J,H\} = t \left( R_1 x_2 \{z_1,x_1\} + R_1 y_2 \{z_1,y_1\} + R_2 x_1 \{z_2,x_2\} + R_2 y_1 \{z_2, y_2\} \right). \]
With the convention above, one readily checks that the relation $\{x_i,y_i\} = -\frac{1}{R_i} z_i$ holds, as well as its cyclic permutations, for $i=1,2$. Therefore, the previous equality yields $\{J,H\} = 0$.
\end{proof}

In order to check that $F = (J,H)$ forms an integrable system, we still need to understand its singularities, which are the points where $X_J, X_H$ are not linearly independent, or equivalently where $dF$ has non zero corank. Since we are in dimension four, there are two cases: critical points of corank one (i.e. where $dF$ has corank one), and critical points of corank two, for which $dF=0$, which are usually called \strong{fixed points}.

\begin{dfn}
A fixed point $m \in M$ is said to be \strong{non-degenerate} if the Hessians $d^2 J(m)$ and $d^2 H(m)$ generate a Cartan subalgebra of the Lie algebra of quadratic forms on $T_m M$, equipped with the linearization of the Poisson bracket, that is a maximal Abelian subalgebra $\mathfrak{c}$ such that for every element $q \in \mathfrak{c}$, $\mathrm{ad}_q$ is a semisimple endomorphism.
\end{dfn}

The problem with this definition is that one may ask how to check this condition in practice; this is the purpose of the following lemma, which can be found in \cite{BolFom} for instance, and uses the identification of the Lie algebra of quadratic forms and the symplectic Lie algebra $\mathfrak{sp}(4,\R)$.

\begin{lm}
\label{lm:crit_nondeg}
A critical point $m \in M$ of corank two of $F$ is non-degenerate if and only if there exists a linear combination $A$ of $\Omega^{-1} d^2 J(m)$ and $\Omega^{-1} d^2 H(m)$ with four distinct eigenvalues (such a $A$ will be called \emph{regular}). Here we slightly abuse notation by fixing a basis $\mathcal{B}$ of $T_m M$ and by identifying the Hessians of $J$ and $H$ with their matrices in $\mathcal{B}$, and $\Omega$ is the matrix of the symplectic form $\omega$ in $\mathcal{B}$.
\end{lm}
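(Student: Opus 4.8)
The plan is to transport the statement into the symplectic Lie algebra $\mathfrak{g} := \mathfrak{sp}(4,\R)$ and then appeal to the structure theory of its Cartan subalgebras. First I would set up the linear symplectic dictionary at $m$. Since $m$ is a critical point of corank two, $dJ(m) = dH(m) = 0$, so the Hessians $d^2 J(m)$ and $d^2 H(m)$ are well-defined quadratic forms on $T_m M$, and the map sending a quadratic form $q$ to the matrix $\Omega^{-1} H_q$ of its linear Hamiltonian vector field (where $H_q$ is the matrix of the Hessian of $q$ in the basis $\mathcal{B}$) is a Lie algebra isomorphism from the space of quadratic forms on $T_m M$, equipped with the linearized Poisson bracket, onto $\mathfrak{g}$. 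Under it, $d^2 J(m)$ and $d^2 H(m)$ become $A_J := \Omega^{-1} d^2 J(m)$ and $A_H := \Omega^{-1} d^2 H(m)$. Because $\{J,H\} = 0$ identically and $m$ is a common critical point, the linearized bracket of the two Hessians equals the quadratic part of $\{J,H\}$ at $m$ and therefore vanishes; hence $[A_J, A_H] = 0$, and the subalgebra generated by the Hessians is just the abelian subspace $\mathfrak{t} := \mathrm{span}(A_J, A_H)$. By definition $m$ is non-degenerate exactly when $\mathfrak{t}$ is a Cartan subalgebra of $\mathfrak{g}$, so the lemma reduces to the purely Lie-theoretic claim that an abelian subalgebra of $\mathfrak{sp}(4,\R)$ spanned by two elements is a Cartan subalgebra if and only if one of its elements has four distinct eigenvalues.

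Then I would record the two facts about $\mathfrak{g} = \mathfrak{sp}(4,\R)$ that do the work. (i) An element $A \in \mathfrak{g}$ is regular semisimple if and only if its four eigenvalues are pairwise distinct: the relation $A^{\top}\Omega + \Omega A = 0$ forces the spectrum of $A$ to have the form $\{\lambda_1, -\lambda_1, \lambda_2, -\lambda_2\}$, so that ``four distinct eigenvalues'' means precisely $\lambda_1 \neq \pm\lambda_2$ and $\lambda_1, \lambda_2 \neq 0$, which is exactly the non-vanishing on $A$ of every root of $\mathfrak{g}$ (i.e.\ regularity) together with diagonalizability over $\C$ (i.e.\ semisimplicity). (ii) The centralizer $\mathfrak{z}_{\mathfrak g}(A)$ of a regular semisimple element is a Cartan subalgebra, of dimension equal to the rank $2$ and consisting of semisimple elements; conversely, every Cartan subalgebra $\mathfrak{h}$ of $\mathfrak{g}$ contains a regular element, because its non-regular locus is the union of the finitely many root hyperplanes, a proper closed subset of $\mathfrak{h} \cong \R^2$. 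Both are standard facts about real semisimple Lie algebras, available in \cite{BolFom} in this form; the only point deserving some care is that one works over $\R$, where Cartan subalgebras are not all conjugate, but the two characterizations above survive unchanged.

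Granting (i) and (ii), the assembly is short. If $m$ is non-degenerate, then $\mathfrak{t}$ is a Cartan subalgebra, so by (ii) it contains a regular element $A = \alpha A_J + \beta A_H$, and $A$ is semisimple because all elements of a Cartan subalgebra are; by (i), $A$ has four distinct eigenvalues, which is the asserted condition. Conversely, suppose some $A = \alpha A_J + \beta A_H$ has four distinct eigenvalues. By (i) it is regular semisimple, so by (ii) the centralizer $\mathfrak{z}_{\mathfrak g}(A)$ is a two-dimensional Cartan subalgebra consisting of semisimple elements; since $\mathfrak{t}$ is abelian and contains $A$, we have $\mathfrak{t} \subseteq \mathfrak{z}_{\mathfrak g}(A)$, and once one knows $\mathfrak{t}$ is two-dimensional — equivalently, that $d^2 J(m)$ and $d^2 H(m)$ are linearly independent, which is immediate in all the cases we need — it follows that $\mathfrak{t} = \mathfrak{z}_{\mathfrak g}(A)$ is a Cartan subalgebra, i.e.\ $m$ is non-degenerate.

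I expect the only genuinely delicate step to be the Lie-theoretic one above: making the dictionary between the concrete condition ``four distinct eigenvalues'' and the abstract notions of regularity and semisimplicity fully precise, and being careful that the structure theory is invoked over $\R$ and not over its complexification. The symplectic bookkeeping of the first step and the two short implications of the last step are routine.
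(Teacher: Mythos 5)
The paper does not actually prove this lemma; it is quoted from Bolsinov--Fomenko \cite{BolFom}, so there is no in-paper argument to compare against. Your Lie-theoretic proof is the standard one and is essentially sound: the identification of quadratic forms with $\mathfrak{sp}(4,\R)$ via $q \mapsto \Omega^{-1}H_q$, the vanishing of the linearized bracket of the Hessians because $\{J,H\}\equiv 0$ and $dF(m)=0$, the equivalence for an element of a Cartan subalgebra between ``no root vanishes'' and ``the four eigenvalues $\pm\lambda_1,\pm\lambda_2$ of the standard representation are distinct'', the existence of regular elements in a real Cartan subalgebra (the root kernels cut out proper subspaces of $\mathfrak{h}$, since a root vanishing on $\mathfrak{h}$ would vanish on $\mathfrak{h}_{\C}$), and the fact that the centralizer of a regular semisimple element is a two-dimensional maximal abelian subalgebra of semisimple elements, are all correct and fit the paper's definition of non-degeneracy.

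The one substantive point is the linear independence of $\Omega^{-1}d^2J(m)$ and $\Omega^{-1}d^2H(m)$, which you flag but defer (``immediate in all the cases we need''). It is genuinely needed for the ``if'' direction and is not implied by the existence of a regular combination: if $d^2H(m)=c\,d^2J(m)$ with $\Omega^{-1}d^2J(m)$ having four distinct eigenvalues (this can happen at a rank-zero point of an honest integrable system, e.g.\ $H=J+(x_1^2+\xi_1^2)^2$ at the origin for $J=\tfrac12(x_1^2+\xi_1^2)+(x_2^2+\xi_2^2)$ on standard $\R^4$), then the subalgebra generated by the Hessians is one-dimensional, hence not maximal abelian, and $m$ is degenerate by the paper's definition even though a regular combination exists. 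So, as literally stated, the criterion tacitly assumes (or must be supplemented by) independence of the two Hessians; your proof proves exactly the corrected statement, and in the paper's applications the hypothesis is visibly satisfied --- at each of $m_0,\dots,m_3$ the displayed matrices $\Omega^{-1}d^2J$ and $\Omega^{-1}d^2H$ are manifestly non-proportional. It would strengthen your write-up to say this explicitly rather than leave it as ``immediate''; with that sentence added, the argument is complete.
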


If such a linear combination $A$ exists, the Cartan subalgebra $\mathfrak{c}$ of $\mathfrak{sp}(4,\R)$ generated by $\Omega^{-1} d^2 J(m)$ and $\Omega^{-1} d^2 H(m)$ is equal to the commutant of $A$. The eigenvalues of $A$ come in pairs $(\lambda,-\lambda)$, and the spaces $E_{\pm \lambda} = \ker (A-\lambda \mathrm{Id}) \oplus \ker (A+\lambda \mathrm{Id})$ associated with distinct pairs are symplectically orthogonal. There are three distinct possibilities:
\begin{enumerate}
\item if $\lambda \in \R$, there exists a symplectic basis of $E_{\pm \lambda}$ in which the restriction of $A$ to $E_{\pm \lambda}$ has matrix $\begin{pmatrix} \lambda & 0 \\ 0 & -\lambda \end{pmatrix}$,
\item if $\lambda = i \alpha$, $\alpha \in \R$, there exists a symplectic basis of $E_{\pm \lambda}$ in which the restriction of $A$ to $E_{\pm \lambda}$ has matrix $\begin{pmatrix} 0 & \alpha \\ -\alpha & 0 \end{pmatrix}$,
\item if $\lambda = \alpha + i \beta$, $\alpha,\beta \neq 0$, then $\pm \bar{\lambda}$ are also eigenvalues of $A$, and there exists a symplectic basis of $T_m M$ in which $A$ has matrix $\begin{pmatrix} -\alpha & \beta & 0 & 0 \\ -\beta & -\alpha & 0 & 0 \\ 0 & 0 & \alpha & \beta \\ 0 & 0 & -\beta & \alpha \end{pmatrix}$.
\end{enumerate}
This leads to the following classification, due to Williamson \cite{Wil}: there exist linear symplectic coordinates $(u_1,u_2,\xi_1,\xi_2)$ of $T_m M$ and a basis $q_1, q_2$ of $\mathfrak{c}$ such that each $q_i$ is one of the following:
\begin{enumerate}
\item $q_i = x_i \xi_i$ (\strong{hyperbolic component}),
\item $q_i = \frac{1}{2} (x_i^2 + \xi_i^2)$ (\strong{elliptic component}),
\item $q_1 = u_1 \xi_2 - u_2 \xi_1$ and $q_2 = u_1 \xi_1 + u_2 \xi_2$ (\strong{focus-focus component}).
\end{enumerate}

This notion of non-degeneracy can be extended to the case of critical points of corank one, see for instance \cite[Definition $1.21$]{BolFom}.

\subsection{Critical points of maximal corank}

We start by looking for the critical points of maximal corank (fixed points).

\begin{lm}
\label{lm:crit_points}
The map $F$ has four critical points $m_i$, $i=0,\ldots,3$, of maximal corank:
\[ \begin{cases} m_0 = (0,0,1,0,0,-1), & c_0 = F(m_0) = (R_1 - R_2,1-2t), \\ m_1 = (0,0,-1,0,0,-1), & c_1 = F(m_1) = (-(R_1 + R_2),2t - 1), \\ m_2 = (0,0,-1,0,0,1), & c_2 = F(m_2) = (R_2 - R_1,- 1), \\ m_3 = (0,0,1,0,0,1), & c_3 = F(m_3) = (R_1 + R_2,1). \end{cases}\]
\end{lm}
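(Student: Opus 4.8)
The plan is to exploit the fact that, since $\omega$ is nondegenerate, a point $m$ has corank two exactly when $dJ(m)$ and $dH(m)$ both vanish as linear forms on $T_m M$ (equivalently $X_J(m) = X_H(m) = 0$, because $\omega(X_f,\cdot) = -df$). So the maximal-corank critical set is simply $\{dJ = 0\} \cap \{dH = 0\}$, and I would compute the two pieces in turn.

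First I would locate the zeros of $dJ$. On the product $M = \S^2 \times \S^2$ one has $T_m M = T_{m_1}\S^2 \oplus T_{m_2}\S^2$, and $dJ = R_1\, dz_1 \oplus R_2\, dz_2$ splits along this decomposition; hence $dJ(m) = 0$ forces $dz_1$ to vanish on $T_{m_1}\S^2$ and $dz_2$ to vanish on $T_{m_2}\S^2$. Since the height function $z$ on $\S^2$ has only its two poles as critical points, this leaves exactly the four points with $z_1 = \pm 1$ and $z_2 = \pm 1$, i.e. with $x_1 = y_1 = x_2 = y_2 = 0$; these are the four candidates $m_0, \ldots, m_3$.

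Next I would check that $dH$ vanishes at each of these. Differentiating $H = (1-t) z_1 + t(x_1 x_2 + y_1 y_2 + z_1 z_2)$ and then using $x_1 = y_1 = x_2 = y_2 = 0$, every term proportional to a $dx_i$ or $dy_i$ drops out, leaving $dH = \bigl((1-t) + t z_2\bigr) dz_1 + t z_1\, dz_2$ at such a point; restricted to $T_m M$ both $dz_1$ and $dz_2$ vanish (the point is a pole on each factor), so $dH(m) = 0$. Together with the previous step this shows that the maximal-corank critical set is exactly $\{m_0, m_1, m_2, m_3\}$. Finally I would read off the critical values: at a pole configuration $x_1 x_2 + y_1 y_2 = 0$, so $J = R_1 z_1 + R_2 z_2$ and $H = (1-t) z_1 + t z_1 z_2$, and substituting the four sign choices for $(z_1, z_2)$ gives $c_0, \ldots, c_3$ as stated.

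I do not expect any serious obstacle here; the computation is short. The one point that needs care is that ``corank two'' must be interpreted as $dF$ vanishing on the tangent space to $\S^2 \times \S^2$, not on all of $\R^6$ --- which is precisely why I use the product splitting of $dJ$ to pin the critical points down to the four poles rather than solving an unconstrained system of equations.
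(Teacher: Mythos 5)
Your proposal is correct and follows essentially the same route as the paper: use the vanishing of $dJ$ restricted to $T_m(\S^2\times\S^2)$ to force $z_1,z_2=\pm 1$, then observe that at these pole configurations $dH$ also restricts to zero, and evaluate $F$ there. The paper's proof is just a more condensed version of exactly this argument.
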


\begin{proof}
At a critical point $m = (x_1,y_1,z_1,x_2,y_2,z_2)$ of maximal corank for $F$, the restriction of $dJ = R_1 dz_1 + R_2 dz_2$ to the tangent space $T_m(\S^2 \times \S^2)$ must vanish. Therefore, we must have $z_1, z_2 = \pm 1$. But then the restriction of $dH = (1-t)dz_1 + t (z_1 dz_2 + z_2 dz_1)$ to $T_m(\S^2 \times \S^2)$ also vanishes.
\end{proof}

We want to understand if these critical points are non-degenerate, and if it is the case, what their Williamson types are. Let us define $t^-, t^+ \in (0,1)$ as
\begin{equation} t^{\pm} = \frac{R_2}{2 R_2 + R_1 \mp 2\sqrt{R_1 R_2}}; \label{eq:tpm} \end{equation}
observe that we always have $t^- < \frac{1}{2} < t^+$.

\begin{prop}
\label{prop:focus}
The critical point $m_0$ is non-degenerate of focus-focus type when $t^- < t < t^+$, degenerate when $t \in \{ t^-, t^+ \}$, and non-degenerate of elliptic-elliptic type otherwise.
\end{prop}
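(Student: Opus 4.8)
The plan is to apply Lemma~\ref{lm:crit_nondeg} at the fixed point $m_0 = (0,0,1,0,0,-1)$: choose an explicit symplectic chart near $m_0$, compute the Hessians $d^2 J(m_0)$ and $d^2 H(m_0)$, form $A = \Omega^{-1}(a\, d^2 J(m_0) + b\, d^2 H(m_0))$, and analyze the eigenvalues of this $4\times 4$ matrix as functions of $t$. Because $m_0$ is the point $z_1 = 1$, $z_2 = -1$, good local symplectic coordinates are the stereographic-type coordinates $(u_1,\xi_1)$ on the first sphere near its north pole and $(u_2,\xi_2)$ on the second sphere near its south pole, normalized so that $\omega = du_1 \wedge d\xi_1 + du_2 \wedge d\xi_2$; concretely one can take $x_1 \approx c_1 u_1$, $y_1 \approx c_1 \xi_1$, $z_1 \approx 1 - \tfrac{1}{2R_1}(u_1^2+\xi_1^2)$ and similarly $x_2 \approx c_2 u_2$, $y_2 \approx c_2 \xi_2$, $z_2 \approx -1 + \tfrac{1}{2R_2}(u_2^2+\xi_2^2)$ with the constants $c_i$ fixed by the requirement that $\omega_{\S^2}$ have area $4\pi$. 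Keeping only quadratic terms, $J$ contributes $-\tfrac12(u_1^2+\xi_1^2) + \tfrac12(u_2^2+\xi_2^2)$ (up to the additive constant $R_1 - R_2$), an elliptic-elliptic Hessian, while the coupling term $x_1x_2 + y_1y_2 + z_1z_2$ in $H$ contributes a genuine mixing term $c_1 c_2 (u_1 u_2 + \xi_1 \xi_2)$ plus diagonal quadratic pieces from the $z_1 z_2$ expansion.

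Next I would assemble the $4\times 4$ matrix of $a\, d^2J(m_0) + b\, d^2 H(m_0)$ in the basis $(u_1,u_2,\xi_1,\xi_2)$, multiply by $\Omega^{-1}$ (which in these coordinates is the standard symplectic matrix $\begin{pmatrix} 0 & I_2 \\ -I_2 & 0 \end{pmatrix}$), and compute the characteristic polynomial. By the symplectic symmetry it has the form $\mu^4 + p\,\mu^2 + q$ with $p,q$ polynomial in $a,b,t$ (and $R_1,R_2$); the discriminant condition $p^2 - 4q$ distinguishes the three Williamson types. Setting for instance $a=0$, $b=1$ — i.e. looking directly at $\Omega^{-1} d^2 H(m_0)$ — should already suffice to read off the type: the eigenvalues are real (hyperbolic-like, impossible here) when $p^2 - 4q > 0$ and $q<0$, purely imaginary in two distinct pairs (elliptic-elliptic) when $p^2-4q>0$ and $q>0$, $p>0$, and of the form $\pm\alpha\pm i\beta$ (focus-focus) exactly when $p^2 - 4q < 0$. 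I expect the quantity $p^2 - 4q$, after simplification, to factor as a constant times $(t - t^-)(t - t^+)$ with $t^\pm$ as in~(\ref{eq:tpm}); the appearance of $\sqrt{R_1 R_2}$ in~(\ref{eq:tpm}) is the tell-tale sign that $t^\pm$ are the roots of a quadratic whose discriminant involves $R_1 R_2$. At $t \in \{t^-,t^+\}$ one gets a repeated eigenvalue, so no regular linear combination exists and the point is degenerate; for $t$ outside $[t^-,t^+]$ one gets elliptic-elliptic, and for $t^- < t < t^+$ focus-focus.

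Finally, for the two non-degenerate cases I should exhibit (or at least assert the existence of) a \emph{regular} linear combination $A$ with four distinct eigenvalues, to invoke Lemma~\ref{lm:crit_nondeg} cleanly: generically in $(a,b)$ the eigenvalues are distinct, so it is enough to note that the set of $(a,b)$ for which $A$ fails to be regular is a proper algebraic subset, hence one can pick, e.g., $(a,b)=(0,1)$ unless that particular choice is non-regular, in which case a small perturbation works. The main obstacle I anticipate is purely computational bookkeeping: getting the local symplectic normalization constants $c_i$ and the signs right so that $\Omega$ really is the standard symplectic matrix, and then carrying the two-parameter family $a\, d^2 J + b\, d^2 H$ through to a clean factorization of the discriminant — it is easy to drop a factor of $R_1$ or $R_2$ or a sign. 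A secondary subtlety is the degenerate case: one must argue that degeneracy at $t^\pm$ is genuine, i.e. that \emph{no} linear combination is regular, not merely that the natural one $\Omega^{-1}d^2H$ fails; this follows because a double eigenvalue of the whole pencil cannot be removed by taking combinations once $d^2J$ and $d^2H$ share an invariant structure forcing the coincidence, but it should be checked that the coincident eigenvalues do not split for generic $(a,b)$.
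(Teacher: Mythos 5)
Your strategy is essentially the paper's: compute the Hessians of $J$ and $H$ at $m_0$ in a chart, multiply by $\Omega^{-1}$, and read the Williamson type off the biquadratic characteristic polynomial of a linear combination, with $t^{\pm}$ appearing as roots of the relevant discriminant (working in genuine Darboux coordinates instead of the chart $(x_1,y_1,x_2,y_2)$ with non-standard $\Omega$ is immaterial, since Lemma \ref{lm:crit_nondeg} only needs the matrix of $\omega$ in the chosen basis). The serious gap is the degeneracy claim at $t \in \{t^-,t^+\}$, which you flag but do not prove. Your proposed justification --- that a double eigenvalue of $\Omega^{-1}d^2H(m_0)$ cannot be removed by taking combinations because $d^2J$ and $d^2H$ ``share an invariant structure'' --- is not an argument, and the principle it appeals to fails in this very system: at $t = R_2/(2R_2-R_1)$, which lies in $(t^-,t^+)$, the matrix $\Omega^{-1}d^2H(m_0)$ has repeated eigenvalues, yet $m_0$ is non-degenerate of focus-focus type because the combination $\Omega^{-1}d^2J(m_0) + (2R_2-R_1)\,\Omega^{-1}d^2H(m_0)$ is regular. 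To establish degeneracy one must show that \emph{no} combination is regular; the paper does this by computing, for $t = t^{\pm}$, the discriminant of the characteristic polynomial of the whole pencil $B(\lambda) = c\lambda\,\Omega^{-1}d^2J(m_0) + \Omega^{-1}d^2H(m_0)$ and showing it vanishes identically in $\lambda$ (the remaining combination, pure $J$, is non-regular since $\Omega^{-1}d^2J(m_0)$ has eigenvalues $\pm i$ each doubled). This pencil computation is the heart of the degeneracy statement and is absent from your proposal.

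A secondary problem is your expectation that the single choice $(a,b)=(0,1)$ suffices and that its discriminant factors as a constant times $(t-t^-)(t-t^+)$. In fact the discriminant carries an extra square factor $\left((2R_2-R_1)t-R_2\right)^2$, and the constant term $t^2(1-t)^2/(R_1^2R_2^2)$ vanishes at $t=0,1$; so $\Omega^{-1}d^2H(m_0)$ alone fails to be regular at $t = R_2/(2R_2-R_1)$, $t=0$ and $t=1$, even though $m_0$ is non-degenerate there. Your fallback --- ``a small perturbation works because the non-regular set is a proper algebraic subset'' --- is circular as a proof of non-degeneracy: properness of that set is equivalent to the existence of a regular combination, which is precisely what must be exhibited. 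The fix is cheap (an explicit regular combination at each of these three values, as in the paper), and for the elliptic-elliptic range you must still verify the sign conditions on the quadratic in $\mu^2$ (both roots negative), which in the paper reduces to the identity $R_1^4R_2^4(\Delta - b^2) = -4R_1^2R_2^2t^2(1-t)^2$; these verifications should be part of the plan rather than assumed.
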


These results are summed up in Figure \ref{fig:type}. The proof will rely on the criterion stated in Lemma \ref{lm:crit_nondeg}, and most of the time the relevant linear combination will simply be $\Omega^{-1} d^2 H(m_0)$.

\begin{figure}[H]
\begin{center}
\includegraphics[scale=0.7]{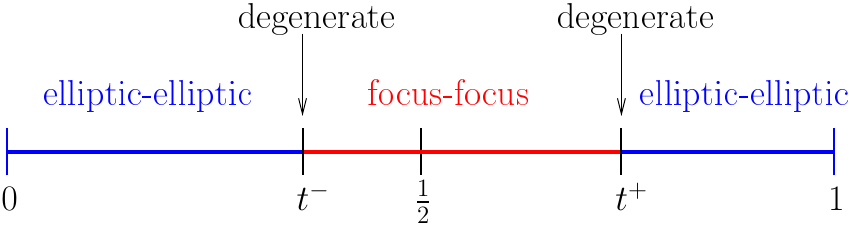}
\end{center}
\caption{The singularity type of $m_0$ with respect to the parameter $t$.}
\label{fig:type}
\end{figure}

\begin{proof}
We start by computing the Hessian matrices of $J$ and $H$ at $m_0$. Near this point, $(x_1,y_1,x_2,y_2)$ form a local coordinate system and $z_1 = \sqrt{1-(x_1^2 + y_1^2)}, \ z_2 = - \sqrt{1 - (x_2^2 + y_2^2)}$. The point $m_0$ corresponds to $x_1=y_1=x_2=y_2=0$. Therefore, by using the Taylor series expansion of $\sqrt{1+u}$ when $u \to 0$, we get
\[ J = R_1 - R_2 - R_1 \left( \frac{x_1^2 + y_1^2}{2} \right) + R_2 \left( \frac{x_2^2 + y_2^2}{2} \right) + O(3) \]
near $m_0$, where $O(3)$ stands for $O(\|(x_1,y_1,x_2,y_2)\|^3)$ to simplify notation. Similarly,
\[ H = 1 - 2t + (2t-1) \left( \frac{x_1^2 + y_1^2}{2} \right) + t \left( \frac{x_2^2 + y_2^2}{2} \right) + t(x_1x_2 + y_1 y_2) + O(3) \]
near $m_0$. Consequently, the Hessians of $J$ and $H$ at $m_0$ in the basis $\mathcal{B}$ of $T_{m_0} M$ associated with $(x_1,y_1,x_2,y_2)$ read
\[ d^2 J(m_0) = \begin{pmatrix} -R_1 & 0 & 0 & 0 \\ 0 & -R_1 & 0 & 0 \\ 0 & 0 & R_2 & 0 \\ 0 & 0 & 0 & R_2  \end{pmatrix}, \quad d^2 H(m_0) = \begin{pmatrix} 2t-1 & 0 & t & 0 \\ 0 & 2t - 1  & 0 & t \\  t & 0 & t & 0 \\ 0 & t & 0 & t  \end{pmatrix}. \]
Moreover, $\omega_{m_0} = -R_1 dx_1 \wedge dy_1 + R_2 dx_2 \wedge dy_2$ on $T_{m_0} M = T_{(0,0,1)} \S^2 \times T_{(0,0,-1)} \S^2$, so the matrix of $\omega_{m_0}$ in $\mathcal{B}$ is given by
\[ \Omega = \begin{pmatrix} 0 & -R_1 & 0 & 0 \\ R_1 & 0 & 0 & 0 \\ 0 & 0 & 0 & R_2 \\ 0 & 0 & -R_2 & 0  \end{pmatrix}. \]
Consequently, we obtain that
\begin{equation} \Omega^{-1} d^2 J(m_0) = \begin{pmatrix} 0 & -1 & 0 & 0 \\ 1 & 0 & 0 & 0 \\ 0 & 0 & 0 & -1 \\ 0 & 0 & 1 & 0  \end{pmatrix}, \quad  \Omega^{-1} d^2 H(m_0) = \begin{pmatrix} 0 & \frac{2t-1}{R_1} & 0 & \frac{t}{R_1} \\ \frac{1-2t}{R_1} & 0 & -\frac{t}{R_1} & 0 \\ 0 & -\frac{t}{R_2} & 0 & -\frac{t}{R_2} \\ \frac{t}{R_2} & 0 & \frac{t}{R_2} & 0  \end{pmatrix}. \label{eq:hessians}\end{equation}

The characteristic polynomial of $A = \Omega^{-1} d^2 H(m_0)$ is
\[ P = X^4 + \frac{ (R_1^2 + 4 R_2^2 - 2R_1 R_2) t^2 - 4R_2^2 t + R_2^2}{R_1^2 R_2^2} X^2 + \frac{t^2 (t-1)^2}{R_1^2 R_2^2} = Q(X^2), \]
where $Q$ is a degree two polynomial. A straightforward computation shows that the discriminant $\Delta$ of $Q$ satisfies
\[ R_1^4 R_2^4 \Delta = \left( (R_1^2 + 4 R_2^2) t^2 - 2R_2 (R_1 + 2 R_2) t + R_2^2 \right) \left( (2 R_2 - R_1) t - R_2 \right)^2.  \]
The polynomial $\left( (2 R_2 - R_1) t - R_2 \right)^2$ is positive, except at $t = R_2/(2R_2 - R_1)$ where it vanishes. So the sign of $\Delta$ is given by the sign of $ T = (R_1^2 + 4 R_2^2) t^2 - 2R_2 (R_1 + 2 R_2) t + R_2^2$. The discriminant $\widetilde{\Delta}$ of $T$ satisfies $ \widetilde{\Delta} = 4R_2^2 (R_1 + 2R_2)^2 - 4R_2^2 (R_1^2 + 4 R_2^2) = 16 R_1 R_2^3 > 0$, therefore $T$ has two real roots
\[ t^{\pm} = \frac{R_2(R_1 + 2R_2) - 2 R_2 \sqrt{R_1R_2}}{R_1^2 + 4 R_2^2}. \]
We claim that these are the same as in Equation (\ref{eq:tpm}); in order to see this, we multiply both the numerator and denominator by the quantity $R_2(R_1 + 2R_2) + 2 R_2 \sqrt{R_1R_2}$, and we use the fact that
\[ \left(R_2(R_1 + 2R_2) - 2 R_2 \sqrt{R_1R_2}\right)\left(R_2(R_1 + 2R_2) + 2 R_2 \sqrt{R_1R_2}\right) = R_2^2 (R_1^2 + 4 R_2^2). \]

\paragraph{When $t^- < t < t^+$ and $t \neq R_2/(2R_2 - R_1)$,} we get that $\Delta < 0$, which means that for such $t$, $Q$ has two complex conjugate roots, and $A$ has four eigenvalues of the form $\pm \alpha \pm i \beta$, $\alpha, \beta \in \R$. Hence, for such $t$, $m_0$ is non-degenerate of focus-focus type.

\paragraph{When $0 < t < t^-$ or $t^+ < t < 1$,} $\Delta > 0$, so $Q$ has two real roots $\lambda^{\pm} = (-b \pm \sqrt{\Delta})/2$ where
\[ b = \frac{ (R_1^2 + 4 R_2^2 - 2R_1 R_2) t^2 - 4R_2^2 t + R_2^2}{R_1^2 R_2^2} = \frac{T(t) + 2 R_1 R_2 t(1-t)}{R_1^2 R_2^2} > 0. \]
Let $C = R_1^4 R_2^4 (\Delta - b^2) = T \left( (2 R_2 - R_1) t - R_2 \right)^2 - \left( T + 2 R_1 R_2 t(1-t) \right)^2$; then
\[ C = T \left( (2 R_2 - R_1)^2 t^2 - 2R_2(2R_2 - R_1)t + R_2^2 - 4 R_1 R_2 t(1-t) - T \right) - 4 R_1^2 R_2^2 t^2 (1-t)^2. \]
By expanding, one can check that $(2 R_2 - R_1)^2 t^2 - 2R_2(2R_2 - R_1)t + R_2^2 - 4 R_1 R_2 t(1-t) = T$, thus we finally obtain that $C = - 4 R_1^2 R_2^2 t^2 (1-t)^2 \leq 0$. Since $0 < t < t^-$ or $t^+ < t < 1$, $C < 0$, hence $\lambda^{\pm} < 0$, and $A$ has four distinct eigenvalues of the form $\pm i \alpha$ and $\pm i \beta$, $\alpha \neq \beta \in \R$. Therefore $m_0$ is non-degenerate of elliptic-elliptic type.

When $t$ is equal to $R_2/(2R_2 - R_1), 0, 1, t^-$ or $t^+$, $A$ has multiple eigenvalues (i.e. is not regular), so its study is not sufficient to conclude anything. We need to investigate these cases separately.

\paragraph{When $t =  R_2/(2R_2 - R_1)$.} Observe that $t^- < 1/2 < R_2/(2R_2 - R_1) < t^+$, so we expect $m_0$ to be of focus-focus type for this value of $t$. One way to check this is to consider the linear combination
\[ B =  \Omega^{-1} d^2 J(m_0) + (2R_2 - R_1) \Omega^{-1} d^2 H(m_0) =  \begin{pmatrix} 0 & 0 & 0 & \frac{R_2}{R_1} \\ 0 & 0 & -\frac{R_2}{R_1} & 0 \\ 0 & -1 & 0 &  -2 \\  1 & 0 &  2 & 0  \end{pmatrix}. \]
Its characteristic polynomial is equal to $P(X^2)$ where $ P = X^2 + \frac{2(2R_1 - R_2)}{R_1} X + \frac{R_2^2}{R_1^2}$, and the discriminant of $P$ is equal to $16(R_1 - R_2)/R_1 < 0$. Therefore, $B$ has four distinct eigenvalues of the form $\pm \alpha \pm i \beta$, $\alpha, \beta \in \R$, hence $m_0$ is non-degenerate of focus-focus type.

\paragraph{When $t =  0$.} The linear combination
\[ \Omega^{-1} d^2 J(m_0) + \Omega^{-1} d^2 H(m_0) =  \begin{pmatrix} 0 & -\left( 1 + \frac{1}{R_1} \right)  & 0 & 0 \\  1 + \frac{1}{R_1}   & 0 & 0 & 0 \\ 0 & 0 & 0 &  -1 \\  0 & 0 & 1 & 0  \end{pmatrix} \]
has four distinct eigenvalues $\pm i \left( 1 + \frac{1}{R_1} \right), \pm i$, so $m_0$ is non-degenerate of elliptic-elliptic type.

\paragraph{When $t =  1$.} One can check that the linear combination
\[ \Omega^{-1} d^2 J(m_0) + R_1 \Omega^{-1} d^2 H(m_0) =  \begin{pmatrix} 0 & 0  & 0 & 1 \\  0 & 0 & -1 & 0 \\ 0 & -\frac{R_1}{R_2} & 0 &  -\left( 1 + \frac{R_1}{R_2} \right) \\  \frac{R_1}{R_2} & 0 & 1 + \frac{R_1}{R_2}  & 0  \end{pmatrix} \]
has eigenvalues $\pm i, \pm i\frac{R_1}{R_2}$, so $m_0$ is non-degenerate of elliptic-elliptic type.

\paragraph{When $t =  t^-$.} For $\lambda \in \R$, let $B(\lambda) = \left( 2R_2 + R_1 + 2\sqrt{R_1 R_2} \right) \lambda \Omega^{-1} d^2 J(m_0) + \Omega^{-1} d^2 H(m_0)$; then
\[ B(\lambda) =   \begin{pmatrix} 0 & - \left(a \lambda + b \right)    & 0 & \frac{R_2}{R_1} \\  a \lambda + b  & 0 & -\frac{R_2}{R_1} & 0 \\ 0 & -1 & 0 &  -\left(  a \lambda + 1 \right) \\  1 & 0 &  a \lambda +1 & 0  \end{pmatrix} \]
with $a = 2R_2 + R_1 + 2\sqrt{R_1 R_2} $ and $b =  1 + 2 \sqrt{\frac{R_2}{R_1}}$. One can check that the characteristic polynomial of $B(\lambda)$ is $P(X^2)$ with
\[ P = X^2 + \left( \frac{2 a \lambda R_1 (a \lambda + b + 1) + b^2 R_1 + R_1 - 2 R_2 }{R_1} \right) X + \frac{\left(a\lambda R_1(a \lambda R_1 + b + 1) + b R_1 + R_2 \right)^2}{R_1^2}. \]
The discriminant $\Delta$ of $P$ satisfies $R_1 \Delta = (2 a \lambda + b + 1)^2 ( R_1 + b^2 R_1 - 4R_2 - 2bR_1 )$, and
\[ R_1 + b^2 R_1 - 4R_2 - 2bR_1 = R_1 + R_1 \left( 1 + 4 \sqrt{\frac{R_2}{R_1}} + 4 \frac{R_2}{R_1} \right) - 4 R_2 - 2 R_1 - 4 \sqrt{R_1 R_2} = 0, \]
so $\Delta = 0$. Consequently, $B(\lambda)$ has double eigenvalues for every $\lambda$, hence $m_0$ is degenerate.

\paragraph{When $t =  t^+$.} 
For $\lambda \in \R$, let $B(\lambda) = \left( 2R_2 + R_1 - 2\sqrt{R_1 R_2} \right) \lambda \Omega^{-1} d^2 J(m_0) + \Omega^{-1} d^2 H(m_0)$; then
\[ B(\lambda) =   \begin{pmatrix} 0 & - \left(a \lambda + b \right)    & 0 & \frac{R_2}{R_1} \\  a \lambda + b  & 0 & -\frac{R_2}{R_1} & 0 \\ 0 & -1 & 0 &  -\left(  a \lambda + 1 \right) \\  1 & 0 &  a \lambda + 1 & 0  \end{pmatrix} \]
with $a = 2R_2 + R_1 - 2\sqrt{R_1 R_2} $ and $b =  1 - 2 \sqrt{\frac{R_2}{R_1}}$. So as above, the characteristic polynomial is of the form $P(X^2)$ with $P$ a polynomial of degree two with discriminant
\[ \Delta = \frac{(2 a \lambda + b + 1)^2 ( R_1 + b^2 R_1 - 4R_2 - 2bR_1 )}{R_1}. \]
Again, one can check that $R_1 + b^2 R_1 - 4R_2 - 2bR_1 = 0$, so $\Delta = 0$ and $m_0$ is degenerate.
\end{proof}

It turns out that the Williamson types of the three other fixed points of $F$ do not depend on the value of the parameter $t \in [0,1]$.

\begin{prop}
For every $t \in [0,1]$, the critical points $m_1, m_2$ and $m_3$ are non-degenerate of elliptic-elliptic type.
\end{prop}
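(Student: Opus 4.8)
The plan is to mimic the proof of Proposition~\ref{prop:focus}. For each $i\in\{1,2,3\}$ I would introduce the local coordinates $(x_1,y_1,x_2,y_2)$ around $m_i$, in which $z_j=\pm\sqrt{1-(x_j^2+y_j^2)}$ with the sign being $+$ at a north pole of $\S^2$ and $-$ at a south pole, Taylor expand $J$ and $H$ to order two exactly as was done at $m_0$, and read off the Hessians and the matrix $\Omega$ of $\omega_{m_i}$. This yields, for instance, $d^2J(m_1)=\operatorname{diag}(R_1,R_1,R_2,R_2)$ and $d^2J(m_3)=-\operatorname{diag}(R_1,R_1,R_2,R_2)$ (both definite), $d^2J(m_2)=\operatorname{diag}(R_1,R_1,-R_2,-R_2)$, together with $2\times2$\--block expressions for the $d^2H(m_i)$ and the $\Omega$'s of the same shape as at $m_0$. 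By Lemma~\ref{lm:crit_nondeg} and the Williamson classification recalled above, it suffices to exhibit, for every $t\in[0,1]$, a linear combination $A=\Omega^{-1}\big(c_J\,d^2J(m_i)+c_H\,d^2H(m_i)\big)$ possessing four distinct eigenvalues, all purely imaginary.

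The computational shortcut is that, after reordering the basis as $(\partial_{x_1},\partial_{x_2},\partial_{y_1},\partial_{y_2})$, every such $A$ is block-antidiagonal, $A=\left(\begin{smallmatrix}0&-P\\ P&0\end{smallmatrix}\right)$, for a $2\times2$ matrix $P=P(c_J,c_H,t)$ depending linearly on $(c_J,c_H)$: indeed $\Omega$ becomes block-antidiagonal, while $d^2J(m_i)$ and $d^2H(m_i)$ become block-diagonal with two equal diagonal blocks. Hence $A^2=\left(\begin{smallmatrix}-P^2&0\\0&-P^2\end{smallmatrix}\right)$, the characteristic polynomial of $A$ equals $\det(X^2\mathrm{Id}+P^2)=X^4+\operatorname{tr}(P^2)X^2+(\det P)^2$, and the eigenvalues of $A$ are $\pm i\nu_1,\pm i\nu_2$ where $\nu_1,\nu_2$ are the eigenvalues of $P$. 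Consequently $A$ is regular with purely imaginary spectrum — so that $m_i$ is non-degenerate of elliptic-elliptic type — exactly when $P$ has two distinct, real, nonzero eigenvalues that are not opposite, i.e.\ when $\det P\neq0$, $(\operatorname{tr}P)^2-4\det P>0$ and $\operatorname{tr}P\neq0$. (Had $P$ instead a non-real pair of eigenvalues one would get a focus-focus point, which is precisely how focus-focus behaviour was detected at $m_0$.)

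It then remains to choose, at each $m_i$ and each $t$, a combination $P$ satisfying these three conditions. For $m_3$ I expect the single choice $c_J=c_H=1$ to work for all $t\in[0,1]$: a direct computation should give $\operatorname{tr}P>0$, $\det P>0$ and $(\operatorname{tr}P)^2-4\det P>0$ throughout. For $m_1$ and $m_2$ the plain choice $A=\Omega^{-1}d^2H(m_i)$ already works for all but finitely many $t$ — one computes $\det P=-t(1-t)/(R_1R_2)$ (which vanishes only at $t\in\{0,1\}$), a discriminant $(\operatorname{tr}P)^2+4t(1-t)/(R_1R_2)$ that is strictly positive on all of $[0,1]$, and a trace $\operatorname{tr}P$ which never vanishes on $[0,1]$ for $i=2$ and vanishes only at $t=R_2/(R_1+2R_2)$ for $i=1$ — and at each exceptional value I would fall back on $A=\Omega^{-1}\big(d^2J(m_i)+c\,d^2H(m_i)\big)$ for a suitable constant $c$, checking by hand that the resulting $P$ still has two distinct real nonzero eigenvalues of one sign. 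This last check is straightforward: at $t\in\{0,1\}$ any $c$ outside a finite set works (for $i=1$, $t=0$, one even gets $P$ diagonal with entries $1+1/R_1$ and $1$), and at $t=R_2/(R_1+2R_2)$ for $i=1$ one has $\operatorname{tr}P=2$ for every $c$, a discriminant equal to $c^2$ times a positive quantity, and $\det P\to1$ as $c\to0$, so a small $c$ suffices.

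The delicate part — hence the main obstacle — is exactly this bookkeeping: locating the finitely many pairs $(t;R_1,R_2)$ at which the primary combination degenerates and verifying that the auxiliary combination substituted there really has four \emph{distinct} eigenvalues, i.e.\ that the corresponding matrix $P$ is neither singular nor scalar. The definiteness of $d^2J(m_1)$ and $d^2J(m_3)$ (and of the relevant block of $d^2H(m_2)$ for $t\in(0,1)$) is what makes the purely-imaginary conclusion immediate in those regimes and keeps the verification routine; no new phenomenon appears beyond those already met for $m_0$ in Proposition~\ref{prop:focus}.
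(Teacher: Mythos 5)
Your argument is correct, and its skeleton coincides with the paper's: apply Lemma \ref{lm:crit_nondeg} by exhibiting a regular linear combination with purely imaginary spectrum, take $\Omega^{-1}d^2H(m_i)$ generically, and fall back on $\Omega^{-1}\bigl(d^2J(m_i)+c\,d^2H(m_i)\bigr)$ at the finitely many exceptional values of $t$. Where you genuinely diverge is the verification device: by reordering the basis you reduce everything to the $2\times 2$ matrix $P=D^{-1}Q$, so that non-degeneracy of elliptic-elliptic type becomes the three sign conditions $\det P\neq 0$, $(\operatorname{tr}P)^2-4\det P>0$, $\operatorname{tr}P\neq 0$; the paper instead analyses the quartic characteristic polynomial $Q(X^2)$ directly, proving $\Delta>0$, $b>0$ and $\Delta-b^2=-4R_1^2R_2^2t^2(1-t)^2\le 0$ to conclude that both roots of $Q$ are negative. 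The two computations are equivalent (your $\det P=-t(1-t)/(R_1R_2)$ is the square root of the paper's constant term), but your reduction is lighter and makes the exceptional loci transparent: $\operatorname{tr}P$ vanishes only at $t=R_2/(R_1+2R_2)$ for $m_1$ and never for $m_2$, and $\det P$ vanishes only at $t\in\{0,1\}$, exactly the cases the paper treats separately (there, for $m_1$ at $t=R_2/(R_1+2R_2)$, it uses $\Omega^{-1}\bigl(d^2J-(R_1+2R_2)d^2H\bigr)$, consistent with your ``small $c$'' analysis since $\operatorname{tr}P=2$ and the eigenvalues are $1\pm c\mu$). The one claim you leave as ``expected'' — that $c_J=c_H=1$ works at $m_3$ for all $t\in[0,1]$ — does check out: one gets $\operatorname{tr}P=2+\tfrac{1}{R_1}+\tfrac{t}{R_2}>0$, $\det P=\tfrac{R_1R_2+R_2+R_1t+t(1-t)}{R_1R_2}>0$ and $(\operatorname{tr}P)^2-4\det P=\bigl(\tfrac{1}{R_1}-\tfrac{t}{R_2}\bigr)^2+\tfrac{4t^2}{R_1R_2}>0$; this is even slightly cleaner than the paper, which uses $\Omega^{-1}d^2H(m_3)$ for $t\in(0,1)$ and leaves $t=0,1$ to the reader. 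So: same strategy, different and somewhat more economical bookkeeping, with all the stated formulas correct.
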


The proof follows the same lines as the study of $m_0$, hence we leave it to the reader.

\subsection{Critical points of corank one}

We now look for the points where $dJ, dH$ are linearly dependent but $dF \neq 0$.

\begin{prop}
\label{prop:crit_corank_one}
When $t \notin \{0,1\}$, the critical points of corank one of $F$ are the points $(x_1,y_1,z_1,x_2,y_2,z_2) \in \S^2 \times \S^2$ for which there exists $\lambda \in \R \setminus \{ 0, \frac{1-t}{R_1} \}$ such that $(x_2,y_2) = \frac{1-t-R_1 \lambda}{R_2 \lambda} (x_1,y_1)$ and
\[ z_1 = \frac{(t^2 + R_2^2 \lambda^2)(1-t-R_1 \lambda)^2 - t^2 R_2^2 \lambda^2}{2 t R_2 \lambda (1-t-R_1 \lambda)^2}, \quad z_2 = \frac{(t^2 - R_2^2 \lambda^2)(1-t-R_1 \lambda)^2 - t^2 R_2^2 \lambda^2}{2 t R_2^2 \lambda^2 (1-t-R_1\lambda)} \]
and which are different from the points $m_0, m_1, m_2, m_3$ introduced earlier. When $t = 1$, the critical points of corank one are either the points $(x_1,y_1,z_1,x_2,y_2,z_2) \in \S^2 \times \S^2 \setminus \{m_0, m_1, m_2, m_3\}$ such that $(x_2,y_2,z_2) = \pm (x_1,y_1,z_1)$ or those for which there exists $\lambda \neq 0$ such that
\[ (x_2,y_2) = -\frac{R_1}{R_2} (x_1,y_1), \quad z_1 = \frac{R_1^2 - R_2^2 + R_1^2 R_2^2 \lambda^2}{2 R_1^2 R_2 \lambda}, \quad z_2 = \frac{R_2^2-R_1^2 + R_1^2 R_2^2 \lambda^2}{2 R_1 R_2^2 \lambda}. \]
When $t = 0$, the critical points of corank one of $F$ are the points of the form $(0,0,\pm 1,x_2,y_2,z_2)$ with $(x_2,y_2,z_2) \neq (0,0,\pm 1)$ or of the form $(x_1,y_1,z_1,0,0,\pm 1)$ with $(x_1,y_1,z_1) \neq (0,0,\pm 1)$.
\end{prop}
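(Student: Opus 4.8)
The plan is to find the corank-one critical points by writing down the condition that $dJ$ and $dH$ are linearly dependent at a point $m \in \S^2 \times \S^2$, i.e.\ that there exists $\lambda \in \R$ with $dH = \lambda \, dJ$ on $T_m(\S^2 \times \S^2)$ (the case $dJ = 0$ having already been dealt with: it forces $m \in \{m_0,m_1,m_2,m_3\}$), together with the requirement $dF \neq 0$, which rules out exactly those four points. The computation is cleanest in the ambient coordinates: on each sphere factor, a cotangent vector is determined by its components modulo the radial direction, so $dH - \lambda\, dJ$ vanishing on $T_m M$ means that $\nabla H - \lambda \nabla J$ (the ambient gradient) is a combination $\mu_1 (x_1,y_1,z_1) \oplus \mu_2 (x_2,y_2,z_2)$ of the two outward normals. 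Computing $\nabla J = (0,0,R_1) \oplus (0,0,R_2)$ and $\nabla H = (t x_2, t y_2, 1-t+t z_2) \oplus (t x_1, t y_1, t z_1)$, the first-factor equations read
\[ t x_2 = \mu_1 x_1, \quad t y_2 = \mu_1 y_1, \quad 1 - t + t z_2 - \lambda R_1 = \mu_1 z_1, \]
and similarly on the second factor
\[ t x_1 = \mu_2 x_2, \quad t y_1 = \mu_2 y_2, \quad t z_1 - \lambda R_2 = \mu_2 z_2. \]

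From the $(x,y)$-equations one immediately gets the proportionality $(x_2,y_2) = \tfrac{\mu_1}{t}(x_1,y_1)$ and $(x_1,y_1) = \tfrac{\mu_2}{t}(x_2,y_2)$, so that (away from the axis $x_1=y_1=0$, which leads to the $m_i$) one has $\mu_1 \mu_2 = t^2$, and one solves for $\mu_1$ in terms of $\lambda$ by eliminating. Concretely, I would use $\mu_1 = \tfrac{t(1-t-R_1\lambda)}{R_2\lambda - t z_1}$-type relations obtained by combining the $z$-equations with $\mu_2 = t^2/\mu_1$, then feed this back to extract $z_1$ and $z_2$ as explicit rational functions of $\lambda$; matching the stated formulas is then a matter of algebraic simplification, and the excluded values $\lambda \notin \{0, \tfrac{1-t}{R_1}\}$ are exactly those making the denominators vanish or forcing $(x_1,y_1)=(x_2,y_2)=0$. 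For the degenerate parameter values $t=1$ and $t=0$ the vector field $\nabla H$ simplifies drastically — for $t=1$, $\nabla H = (x_2,y_2,z_2)\oplus(x_1,y_1,z_1)$, so the linear-dependence condition splits into the two stated families according to whether the two spheres' coordinates are parallel with ratio $\pm 1$ or with the ratio $-R_1/R_2$ (the latter coming from the $z$-equations when the $(x,y)$-parts are handled separately); for $t=0$, $H = z_1$ depends only on the first factor, so $dH$ and $dJ$ are automatically dependent precisely when one of the two $z$-coordinates is extremal, giving the last family.

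The main obstacle will be the bookkeeping in the generic case $t \notin \{0,1\}$: keeping track of which subcase one is in (whether $x_1^2+y_1^2$ vanishes, whether $\lambda$ hits an excluded value, whether $\mu_1$ or $\mu_2$ vanishes) while carrying out the elimination, and then massaging the resulting expressions into exactly the rational forms displayed in the statement. None of the individual steps is deep, but the case analysis is delicate and one must be careful to verify that every point produced by the formulas does lie on $\S^2\times\S^2$ (i.e.\ the constraints $x_i^2+y_i^2+z_i^2=1$ are consistent with the stated $z_i$ for a suitable radius of the $(x_1,y_1)$ circle) and conversely that no corank-one point is missed. I would organize the argument by first treating $t\notin\{0,1\}$ in full, isolating the axis case, then doing $t=1$ and $t=0$ as short separate paragraphs where the formulas for $\nabla H$ collapse.
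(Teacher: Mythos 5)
Your proposal follows essentially the same route as the paper: write the corank-one condition as a Lagrange multiplier equation $\nabla H = \lambda \nabla J + \mu_1 \nabla f_1 + \mu_2 \nabla f_2$ with the sphere normals, use the $(x,y)$-equations to get the proportionality and $\mu_1\mu_2 \sim t^2$, eliminate via the $z$-equations, impose $x_i^2+y_i^2+z_i^2=1$ to pin down $z_1,z_2$, and treat $t=0,1$ separately. One small simplification you will find when carrying it out: combining the two $z$-equations with $\mu_1\mu_2 = t^2/4$ makes the $z$-dependence cancel entirely, giving $\mu_1 = \frac{t(1-t-R_1\lambda)}{2R_2\lambda}$ as a function of $\lambda$ alone (not the $z_1$-dependent expression you anticipated), which is exactly what makes the remaining algebra and case bookkeeping manageable.
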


\begin{proof}
If $m$ is a critical point of corank one, there exist $\lambda, \mu_1, \mu_2 \in \R$ such that
\[ \nabla H (m) = \lambda \nabla J (m) + \mu_1 \nabla f_1 (m) + \mu_2 \nabla f_2 (m) \]
with $f_i(x_1,y_1,z_1,x_2,y_2,z_2) = x_i^2 + y_i^2 + z_i^2$, $i=1,2$. We obtain the following equations:
\[ \begin{cases} t x_2 = 2 \mu_1 x_1, \\ t y_2 = 2 \mu_1 y_1, \\ 1-t + t z_2 = \lambda R_1 + 2 \mu_1 z_1, \\ t x_1 = 2 \mu_2 x_2, \\ t y_1 = 2 \mu_2 y_2, \\ t z_1 = \lambda R_2 + 2 \mu_2 z_2. \end{cases} \]
We start with the case $t \notin \{ 0,1 \}$. If $\mu_1 = 0$, we immediately get that $x_2 = y_2 =0$, which implies that $x_1 = y_1 =0$. Therefore $z_1, z_2 \in \{-1,1\}$, and we find the critical points of corank two, see the previous section. So we may assume that $\mu_1 \neq 0$, and we may also assume that $x_1 \neq 0$ or $y_1 \neq 0$, because otherwise we find the critical points of corank two again. Then, comparing either the first and fourth equations or the second and fifth equation, we get $ \mu_2 = \frac{t^2}{4 \mu_1}$. Thus, combining the third and last equations, we obtain that
\[ 1 - t + t z_2 = \lambda R_1 + \frac{2 \lambda R_2 \mu_1 }{t} + t z_2 \]
which yields $\lambda \neq 0$ (since $t \neq 1$) and $\mu_1 = \frac{t(1-t-R_1 \lambda)}{2 R_2 \lambda}$. In particular, $1-t-R_1 \lambda \neq 0$. Now, a straightforward computation gives
\[ z_2 = \frac{(1-t-R_1 \lambda)(t z_1 - R_2 \lambda)}{t R_2 \lambda}. \]
Thanks to the above results, the equality $1 = x_2^2 + y_2^2 + z_2^2$ reads $ 1 = \frac{(1-t-R_1 \lambda)^2}{t^2 R_2^2 \lambda^2} \left( t^2(x_1^2 + y_1^2) + (tz_1 - R_2 \lambda)^2 \right)$. Using the fact that $x_1^2 + y_1^2 + z_1^2 = 1$, this allows us to derive the equality
\[ z_1 = \frac{(t^2 + R_2^2 \lambda^2)(1-t-R_1 \lambda)^2 - t^2 R_2^2 \lambda^2}{2 t R_2 \lambda (1-t-R_1 \lambda)^2}. \]

When $t=1$, we still find the same points when $\lambda \neq 0$, but the difference is that we can now have $\lambda = 0$. In this case we obtain $(x_2,y_2,z_2) = 2 \mu_1 (x_1,y_1,z_1)$, hence $(x_2,y_2,z_2) = \pm (x_1,y_1,z_1)$.

Now, in the case where $t=0$, we obtain that
\[ \begin{cases} 0 = 2 \mu_1 x_1, \\ 0 = 2 \mu_1 y_1, \\ 1 = \lambda R_1 + 2 \mu_1 z_1, \\ 0 = 2 \mu_2 x_2, \\ 0 = 2 \mu_2 y_2, \\ 0 = \lambda R_2 + 2 \mu_2 z_2. \end{cases} \]
The last three equations imply that $\lambda^2 R_2^2 = 4 \mu_2^2$. Hence, if $\mu_2 = 0$, we also have that $\lambda = 0$ and the first three equations imply that $\mu_1 \neq 0$, so $x_1 = 0 = y_1$, and $z_1 = \pm 1$. Now, if $\mu_2 \neq 0$, we immediately get that $x_2 = 0 = y_2$ and $z_2 = \pm 1$. But we already know that when $z_1 = \pm 1$ and $z_2 = \pm 1$, we get the critical points of maximal corank.
\end{proof}

\subsection{Image of the momentum map}

We can now use the previous results to describe the image of $F$; more precisely, we will obtain a complete parameterization of the boundary of this image. For $0 < t \leq 1$, we define two functions $f,g : \R \setminus \{0, \frac{1-t}{R_1} \} \to \R$ by the formulas
\[ f(\lambda) = \frac{(t^2 + R_2^2 \lambda^2)(1-t-R_1 \lambda)^2 - t^2 R_2^2 \lambda^2}{2 t R_2 \lambda (1-t-R_1 \lambda)^2} , \quad g(\lambda) = \frac{(t^2 - R_2^2 \lambda^2)(1-t-R_1 \lambda)^2 - t^2 R_2^2 \lambda^2}{2 t R_2^2 \lambda^2 (1-t-R_1\lambda)}. \]
We saw that the critical points of corank one of $F$ are those for which there exists $\lambda$ such that $z_1 = f(\lambda)$ and $z_2 = g(\lambda)$. Therefore, we want to know for which values of $\lambda$ the numbers $f(\lambda)$ and $g(\lambda)$ both belong to $[-1,1]$; in other words, we want to describe $f^{-1}([-1,1]) \cap g^{-1}([-1,1])$. This is the purpose of the following technical lemma.

\begin{lm}
For $0 < t < 1$, we define the numbers
\[ \lambda_1^{\pm} = \frac{(1-2t)R_2 - t R_1 \pm \sqrt{((1-2t)R_2 - t R_1)^2 + 4 R_1 R_2 t (1-t)}}{2 R_1 R_2},\]
and
\[\lambda_2^{\pm} = \frac{ R_2 - t R_1 \pm \sqrt{(R_2 - t R_1)^2 + 4 R_1 R_2 t(1-t)} }{2 R_1 R_2}, \quad \lambda_3^{\pm} = \frac{ t R_1 + R_2 \pm \sqrt{(R_2 - t R_1)^2 + 4t^2 R_1 R_2} }{2 R_1 R_2}. \]
Then, for such $t$, one has $\lambda_1^- < \lambda_2^- < 0 < \lambda_3^- < \lambda_1^+ < \lambda_2^+ < \lambda_3^+ $. Moreover, for $t^- \leq t \leq t^+$, $ f^{-1}([-1,1]) = [\lambda_1^-,\lambda_2^-] \cup [\lambda_3^-,\lambda_1^+] \cup [\lambda_2^+,\lambda_3^+]$. Now, for  $0 < t < t^-$ or $t^+ < t < 1$, let
\[ \lambda_0^{\pm} = \frac{ (1-2t)R_2  + t R_1 \pm \sqrt{(R_1^2 + 4 R_2^2) t^2 - 2 R_2 (R_1 + 2 R_2) t + R_2^2}}{2 R_1 R_2}. \]
Then 
\begin{itemize} \item for $0 < t < t^-$, the inequalities $\lambda_1^- < \lambda_2^- < 0 < \lambda_3^- < \lambda_0^- < \lambda_0^+ < \lambda_1^+ < \lambda_2^+ < \lambda_3^+$ hold and
$ f^{-1}([-1,1]) = [\lambda_1^-,\lambda_2^-] \cup [\lambda_3^-,\lambda_0^-] \cup [\lambda_0^+,\lambda_1^+] \cup [\lambda_2^+,\lambda_3^+]$,
\item for $t^+ < t < 1$, one has that $\lambda_1^- < \lambda_0^- < \lambda_0^+ < \lambda_2^- < 0 < \lambda_3^-  < \lambda_1^+ < \lambda_2^+ < \lambda_3^+$ and $f^{-1}([-1,1]) = [\lambda_1^-,\lambda_0^-] \cup [\lambda_0^+,\lambda_2^-] \cup [\lambda_3^-,\lambda_1^+] \cup [\lambda_2^+,\lambda_3^+]$,
\item for $t = 1$, $f^{-1}([-1,1]) = \left[ -\frac{1}{R_2} - \frac{1}{R_1}, \frac{1}{R_2} - \frac{1}{R_1}\right] \cup \left[-\frac{1}{R_2} + \frac{1}{R_1} ,\frac{1}{R_2} + \frac{1}{R_1} \right]$.
\end{itemize}
Furthermore, in all the cases above, $g^{-1}([-1,1]) = f^{-1}([-1,1])$.
\end{lm}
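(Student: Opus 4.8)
The plan is to reduce the whole statement to the sign of $f(\lambda)^2-1$ and of $g(\lambda)^2-1$, using that a real number lies in $[-1,1]$ precisely when its square is $\le 1$. First I would put $f(\lambda)^2-1$ over a common denominator: writing $P(\lambda)=1-t-R_1\lambda$ and $D(\lambda)=2tR_2\lambda\,P(\lambda)^2$, one has $f=\mathrm{num}(f)/D$ with $\mathrm{num}(f)=(t^2+R_2^2\lambda^2)P^2-t^2R_2^2\lambda^2$; on the domain $\R\setminus\{0,\frac{1-t}{R_1}\}$ the quantity $D^2$ is a nonzero perfect square, hence positive, so $f(\lambda)\in[-1,1]$ iff $\mathrm{num}(f)^2-D^2\le 0$. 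Now
\[ \mathrm{num}(f)^2-D^2=\bigl((t-R_2\lambda)^2P^2-t^2R_2^2\lambda^2\bigr)\bigl((t+R_2\lambda)^2P^2-t^2R_2^2\lambda^2\bigr), \]
and each factor, being a difference of two squares, splits further into $\bigl((t-R_2\lambda)P-tR_2\lambda\bigr)\bigl((t-R_2\lambda)P+tR_2\lambda\bigr)$ and the analogous product with $t+R_2\lambda$. Expanding, these four quadratics in $\lambda$ have leading coefficient $\pm R_1R_2$ and, normalised to be monic, their roots are exactly $\lambda_3^\pm$, $\lambda_0^\pm$, $\lambda_1^\pm$, $\lambda_2^\pm$ respectively; in particular $\lambda_0^\pm$ is real precisely when $(R_1^2+4R_2^2)t^2-2R_2(R_1+2R_2)t+R_2^2\ge 0$, i.e. when $t\notin(t^-,t^+)$ (this quantity is the polynomial $T$ from the proof of Proposition~\ref{prop:focus}), with a double root when $t=t^\pm$, and for $t\in(t^-,t^+)$ the corresponding quadratic is everywhere positive. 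So $f(\lambda)^2-1=N(\lambda)/D(\lambda)^2$ with $N$ a monic polynomial of positive leading coefficient whose real roots are the eight numbers $\lambda_i^\pm$ when $t\in(0,t^-)\cup(t^+,1)$ and the six numbers $\lambda_1^\pm,\lambda_2^\pm,\lambda_3^\pm$ when $t\in[t^-,t^+]$, and $f(\lambda)\in[-1,1]\iff N(\lambda)\le 0$.

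The core of the proof is the ordering of the roots $\lambda_i^\pm$, which is the step I expect to be the real obstacle: it is genuinely case-heavy, with eight roots, three regimes for $t$, and the bookkeeping of which of $\lambda_0^\pm$ are real and where they interleave. The key simplification is that the difference of any two of the four monic quadratics above is an affine function of $\lambda$ with simple explicit coefficients, so the value of one of them at a root of another is obtained by plugging into that affine function, and its sign is read off at once. Together with the evaluations at the benchmark points $\lambda=0$, $\lambda=\frac{1-t}{R_1}$ and $\lambda=\frac{1-t}{R_1+R_2}$ and with the signs of the products of roots (negative for the quadratics with roots $\lambda_1^\pm$, $\lambda_2^\pm$; positive for the other two), this yields all the claimed inequalities in each regime $0<t<t^-$, $t^-\le t\le t^+$, $t^+<t<1$ — the endpoints $t=t^\pm$ being the limiting case where the double root $\lambda_0^-=\lambda_0^+$ lies in the interior of one of the three intervals. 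Once the ordering is known, the sublevel set $\{N\le 0\}$ of a monic polynomial with positive leading coefficient and simple real roots is, by sign alternation, exactly the stated union of closed intervals (and one checks that the excluded points $0$ and $\frac{1-t}{R_1}$ lie in $\{N>0\}$, so the description is consistent with the domain). The case $t=1$, where $P=-R_1\lambda$, goes the same way: after cancelling a factor $\lambda^4$, $N/D^2$ becomes a product of four linear factors over $\lambda^2$ with roots $-\frac{1}{R_1}\mp\frac{1}{R_2}$ and $\frac{1}{R_1}\mp\frac{1}{R_2}$, giving the two stated intervals.

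For the coincidence $g^{-1}([-1,1])=f^{-1}([-1,1])$, rather than redo the computation for $g$ I would reuse the relation already obtained in the proof of Proposition~\ref{prop:crit_corank_one}: $g(\lambda)=c(\lambda)\cdot\frac{tf(\lambda)-R_2\lambda}{t}$ with $c(\lambda)=\frac{1-t-R_1\lambda}{R_2\lambda}$, together with the identity $t^2/c(\lambda)^2=t^2-2tR_2\lambda f(\lambda)+R_2^2\lambda^2$, which is exactly the relation solved to obtain the formula for $f$. A one-line manipulation then gives
\[ 1-g(\lambda)^2=c(\lambda)^2\bigl(1-f(\lambda)^2\bigr), \]
and since $c(\lambda)^2>0$ on the domain (including at $t=1$, where $c\equiv-R_1/R_2$), the functions $1-g^2$ and $1-f^2$ have the same sign everywhere; hence $g^{-1}([-1,1])=f^{-1}([-1,1])$ in all cases. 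Apart from this ordering step, the work is short once the right algebraic forms are written down.
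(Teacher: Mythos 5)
Your proposal is correct, and at its core it coincides with the paper's computation: the four quadratics you obtain by splitting the difference of squares $\mathrm{num}(f)^2-D^2$ are exactly the polynomials $P,Q,S,T$ that the paper introduces (with roots $\lambda_1^{\pm},\lambda_2^{\pm},\lambda_3^{\pm},\lambda_0^{\pm}$ respectively), and the discriminant deciding when $\lambda_0^{\pm}$ are real is the same quantity $(R_1^2+4R_2^2)t^2-2R_2(R_1+2R_2)t+R_2^2$ already analyzed in the proof of Proposition \ref{prop:focus}. The packaging differs in two worthwhile ways. First, the paper studies $f+1=PQ/D$ and $f-1=ST/D$ separately, so its sign bookkeeping must also track the denominator $2tR_2\lambda(1-t-R_1\lambda)^2$, which changes sign at $\lambda=0$; your reduction to the single inequality $PQST\le 0$ (legitimate since $D^2>0$ on the domain) removes that complication, and sign alternation plus the check that $0$ and $(1-t)/R_1$ lie where the product is positive yields the stated unions directly, including the borderline cases $t=t^{\pm}$ where the double root of $T$ sits inside one of the intervals. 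Second, for $g$ the paper factors $g\pm 1$ as $PT$ and $QS$ over a new denominator and leaves the conclusion to the reader, whereas your identity $1-g^2=c^2(1-f^2)$ with $c=(1-t-R_1\lambda)/(R_2\lambda)$ does hold identically (it follows from $g=c\,(tf-R_2\lambda)/t$ together with $t^2/c^2=t^2-2tR_2\lambda f+R_2^2\lambda^2$, or simply from $x_2^2+y_2^2=c^2(x_1^2+y_1^2)$ in the parametrization of the corank-one points) and gives $g^{-1}([-1,1])=f^{-1}([-1,1])$ uniformly in $t$, including $t=1$; this is cleaner than the paper's route. Concerning the ordering of the roots, both you and the paper leave most of the case analysis implicit (the paper only carries out $\lambda_1^-<\lambda_2^-$, by a square-root estimate), but your mechanism — differences of the monic quadratics are affine, so the sign of one at a root of another is read off at once, supplemented by root products and evaluations at $0$, $(1-t)/(R_1+R_2)$ and $(1-t)/R_1$ — is sound and, where I tested it, goes through more smoothly than the paper's direct estimates; just note that placing $\lambda_0^{\pm}$ on the correct side of $0$ in the two outer regimes also uses the sign of the \emph{sum} of the roots of $T$, i.e.\ of $(1-2t)R_2+tR_1$, which for $t>t^+$ requires the short estimate the paper gives (essentially $(2t-1)R_2-tR_1>2t(2\sqrt{R_1R_2}-R_1)>0$), so include that when writing up the details.
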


\begin{proof}
We start with the case $0 < t < 1$. Observe that
\[ f(\lambda ) + 1 = \frac{(t + R_2 \lambda)^2(1-t-R_1 \lambda)^2 - t^2 R_2^2 \lambda^2}{2 t R_2 \lambda (1-t-R_1 \lambda)^2} = \frac{P(\lambda)Q(\lambda)}{2 t R_2 \lambda (1-t-R_1 \lambda)^2}, \]
where $P$ and $Q$ are polynomials defined as
\[ P(\lambda) = (t + R_2 \lambda)(1-t-R_1 \lambda) - t R_2 \lambda, \quad Q(\lambda) = (t + R_2 \lambda)(1-t-R_1 \lambda) + t R_2 \lambda. \]
By expanding $P$, we obtain $P(\lambda) = -R_1 R_2 \lambda^2 + ((1-2t)R_2 - t R_1) \lambda + t(1-t)$; its discriminant is equal to $((1-2t)R_2 - t R_1)^2 + 4 R_1 R_2 t (1-t) > 0$. Therefore, $P$ has two real roots
\[ \lambda_1^{\pm} = \frac{(1-2t)R_2 - t R_1 \pm \sqrt{((1-2t)R_2 - t R_1)^2 + 4 R_1 R_2 t (1-t)}}{2 R_1 R_2}. \]
Note that $\lambda_1^- <  0 < \lambda_1^+$, and that $P(\lambda) \geq 0$ when $\lambda_1^- \leq \lambda \leq \lambda_1^+$ and $P(\lambda) < 0$ otherwise. Similarly, one finds that $Q$ also has two real roots
\[ \lambda_2^{\pm} = \frac{ R_2 - t R_1 \pm \sqrt{(R_2 - t R_1)^2 + 4 R_1 R_2 t(1-t)} }{2 R_1 R_2} \]
with $\lambda_2^- < 0 < \lambda_2^+$, $Q(\lambda) \geq 0$ when $\lambda_2^- \leq \lambda \leq \lambda_2^+$ and $Q(\lambda) < 0$ otherwise. A similar computation shows that
\[ f(\lambda) - 1 = \frac{S(\lambda)T(\lambda)}{2 t R_2 \lambda (1-t-R_1 \lambda)^2},  \]
where $S(\lambda) = (t - R_2 \lambda)(1-t-R_1 \lambda) - t R_2 \lambda$ and $T(\lambda) = (t - R_2 \lambda)(1-t-R_1 \lambda) + t R_2 \lambda$. One finds that $S$ has two real roots
\[ \lambda_3^{\pm} = \frac{ t R_1 + R_2 \pm \sqrt{(R_2 - t R_1)^2 + 4t^2 R_1 R_2} }{2 R_1 R_2} \]
with $0 < \lambda_3^- < \lambda_3^+$, that $S(\lambda) \leq 0$ when $ \lambda_3^- \leq \lambda \leq \lambda_3^+$ and that $S(\lambda) > 0$ otherwise. The case of $T$ is more interesting; its discriminant is equal to $ \Delta = (R_1^2 + 4 R_2^2) t^2 - 2 R_2 (R_1 + 2 R_2) t + R_2^2$; we already saw in the proof of Proposition \ref{prop:focus} that $\Delta \leq 0$ for $t^- \leq t \leq t^+$ and $\Delta > 0$ otherwise. Therefore, $T$ has no real root when $t^- < t < t^+$, has two real roots
\[ \lambda_0^{\pm} = \frac{ (1-2t)R_2  + t R_1 \pm \sqrt{\Delta}}{2 R_1 R_2} \]
when $t < t^-$ or $t > t^+$, and one real root $\lambda_0$ when $t \in \{t^-,t^+\}$. Obviously $\lambda_0^- \leq \lambda_0^+$ with equality when $t \in \{ t^-,t^+ \}$. Moreover, $T(\lambda) \geq 0$ for $t^- \leq t \leq t^+$, and for other values of $t$ we have that $T(\lambda) \leq 0$ when $\lambda_0^- \leq \lambda \leq \lambda_0^+$, $T(\lambda) > 0$ otherwise. When $t < t^-$, $(1-2t)R_2  + t R_1 > 0$ since $t^- \leq 1/2$, and one readily checks that $ \left((1-2t)R_2  + t R_1 \right)^2 - \Delta = 4R_1R_2t(1-t) > 0$, thus $\lambda_0^- > 0$. When $t > t^+$, we have that $(2t-1)R_2  - t R_1 > 2t \left( 2 \sqrt{R_1 R_2} - R_1 \right) > 0$. Since $\left( (2t-1)R_2  - t R_1 \right)^2 - \Delta = 4R_1R_2t(1-t) > 0$, this yields $(2t-1)R_2  - t R_1  > \sqrt{\Delta}$, thus $\lambda_0^+ < 0$.

In order to be able to compute the signs of $f(\lambda) + 1$ and $f(\lambda) - 1$ everywhere, we still need to compare all the $\lambda_i^{\pm}$. The claim follows from careful computations; let us show for instance that $\lambda_2^- > \lambda_1^-$, the other cases involving similar methods. First, observe that we have that $2R_1R_2(\lambda_2^- - \lambda_1^-) = \sqrt{a} - \sqrt{b} + 2tR_2$ with $a = ((1-2t)R_2 - t R_1)^2 + 4 R_1 R_2 t (1-t), \quad b = (R_2 - t R_1)^2 + 4 R_1 R_2 t(1-t)$. One readily checks that
\[ a = b +  4 t^2 R_2^2 - 4 t R_2 (R_2 - t R_1) \quad \text{so} \ \left( \sqrt{a} + 2tR_2 \right)^2 - b = 4tR_2 \left( 2t R_2 + \sqrt{a} - R_2 + tR_1  \right). \]
We will prove that the right hand side of this equality is positive, which will imply that $\lambda_2^- - \lambda_1^->0$. If $R_2 - tR_1 - 2tR_2 \leq 0$, this is obvious. Otherwise, we write
\[ a - \left( R_2 - tR_1 - 2tR_2 \right)^2 = a - (R_2 - tR_1)^2 + 4tR_2(R_2-tR_1) - 4t^2R_2^2 = b - (R_2 - tR_1)^2 = 4 R_1 R_2 t(1-t) > 0. \]
Consequently, $\sqrt{a} >  R_2 - tR_1 - 2tR_2$, and the result follows.

When $t = 1$, we compute
\[ f(\lambda) + 1 = \frac{R_1^2 - R_2^2 + R_1^2 R_2^2 \lambda^2 + 2R_1^2 R_2 \lambda}{2 R_1^2 R_2 \lambda}, \quad f(\lambda) - 1 = \frac{R_1^2 - R_2^2 + R_1^2 R_2^2 \lambda^2 - 2R_1^2 R_2 \lambda}{2 R_1^2 R_2 \lambda} \]
and we conclude by checking the signs of both numerators.

We leave the verification of the last statement to the reader. Actually, the study of $g$ is not too difficult now because we have that
\[ g(\lambda) + 1 = \frac{P(\lambda) T(\lambda)}{2 t R_2^2 \lambda^2 (1-t-R_1\lambda)}, \quad g(\lambda) - 1 = \frac{Q(\lambda) S(\lambda)}{2 t R_2^2 \lambda^2 (1-t-R_1\lambda)} \]
where $P,Q,S,T$ are the polynomials introduced above. Another useful observation is that $\lambda_1^+ < \frac{1-t}{R_1} < \lambda_2^+$ when $0 < t < 1$.
\end{proof}

\begin{prop}
The image of $F$ can be described as follows:
\begin{itemize}
\item when $t = 0$, $F(M)$ is the compact domain enclosed by the parallelogram with vertices at $(-(R_1 + R_2),-1), (R_1-R_2,1), (R_1 + R_2,1)$ and $(R_2 - R_1,-1)$,
\item when $t= 1$, $F(M)$ is the compact domain enclosed by the closed curve obtained as the union of the four following curves:
\begin{enumerate}
\item the horizontal segment $H=1$, $-(R_1 + R_2) \leq J \leq R_1 + R_2$,
\item the horizontal segment $H=-1$, $R_1 - R_2 \leq J \leq R_2 - R_1$,
\item the parametrized curve $J = R_1 R_2 \lambda, H = \frac{R_1^2 R_2^2 \lambda^2 - R_1^2 - R_2^2}{2 R_1 R_2}$, $-\frac{1}{R_2} - \frac{1}{R_1} \leq \lambda \leq \frac{1}{R_2} - \frac{1}{R_1}$,
\item  the parametrized curve $J = R_1 R_2 \lambda, H = \frac{R_1^2 R_2^2 \lambda^2 - R_1^2 - R_2^2}{2 R_1 R_2}$, $-\frac{1}{R_2} + \frac{1}{R_1} \leq \lambda \leq \frac{1}{R_2} + \frac{1}{R_1}$,
\end{enumerate}
\item when $0 < t < 1$, the boundary of $F(M)$ consists of the points $(J_{\lambda},H_{\lambda}) \in \R^2$ with
\[ J_{\lambda} = \frac{t(1-t)f(\lambda) - R_2 (1-t) \lambda + R_1 R_2 \lambda^2}{t \lambda}, \quad H_{\lambda} = \frac{t(1-t) - t R_1 \lambda + R_1 R_2 \lambda^2 f(\lambda)}{R_2 \lambda} \]
for $\lambda \in f^{-1}([-1,1])$. It is a closed continuous curve $\mathcal{C}$ in $\R^2$, and $F(M)$ is the compact domain enclosed by $\mathcal{C}$.
\end{itemize}
\end{prop}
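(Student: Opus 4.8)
The plan is to combine two ingredients: the general principle that the boundary of the image of the momentum map of a compact integrable system is contained in its set of critical values, together with the explicit description of those critical values furnished by Lemma~\ref{lm:crit_points} and Proposition~\ref{prop:crit_corank_one}. Since $M=\S^2\times\S^2$ is compact, $F(M)$ is compact and connected, and it has nonempty interior because $F$ is a submersion near any of its (existing, indeed dense) regular points, hence an open map there; the same openness shows that if $c$ is a regular value lying in $F(M)$ then $c\in\mathrm{int}\,F(M)$. Contrapositively, $\partial F(M)$ is contained in the set of critical values of $F$, which by the two results quoted above is the union of the four points $c_0,\dots,c_3$ and of the images of the corank-one critical points. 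It therefore suffices to describe this last set and then run a soft topological argument.

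For $0<t<1$, at a corank-one critical point one has, by Proposition~\ref{prop:crit_corank_one}, $z_1=f(\lambda)$, $z_2=g(\lambda)$, $x_1^2+y_1^2=1-f(\lambda)^2$ and $x_1x_2+y_1y_2=\tfrac{1-t-R_1\lambda}{R_2\lambda}\,(1-f(\lambda)^2)$; substituting into $J=R_1z_1+R_2z_2$ and $H=(1-t)z_1+t(x_1x_2+y_1y_2+z_1z_2)$ and simplifying (a lengthy but routine manipulation using $x_1^2+y_1^2+z_1^2=1$) yields exactly the claimed parametrization $(J_\lambda,H_\lambda)$. By the previous lemma the admissible parameters are precisely $\lambda\in f^{-1}([-1,1])=g^{-1}([-1,1])$, a finite union of closed intervals; the poles $\lambda=0$ and $\lambda=\tfrac{1-t}{R_1}$ of $f,g$ lie in the complementary gaps (using $\lambda_2^-<0<\lambda_3^-$ and $\lambda_1^+<\tfrac{1-t}{R_1}<\lambda_2^+$), so $\lambda\mapsto(J_\lambda,H_\lambda)$ is continuous on this set. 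At each endpoint $\lambda$ of a constituent interval one has $f(\lambda),g(\lambda)\in\{-1,1\}$, hence $z_1,z_2\in\{\pm1\}$, so the corresponding value is one of the $c_i$; checking the sign combinations (via the factorizations $f\pm1=\cdots$, $g\pm1=\cdots$ from the previous lemma) shows that the arcs over consecutive intervals share the same endpoint value, and that these are exactly $c_1,c_2,c_3$ in the focus-focus range (while $\lambda_0^\pm$ and the value $c_0$ only occur for $t\notin[t^-,t^+]$, consistent with $c_0$ being interior for focus-focus $t$). Thus the closure of the corank-one critical values is a closed continuous curve $\mathcal C$; proving it is \emph{simple} (so that ``the domain enclosed by $\mathcal C$'' is unambiguous) is done interval by interval, the natural tool being monotonicity of $J_\lambda$ on each piece together with the ordering of the $\lambda_i^\pm$ established in the previous lemma.

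Granting that $\mathcal C$ is a Jordan curve, let $D$ be the bounded component of $\R^2\setminus\mathcal C$ and $V=\R^2\setminus(\mathcal C\cup\{c_0,\dots,c_3\})$. Then $F(M)\cap V$ is closed in $V$ (as $F(M)$ is closed) and open in $V$ (as $\partial F(M)\subseteq\mathcal C\cup\{c_i\}$ is disjoint from $V$), hence a union of connected components of $V$; the unbounded component is excluded since $F(M)$ is bounded, and at least one bounded component occurs since $F(M)$ has nonempty interior. Because removing finitely many points does not disconnect $D$, we get $F(M)\cap V=D\setminus\{c_i\}$, and since $\mathcal C$ consists of critical values it lies in $F(M)$; therefore $F(M)=\overline D$, which is the stated conclusion.

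The degenerate endpoints are easier. For $t=0$, $H=z_1$ and $J=R_1z_1+R_2z_2$, so $F(M)=\{(R_1h+R_2s,h)\mid h,s\in[-1,1]\}$, exactly the parallelogram region. For $t=1$, $H=\langle(x_1,y_1,z_1),(x_2,y_2,z_2)\rangle\in[-1,1]$ and $J=R_1z_1+R_2z_2$; either specializing Proposition~\ref{prop:crit_corank_one} at $t=1$ or directly optimizing $J$ over pairs of unit vectors with prescribed inner product produces the two horizontal segments (from $v=\pm u$) and the two parabolic arcs $J=R_1R_2\lambda$, $H=\tfrac{R_1^2R_2^2\lambda^2-R_1^2-R_2^2}{2R_1R_2}$, after which the same topological argument identifies $F(M)$ with the enclosed domain. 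The main obstacle is not the topology at the end — that step is soft — but the bookkeeping for $0<t<1$: the heavy algebraic simplification giving $(J_\lambda,H_\lambda)$, and above all the verification that the arcs fit together at $c_1,c_2,c_3$ into a single \emph{simple} closed curve, which rests on the precise ordering of the $\lambda_i^\pm$ from the previous lemma and on monotonicity estimates for $J_\lambda$ and $H_\lambda$ on each subinterval.
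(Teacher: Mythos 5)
Your route is genuinely different from the paper's. You bound $\partial F(M)$ by the set of critical values and then run a global topological argument (Jordan curve theorem plus a clopen argument in $V=\R^2\setminus(\mathcal C\cup\{c_0,\dots,c_3\})$), whereas the paper slices by $J$: since the only critical points of $J$ are $m_0,\dots,m_3$ (Lemma \ref{lm:crit_points}), for every $E\neq J(m_i)$ the fiber $J^{-1}(E)$ is a smooth compact manifold, $H$ attains a minimum and a maximum on it, these extrema occur at corank-one critical points of $F$ (Proposition \ref{prop:crit_corank_one}), and for each such $E$ there are exactly two values of $H_\lambda$ with $J_\lambda=E$, namely that minimum and maximum. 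This fiberwise argument identifies the top and bottom of every vertical slice of $F(M)$ directly, so the paper never needs to know that $\mathcal C$ is a \emph{simple} closed curve; your argument does. On the plus side, your treatment of $t=0$ is complete and the reduction at $t=1$ is explicit, whereas the paper leaves both endpoint cases to the reader, and your soft topological endgame (openness of $F$ at regular values, components of $V$, boundedness, connectedness of $F(M)$ ruling out stray $c_i$ in the unbounded component) is correct as sketched.

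The gap is the step you flag yourself: everything after ``granting that $\mathcal C$ is a Jordan curve'' depends on the simplicity of $\mathcal C$ and on the arcs over the constituent intervals of $f^{-1}([-1,1])$ gluing correctly at $c_1,c_2,c_3$ (and at $c_0$ via $\lambda_0^{\pm}$ in the toric-type range), and you only name the intended tool (monotonicity of $\lambda\mapsto J_\lambda$ on each interval together with the ordering of the $\lambda_i^{\pm}$) without establishing it. This is not cosmetic: without simplicity, ``the compact domain enclosed by $\mathcal C$'' is not well defined and the clopen argument in $V$ has nothing to run on. It is also exactly the nontrivial content that, in the paper's version, is carried by the (structurally lighter) claim that each vertical line $J=E$ meets the corank-one critical values in exactly two points; some such counting/monotonicity statement must actually be proved in either approach, since the proposition asserts that \emph{all} the points $(J_\lambda,H_\lambda)$ lie on $\partial F(M)$, not merely that $\partial F(M)$ is contained in their union. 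So either carry out the monotonicity of $J_\lambda$ on each interval (equivalently, the two-preimages-per-vertical-line statement) or adopt the paper's slicing argument, which renders the Jordan-curve input unnecessary; the algebraic substitution giving $(J_\lambda,H_\lambda)$ is indeed routine and is treated at the same level of detail in the paper.
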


One can see what this image looks like in Figure \ref{fig:spectrum}. Of course, the first part of the proposition is not surprising, since it is easy to check that the system is toric if $R_1 = 1$, and toric up to a ``vertical scaling'' (that is by modifying the second factor of the symplectic form by a multiplicative constant) otherwise when $t = 0$, hence the image of the momentum map is a convex polygon \cite{Ati,GuiSte}. Observe also that the results of this section are consistent with what we found when studying the critical points of maximal corank of $F$. Indeed, when $t^- < t < t^+$, there are only three elliptic-elliptic points (corresponding to corners on the boundary of $F(M)$) and the boundary of $F(M)$ is the union of three parametrized curves, while for $0 \leq t < t^-$ and $t^+ < t \leq 1$ there are four elliptic-elliptic points and the boundary of $F(M)$ is the union of four parametrized curves.

\begin{proof}
We leave the cases $t=0$ and $t=1$ to the reader and assume that $t \notin \{ 0,1 \}$. As the image of a compact, connected manifold by a continuous function, $F(M)$ is compact and connected. We saw in the proof of Lemma \ref{lm:crit_points} that the only critical points of $J$ are $m_0, m_1, m_2$ and $m_3$. Hence, for $E \neq J(m_i)$, the level set $J^{-1}(E)$ is a smooth compact manifold, therefore $H$ admits a minimum and a maximum on $J^{-1}(E)$. The critical points of $H$ on $J^{-1}(E)$ are the critical points of corank one of $F$, thus they are given by Proposition \ref{prop:crit_corank_one}. A straightforward computation shows that their images by $J$ and $H$ are the expressions $J_{\lambda}$ and $H_\lambda$ written in the statement of the proposition, and for a given $E \neq J(m_i)$, there are exactly two values of $H_{\lambda}$ such that $J_{\lambda} = E$, the minimum and the maximum mentioned above. 
\end{proof}

\begin{dfn}
An integrable system $G = (g_1,g_2)$ on a compact connected four dimensional symplectic manifold $N$ is said to be of \strong{toric type} if there exists an effective Hamiltonian $\T^2$-action on $N$ whose momentum map is of the form $f \circ G$, where $f$ is a local diffeomorphism from $G(N)$ to its image.
\end{dfn}

\begin{cor}
\label{cor:semitor}
The system $F = (J,H)$ of coupled angular momenta forms an integrable system of toric type if $t < t^-$ or $t > t^+$, and a semitoric system with one focus-focus singularity if $t^- < t < t^+$. It is even toric when $t = 0$ and $R_1 = 1$.
\end{cor}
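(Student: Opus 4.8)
The plan is to go through the three definitions involved here --- integrable system, semitoric system, and system of toric type --- verifying each clause with the four preceding propositions, and then to invoke the structure theory of almost-toric systems for the toric-type conclusion. Throughout, the only new analytic input beyond what is already proved is a non-degeneracy check for the corank-one singularities.

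I would first dispose of integrability. Poisson commutativity $\{J,H\}=0$ is the first lemma of this section, and for the almost-everywhere linear independence of $X_J$ and $X_H$ it suffices to check that the critical set of $F$ has measure zero. By Lemma~\ref{lm:crit_points} the corank-two locus is the finite set $\{m_0,m_1,m_2,m_3\}$, and by Proposition~\ref{prop:crit_corank_one} the corank-one locus is, for every $t$, a finite union of pieces of dimension at most two inside the four-manifold $M$ (for each admissible value of the auxiliary parameter $\lambda$ the heights $z_1,z_2$ are determined and only one circle of points remains); hence $dF$ has maximal rank on an open dense set of full measure and $F$ is an integrable system.

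Next I would treat the two requirements on $J$ in the definition of semitoric. Since $M$ is compact, $J$ is automatically proper. The Hamiltonian flow of $z_i$ on $(\S^2,\pm R_i\,\omega_{\S^2})$ is the rotation about the vertical axis, and with the present normalisation $R_i z_i$ generates a rotation of period $2\pi$; therefore the flow of $X_J=X_{R_1 z_1}+X_{R_2 z_2}$ is the diagonal rotation of $\S^2\times\S^2$ about the two vertical axes, which is $2\pi$-periodic, defines an effective Hamiltonian $S^1$-action (for $0<\tau<2\pi$ the time-$\tau$ map moves any point whose first component is not a pole), and has $J$ as momentum map. For the singularities, the rank-zero ones are exactly $m_0,\dots,m_3$, which by Proposition~\ref{prop:focus} and the subsequent proposition on $m_1,m_2,m_3$ are non-degenerate, of elliptic-elliptic type except for $m_0$, which is of focus-focus type exactly when $t^-<t<t^+$ and degenerate at $t\in\{t^-,t^+\}$. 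For the corank-one points, Proposition~\ref{prop:crit_corank_one} together with the description of $F(M)$ shows that their images fill exactly the boundary curve $\mathcal{C}=\partial F(M)$; since the local image of $F$ near a corank-one point carrying a hyperbolic transverse block contains a neighbourhood of the critical value, no corank-one point can be transversally hyperbolic. This already gives that, for $t^-<t<t^+$, the system meets the definition of semitoric with $m_0$ as its unique focus-focus singularity, and that for $t<t^-$ or $t>t^+$ it is almost toric with no focus-focus point.

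It then remains to upgrade ``almost toric, no focus-focus'' to ``toric type'': such a system on a compact four-manifold has trivial affine monodromy, so by the structure theory of almost-toric fibrations, equivalently the semitoric classification \cite{PelVuSemi}, there is a local diffeomorphism straightening $F(M)$ into a Delzant polygon and the system is of toric type; and when $t=0$ one has $H=z_1$, so $F$ is the invertible linear map $(a,b)\mapsto(a+b,a/R_1)$ applied to the momentum map $(R_1 z_1,R_2 z_2)$ of the $\T^2$-action rotating the two spheres independently, hence the system is genuinely toric (and $F(M)$ is the parallelogram found in the previous proposition). The step I expect to be the most delicate is ruling out \emph{degenerate} corank-one singularities: the boundary argument above excludes a hyperbolic transverse block, but it does not by itself exclude degenerate transverse behaviour, so confirming non-degeneracy at every corank-one point requires computing the transverse Hessian --- equivalently the Hessian of the reduced Hamiltonian on the symplectic quotient $J^{-1}(E)/S^1$ --- at the points described explicitly in Proposition~\ref{prop:crit_corank_one}, which is elementary but laborious; a clean way to organise it is to note that on each regular level of $J$ the function $H$ has exactly two critical orbits (its minimum and maximum), reduce, and verify definiteness of the reduced Hessian there.
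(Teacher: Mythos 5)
Your proposal is correct and takes essentially the same route as the paper: Poisson commutation together with the fixed-point analysis, the $2\pi$-periodic rotation generated by $J$ giving the effective circle action, the identification of the corank-one critical points with the fiberwise extrema of $H$ on the levels of $J$ (hence elliptic-transverse, no hyperbolic blocks), and an appeal to the structure theory (the paper cites Corollary 3.5 of \cite{VNpoly}) to upgrade an almost-toric system with no focus-focus point to one of toric type, with the $t=0$ case being genuinely toric. The transverse non-degeneracy check at the corank-one points that you single out as the delicate step is not carried out explicitly in the paper either; it is dispatched there by the same extremum-on-$J$-levels consideration you describe.
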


\begin{proof}
We already saw that $J$ and $H$ Poisson commute, and that the critical points of corank two of $F$ are all non-degenerate. One readily checks that this is also true for its critical points of corank one, which are therefore of elliptic-transverse type by the above considerations (see Appendix B for more details). Hence $(J,H)$ is an integrable system; since no singularity has hyperbolic components, it is almost-toric. It is easy to check that the Hamiltonian flow of $J$ at time $s$ corresponds to the rotation of angle $s$ around the $z_1$-axis in the first factor and the $z_2$-axis in the second factor, thus $J$ generates an effective circle action, so the system is semitoric. By Corollary $3.5$ in \cite{VNpoly}, it is of toric type for $0 \leq t < t^-$ and $t^+ < t \leq 1$ because it has no focus-focus singularity.

\end{proof}

\section{Symplectic invariants in the $t = 1/2$ case}

It follows from the previous study that $F = (J,H)$ is always a simple semitoric system with one focus-focus value when $t=1/2$, since we always have $t^- < 1/2 < t^+$. As already mentioned earlier, the symplectic classification of these systems has been achieved by the second author and V\~u Ng\d{o}c \cite{PelVuSemi,PelVuConstr}; it involves five invariants that we quickly describe here for the sake of completeness.

\subsection{Description of the invariants}
\label{sect:symp_invariant}

As it is the case in our example, we will assume that $(M,\omega)$ is a compact connected symplectic manifold; in the non-compact case, the polygonal invariant is not exactly a polygon in the usual sense. Let $F = (J,H)$ be a simple semitoric system on $(M,\omega)$. The first invariant is extremely simple; it is the number $m_f$ of focus-focus critical values of $F$, which in our case is equal to one. Consequently, and since the notation becomes heavy when there is more than one focus-focus critical value, we will only state the definitions of the other invariants in the case $m_f = 1$. We only explain here the main ingredients appearing in these invariants; for a precise account on these, we refer the reader to \cite{PelVuSemi,SepeVN}.

\paragraph{The Taylor series invariant.} Let $m_0$ be the unique critical point of $F$ of focus-focus type and let $c_0 = F(m_0)$ be the corresponding critical value. Endowing $\R^4$ with coordinates $(\hat{u}_1,\hat{u}_2,\hat{\xi}_1,\hat{\xi}_2)$ and symplectic form $d\hat{u}_1 \wedge d\hat{\xi}_1 + d\hat{u}_2 \wedge d\hat{\xi}_2$, it follows from Eliasson's normal form theorem \cite{Elia} that there exist neighborhoods $U$ of $m_0$ in $M$ and $V$ of the origin in $\R^4$, a local symplectomorphism $\phi: V \to U$ sending the origin to $m_0$, and a local diffeomorphism $g = (g_1, g_2): (\R^2,0) \to (\R^2,0)$ with $\dpar{g_2}{y} > 0$ (note that this sign is important and was forgotten in \cite{PelVuSemi}) such that $F \circ \phi = g \circ q$, where the components of $q$ satisfy
\begin{equation} q_1 = \hat{u}_1 \hat{\xi}_2 - \hat{u}_2 \hat{\xi}_1, \quad q_2 = \hat{u}_1 \hat{\xi}_1 + \hat{u}_2 \hat{\xi}_2.  \label{eq:Eliasson_nf} \end{equation}
Hence there exists a global momentum map $G$ for the singular foliation defined by $F$ which agrees with $q \circ \phi^{-1}$ on $U$. Let us write $G = (G_1,G_2)$ and for $z \in \R^2 \simeq \C$, $\Lambda_z = G^{-1}(z)$. It follows from the above normal form that near $m_0$, the trajectories of the Hamiltonian flow of $g_1$ must be periodic, with primitive period $2\pi$. For $z \neq 0$, let $A$ be a point on $\Lambda_z$, and define the quantity $\tau_2(z) > 0$ as the smallest positive time it takes the Hamiltonian flow of $G_2$ to meet the trajectory of the Hamiltonian flow of $G_1$ passing through $A$. Let $\tau_1(z) \in \R / 2\pi \Z$ be the time that it takes to go back to $A$ from this meeting point following the flow of $G_1$. Observe that the two numbers $\tau_1(z), \tau_2(z)$ do not depend on the choice of $A \in \Lambda_z$. Now, let $\log$ be some determination of the complex logarithm. It was proved in \cite[Proposition $3.1$]{VNsemi} that $\sigma_1, \sigma_2$ defined as
\[ \sigma_1(z) = \tau_1(z) - \Im(\log z), \quad \sigma_2(z) = \tau_2(z) + \Re(\log z) \]
extend to smooth single-valued functions in a neighborhood of $z=0$ and that the differential form $\sigma = \sigma_1 dz_1 + \sigma_2 dz_2$ is closed. In fact, one must be very careful when translating the results of \cite{VNsemi}, because in this paper another convention was adopted, namely $q_1$ and $q_2$ where inverted. We may, and will, choose the lift of $\tau_2$ to $\R$ such that $\sigma_2(0)$ belongs to $[0,2\pi)$. Let $S$ be the unique smooth function defined near the origin in $\R^2$ such that $dS = \sigma$ and $S(0,0) = 0$; the \strong{Taylor series invariant} $S^{\infty} \in \R[[X,Y]]$ is the Taylor expansion of $S$ at the origin. It is of the form $S^{\infty} = a_1 X + a_2 Y + \sum_{i+j > 1} b_{ij} X^i Y^j$.

\paragraph{The polygonal invariant.} We consider the plane $\R^2$ with its standard affine structure and orientation. Let $\mathcal{T} \subset GL(2,\Z) \ltimes \R^2$ be the subgroup of integral-affine transformations leaving a vertical line invariant; in other words, $\mathcal{T}$ consists of integral-affine transformations obtained by composing a vertical translation with a transformation of the form $T^k$, $k \in \Z$, with
\[ T = \begin{pmatrix} 1 & 0 \\ 1 & 1 \end{pmatrix} \in GL(2,\Z). \]
Let $\mathrm{Vert}(\R^2)$ be the set of vertical lines in $\R^2$, and choose $\ell \in \mathrm{Vert}(\R^2)$; it divides the plane into two half-planes, $H^{\text{left}}$ on the left and $H^{\text{right}}$ on the right. Let $n \in \Z$. Fix an origin in $\ell$, and define the piecewise integral-affine transformation $t_{\ell}^n: \R^2 \to \R^2$ as the identity on $H^{\text{left}}$ and as $T^n$ on $H^{\text{right}}$. Now, let $\ell_0$ be the vertical line passing through $c_0$. Let $\mathcal{G} = \{-1,1\}$; for $\epsilon \in \mathcal{G}$, let $\ell_0^{\epsilon} \subset \ell_0$ be the vertical half-line starting at $c_0$ and extending upwards if $\epsilon = 1$ and downwards if $\epsilon = -1$. From this data, one can construct a rational convex polygon $\Delta$, that is a convex polygon whose edges are directed along vectors with rational coefficients, associated with $F$, as follows. Let $B = F(M)$ and let $B_r$ be the set of regular values of $F$, which is endowed with an integral-affine structure coming from action variables.

\begin{thm}[{\cite[Theorem $3.8$]{VNpoly}}]
For every $\epsilon \in \mathcal{G}$, there exists a homeomorphism $f = f_{\epsilon}: B \to \Delta$, where $\Delta = f(B) \subset \R^2$, such that
\begin{itemize}
\item $f_{|B \setminus \ell_0^{\epsilon}}$ is a diffeomorphism into its image,
\item $f_{|B_r \setminus \ell_0^{\epsilon}}$ sends the integral affine structure of $B_r$ to the standard integral affine structure of $\R^2$,
\item $f$ preserves $J$, i.e. is of the form $f(x,y) = (x,f_2(x,y))$,
\item $\Delta$ is a rational convex polygon.
\end{itemize}
\end{thm}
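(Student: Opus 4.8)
The plan is to reconstruct the argument behind the theorem of V\~u Ng\d{o}c: realise $f_\epsilon$ as the developing map of the integral-affine structure carried by $B_r$, normalised so that its first coordinate is $J$, after cutting $B$ along the half-line $\ell_0^\epsilon$ so as to trivialise the affine holonomy.

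First I would set up the local models. Over $B_r$ the Liouville--Arnold theorem yields local action coordinates, hence an integral-affine structure with period lattice $\Lambda \subset T^*B_r$, oriented by the symplectic orientation. Since $J$ generates an effective circle action and is globally defined, $dJ$ is a nonvanishing global section of the dual lattice bundle, so near every $b \in B_r$ one can choose action coordinates of the form $(J,K)$; any two such charts differ by an orientation-preserving transformation fixing the first coordinate and preserving $\Lambda$, that is by an element of $\mathcal{T}$, so the structure group of the atlas reduces to $\mathcal{T}$. Near an elliptic-transverse boundary point the local picture is that of a toric boundary face and near an elliptic-elliptic point that of a toric corner, so a chart $(J,K)$ extends smoothly up to $\partial B$ there and sends the local model onto a half-plane, resp.\ a quadrant, cut out by lines of rational slope. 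Finally, near $c_0$ I would use Eliasson's normal form (the symplectomorphism $\phi$ and the momentum map $G = (g_1,g_2)$ with $q_1 = \hat{u}_1\hat{\xi}_2 - \hat{u}_2\hat{\xi}_1$, $q_2 = \hat{u}_1\hat{\xi}_1 + \hat{u}_2\hat{\xi}_2$ as recalled above), together with the fact that $\sigma = \sigma_1\,dz_1 + \sigma_2\,dz_2$ is closed and single-valued near $0$, to show that the affine holonomy of a small loop encircling $c_0$ equals, in coordinates with $J$ as first coordinate, a conjugate of $T$; in particular it lies in $\mathcal{T}$.

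Next I would cut and develop. The half-line $\ell_0^\epsilon$ joins $c_0$ to $\partial B$, so $B \setminus \ell_0^\epsilon$, and hence $B_r \setminus \ell_0^\epsilon$, are simply connected. Fixing one chart and continuing it along paths in $B_r \setminus \ell_0^\epsilon$ then yields a single-valued map $f = f_\epsilon$ there, single-valuedness being exactly the triviality of the holonomy over a simply connected set. By the local models of the previous step, $f$ extends to a homeomorphism of $B \setminus \ell_0^\epsilon$ onto a region of $\R^2$, a diffeomorphism onto its image there, carrying the integral-affine structure over regular values to the standard one, and of the form $f(x,y) = (x,f_2(x,y))$ since every chart used has $J$ as first coordinate. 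Across $\ell_0^\epsilon$ the two determinations of $f$ differ by the monodromy $T$ (the orientation of the loop dictating which of $T^{\pm 1}$), and $T$ preserves the vertical line through $c_0$; gluing the two determinations by the piecewise-integral-affine map $t_{\ell_0}^{\pm 1}$ produces a genuine continuous map $f_\epsilon \colon B \to \R^2$, a homeomorphism onto its image $\Delta := f_\epsilon(B)$ and smooth away from $\ell_0^\epsilon$.

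It remains to check that $\Delta$ is a rational convex polygon, and this is where I expect the main obstacle. Rationality is immediate from the local models: each boundary edge, being elliptic-transverse, maps to a segment directed along a primitive integer vector, and $f_\epsilon(\ell_0^\epsilon)$ is vertical. Convexity, in contrast, is not local; it genuinely uses the global structure of the almost-toric fibration. I would derive it from the convexity theory for almost-toric systems in one of two ways: either (i) invoke the Leung--Symington classification of almost-toric four-manifolds to identify $\Delta$ with a Delzant-type polygon, or (ii) adapt the Atiyah--Guillemin--Sternberg argument to the cut system --- $J$ is proper with connected fibres, so $J(M)$ is an interval, and over each interior value $c$ the two extreme values of the second coordinate on $\partial \Delta$ are governed by the Duistermaat--Heckman function of the reduced space $J^{-1}(c)/S^1$, whose graph is piecewise linear with the right concavity, local convexity at vertices being checked on the toric corner models and global convexity following by connectedness. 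The hard part will be precisely this convexity statement; everything else is a careful but essentially routine assembly of the standard normal forms (Liouville--Arnold, Eliasson, the toric boundary models) together with the focus-focus monodromy computation.
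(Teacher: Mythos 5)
There is nothing in this paper to compare your argument against: the statement is quoted verbatim from V\~u Ng\d{o}c (the reference given as Theorem $3.8$ of \cite{VNpoly}) and the authors give no proof, using it only as a black box to define the polygonal invariant. Judged as a reconstruction of the cited result, your outline follows essentially the original strategy: reduce the structure group of the action-variable atlas to $\mathcal{T}$ using that $J$ is a global action (so charts of the form $(J,K)$ exist), cut $B$ along $\ell_0^{\epsilon}$ to kill the holonomy, develop the affine structure on the simply connected cut base, glue the two determinations across the cut by $t_{\ell_0}^{\pm 1}$ using that the focus-focus monodromy is conjugate to $T$ and preserves the vertical line through $c_0$, and handle the boundary via the elliptic-transverse and elliptic-elliptic local (toric) models. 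You are also right to single out convexity as the only genuinely global step; in the original proof it is obtained along the lines of your option (ii) --- local convexity at boundary points from the elliptic models together with the piecewise-linear behaviour of the boundary graphs (equivalently the Duistermaat--Heckman function of $J$, whose slope jumps at images of elliptic-elliptic points and focus-focus values are controlled, cf.\ the Theorem $5.3$ of \cite{VNpoly} used elsewhere in this paper) and a local-to-global convexity/properness argument --- rather than by invoking the Leung--Symington classification, though that route is also viable. Two small points to tighten if you were to write this out: the identification of the affine monodromy around $c_0$ with a conjugate of $T$ is a theorem in its own right (Duistermaat--type monodromy for a focus-focus fibre, with the power of $T$ given by the number of pinch points, here one, and the sign fixed by orientation conventions), so it should be proved or cited rather than extracted only from the closedness of $\sigma$; and one must check that the developed map extends continuously \emph{and injectively} across $\ell_0^{\epsilon}$ and its image half-line, which is where the choice of $\epsilon$ and the admissibility of the resulting weighted polygon enter.
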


Such a $\Delta$ is called a \strong{generalized toric moment polygon} for $(M,\omega,F)$, and $\mu = f \circ F$ is called a \strong{generalized toric momentum map} for $(M,\omega,F)$. This polygon $\Delta$, however, is not yet the invariant that we are trying to define since it is highly non unique. It depends on the choice of
\begin{itemize}
\item an initial set of action variables $f_0$ near a regular Liouville torus; if we choose a different one, $f$ will be composed on the left with an element $\tau$ of $\mathcal{T}$, and $\Delta$ will become $\tau(\Delta)$,
\item the choice of $\epsilon \in \mathcal{G}$; if we choose $\delta$ instead of $\epsilon$, $f$ will be composed on the left by $t_{\ell_0}^n$ with $n = (\epsilon - \delta)/2$, and $\Delta$ will become $t_{\ell_0}^n(\Delta)$.
\end{itemize}
A \emph{weighted polygon} is a triple of the form $\Delta_{\text{weight}} = \left( \Delta, \ell_{\lambda}, \epsilon \right)$ where $\Delta$ is a rational convex polygon, $\ell_{\lambda}$ is the vertical line $\{x=\lambda\} \subset \R^2$ and $\epsilon \in \mathcal{G}$. The group $\mathcal{G} \times \mathcal{T}$ acts on the set of weighted polygons via the formula
\[ (\delta,\tau) \cdot \left( \Delta, \ell_{\lambda}, \epsilon \right) = \left( t_{\ell_{\lambda}}^n(\tau(\Delta)), \delta \epsilon \right) \]
with $n = (\epsilon - \delta)/2$. The $\mathcal{G}$ part of this action may not preserve the convexity of $\Delta$, but when $\Delta$ is a generalized toric moment polygon for a semitoric system, it does. Hence we say that a weighted polygon $\left( \Delta, \ell_{\lambda}, \epsilon \right)$ is \strong{admissible} when the convexity of $\Delta$ is preserved by the $\mathcal{G}$-action, and we define $\mathcal{W}\text{Polyg}(\R^2)$ as the set of all admissible weighted polygons. Let $(\Delta,\ell_0,\epsilon)$ be an admissible weighted polygon obtained as in the above theorem; then the \strong{polygonal invariant} of $(M,\omega,F)$ is the $(\mathcal{G} \times \mathcal{T})$-orbit
\[ \left( \mathcal{G} \times \mathcal{T} \right) \cdot \left( \Delta, \ell_0, \epsilon \right) \in \mathcal{W}\text{Polyg}(\R^2) / (\mathcal{G} \times \mathcal{T}).  \]

\paragraph{The height invariant.} Let $\mu$ be a generalized toric momentum map for $(M,\omega,F)$, and let $\Delta = \mu(M)$ be the associated generalized toric moment polygon. Then $\mu(m_0)$ belongs to the intersection of $\ell_0$ with the interior of $\Delta$. The vertical distance
\[ h = \pi_2(\mu(m_0)) - \min_{p \in \Delta \cap \ell_0} \pi_2(p), \]
where $\pi_2: \R^2 \to \R$ is the projection to the second factor, does not depend on the choice of $\mu$, and is called the \strong{height invariant} of $(M,\omega,F)$. In fact, this height invariant has the following geometric interpretation, which we will use to compute it. Let $M^{\text{red}} = J^{-1}(J(m_0))/S^1$ be the reduced manifold with respect to the $S^1$-action generated by $J$; it is endowed with a canonical symplectic form $\omega^{\text{red}}$. The height invariant $h$ is equal to the volume of $\{H < H(m_0)\}$ in $M^{\text{red}}$ (this makes sense because $H$ is invariant under the $S^1$-action), with respect to $|\omega^{\text{red}}|/2\pi$.

\paragraph{The twisting index invariant.} We only sketch the description of the twisting index invariant, and refer the reader to \cite[Section $5.2$]{PelVuSemi} for more details. The key point is the existence of a privileged toric momentum map $\nu$ in a neighborhood of $m_0$. Now, let $\mu$ be a generalized toric momentum map for $(M,\omega,F)$, and let $(\Delta,\ell_0,\epsilon)$ be the corresponding weighted polygon; there exists an integer $k \in \Z$ such that $\mu = T^k \nu$ near $m_0$. This integer $k$ is called the twisting index of $(\Delta,\ell_0,\epsilon)$. If we compose $\mu$ on the left by an affine transformation $\tau \in \mathcal{T}$ with linear part $T^r$, the twisting index becomes $k+r$. So we consider the following action of $\mathcal{G} \times \mathcal{T}$ on $\mathcal{W}\text{Polyg}(\R^2) \times \Z$:
\[ (\delta,\tau) \cdot \left( \Delta,\ell_{\lambda},\epsilon,k \right) = \left( t_n(\tau(\Delta)), \ell_{\lambda}, \delta \epsilon, k + r \right) \]
where $n = (\epsilon - \delta)/2$ and $T_r$ is the linear part of $\tau$. Now let $(\Delta,\ell_0,\epsilon)$ be a weighted polygon for $(M,\omega,F)$ and let $k$ be its twisting index; the \strong{twisting index invariant} of $(M,\omega,F)$ is the $(\mathcal{G} \times \mathcal{T})$-orbit
\[ (\mathcal{G} \times \mathcal{T}) \cdot \left( \Delta,\ell_0,\epsilon,k \right) \in \left( \mathcal{W}\text{Polyg}(\R^2) \times \Z \right) / (\mathcal{G} \times \mathcal{T}) \]
Consequently, one can always find a weighted polygon for which the twisting index is zero; nevertheless, one should keep in mind that fixing the representative $\Delta$ fixes the twisting index.

Coming back to our problem, our goal is to compute some of these invariants for the system $(J,H)$ of coupled angular momenta when $t=1/2$:
\begin{equation} J = R_1 z_1 + R_2 z_2, \quad H = \frac{1}{2} \left( z_1 + x_1x_2 + y_1y_2 + z_1z_2 \right). \label{eq:fiber}\end{equation}
Actually, we will only compute the first two terms of the Taylor series invariants, and for some fixed value of the pair $(R_1,R_2)$; however, we will describe the method carefully so that one can compute these for other values of $(R_1,R_2)$.

\subsection{Parametrization of the singular fiber}

We start by parametrizing $F^{-1}(c_0)$, which is given by the points $(x_1,y_1,z_1,x_2,y_2,z_2) \in M$ such that
\[ R_1 z_1 + R_2 z_2 = R_1 - R_2, \quad  z_1 + x_1x_2 + y_1y_2 + z_1z_2 = 0. \]
Observe that neither $\{ (0,0,-1) \} \times \S^2 $ nor $ \S^2 \times \{ (0,0,1) \}$ intersects $F^{-1}(c_0)$. Indeed, the quantity $H(0,0,-1,x_2,y_2,z_2) = -(1+z_2)/2$ vanishes if and only if $z_2 = -1$, but in this case $J(0,0,-1,x_2,y_2,z_2) = -(R_1+R_2) \neq R_1 - R_2$; similarly, if $H(x_1,y_1,z_1,0,0,1) = z_1$ vanishes, then $J(x_1,y_1,z_1,0,0,1) = R_2 \neq R_1-R_2$. Therefore, we can identify $F^{-1}(c_0)$ with a subset of $\C^2$ by means of stereographic projections, from the south pole to the equatorial plane on the first factor, and from the north pole to the equatorial plane on the second factor. In other words, we consider the diffeomorphisms
\[ \pi_S: \S^2 \setminus \{ (0,0,-1) \} \to \C, \quad (x,y,z) \mapsto \frac{x-iy}{1+z}, \qquad \pi_N: \S^2 \setminus \{ (0,0,1) \} \to \C, \quad (x,y,z) \mapsto \frac{x+iy}{1-z}. \]
Then we get a diffeomorphism
\begin{align*} \Psi: & \left( \S^2 \setminus \{(0,0,-1)\} \right) \times \left( \S^2 \setminus \{(0,0,1)\} \right) \subset M & \to & \hspace{25mm} \C^2 \\
 & \hspace{25mm} (x_1,y_1,z_1,x_2,y_2,z_2) & \mapsto & \ \left( \pi_S(x_1,y_1,z_1), \pi_N(x_2,y_2,z_2) \right). \end{align*}
We want to describe the image $\Lambda_0:=\Psi(F^{-1}(c_0))$ of the singular fiber; note that one has $\Psi(m_0) = (0,0)$. Now, let $(z,w) \in \C^2$; it is standard that
\begin{equation} \pi_S^{-1}(z) = \frac{1}{1+|z|^2} \left( 2 \Re z, -2\Im z, 1 - |z|^2 \right), \quad \pi_N^{-1}(w) = \frac{1}{1+|w|^2} \left( 2 \Re w, 2\Im w, |w|^2 -1 \right). \label{eq:invstereo} \end{equation}
In view of Equation (\ref{eq:fiber}), in these coordinates, $J, H$ read 
\begin{equation} J =  \frac{R_1(1-|z|^2)}{1+|z|^2} + \frac{R_2(|w|^2-1)}{1+|w|^2}, \qquad H = \frac{(1-|z|^2)|w|^2 + 2 \Re(zw)}{(1+|z|^2)(1+|w|^2)}, \label{eq:JH_comp}\end{equation}
and a straightforward computation shows that $(z,w)$ belongs to $\Lambda_0$ if and only if
\begin{equation} \begin{cases}  R_2(1+|z|^2 )|w|^2 = R_1|z|^2(1+|w|^2),  \\  \vspace{-3mm} \\  (1-|z|^2)|w|^2 + 2 \Re(zw) = 0. \end{cases} \label{eq:lambda_0} \end{equation}

We will parametrize $\Lambda_0$ with the help of polar coordinates; if $z = \rho \exp(i\theta)$ and $w = \eta \exp(i\varphi)$, the system (\ref{eq:lambda_0}) becomes
\[ \begin{cases}  R_2(1+\rho^2)\eta^2 = R_1 \rho^2 (1+\eta^2), \\  \vspace{-3mm} \\  (1-\rho^2)\eta^2 + 2 \rho \eta \cos(\theta + \varphi) = 0. \end{cases} \]
By using the first equation and substituting $\eta$ into the second equation, we obtain
\begin{equation} \eta = \rho \sqrt{\frac{R_1}{R_2 + (R_2-R_1)\rho^2}}, \quad (1-\rho^2) \rho^2 \sqrt{\frac{R_1}{R_2 + (R_2-R_1)\rho^2}} + 2 \rho^2 \cos(\theta + \varphi) = 0. \label{eq:eta}\end{equation}
When $\rho \neq 0$, this becomes
\begin{equation} \cos(\theta + \varphi) = \frac{\rho^2 - 1}{2} \sqrt{\frac{R_1}{R_2 + (R_2-R_1)\rho^2}}. \label{eq:cos}\end{equation}
So necessarily, the right hand side of this equality belongs to $[-1,1]$, which is equivalent to the fact that $P(\rho^2) \leq 0$ where
\[ P(\tau) = R_1(\tau-1)^2 - 4 (R_2 + (R_2 - R_1) \tau)  = R_1 (\tau+1)\left( \tau + 1 - 4\frac{R_2}{R_1} \right). \]
So Equation (\ref{eq:cos}) can be satisfied only if $\rho$ belongs to $I = [0,\zeta]$, where $\zeta = \sqrt{\frac{4 R_2}{R_1} - 1}$, and when this is the case, we get
\[ \varphi = \varepsilon \arccos\left( \frac{\rho^2-1}{2} \sqrt{\frac{R_1}{R_2 + (R_2-R_1)\rho^2}} \right) - \theta \]
for $\varepsilon = \pm 1$. Consequently, we define two maps $S_{\varepsilon}: I \times \R / 2\pi \Z \to \C^2$, $\varepsilon = \pm 1$, by the formula 
\[ S_{\varepsilon}(\rho,\theta) = \left( \rho \exp(i\theta), \rho f(\rho)  \exp\left( i \left( \varepsilon \arccos\left( \frac{\rho^2-1}{2} f(\rho) \right) - \theta \right)  \right) \right), \quad  f(\rho) = \sqrt{\frac{R_1}{R_2 + (R_2-R_1)\rho^2}}, \]
and set $\Lambda_0^{\varepsilon} = S_{\varepsilon}(I \times \R / 2\pi \Z)$.

\begin{prop}
For $\varepsilon = \pm 1$, the map $S_{\varepsilon}$ is continuous, and is a diffeomorphism from $(0,\zeta) \times \R / 2\pi\Z$ to $S_{\varepsilon}((0,\zeta) \times \R / 2\pi\Z)$. Moreover, $\Lambda_0 = \Lambda_0^{-1} \cup \Lambda_0^1$ and $\Lambda_0^{-1} \cap \Lambda_0^1 = \{(0,0)\} \cup \mathcal{C}$ where 
\[ \mathcal{C} = \left\{ \left( \sqrt{\frac{4R_2}{R_1} - 1} \exp(i\theta), \frac{\sqrt{R_1(4R_2-R_1)}}{2R_2-R_1} \exp(-i\theta) \right), \ \theta \in \R / 2 \pi \Z \right\}.\]
\end{prop}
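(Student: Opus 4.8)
\emph{Proof strategy.} The plan is to study $S_\varepsilon$ one Cartesian factor at a time, exploiting that its first component is just the polar map $(\rho,\theta)\mapsto\rho e^{i\theta}$ on $\C$, which is already an embedding away from $\rho=0$. First I would record that $f$ is smooth on all of $I$, since $R_2+(R_2-R_1)\rho^2\ge R_1>0$ there, and that the argument $\frac{\rho^2-1}{2}f(\rho)$ of the arccosine lies in $[-1,1]$ for $\rho\in I$: this is exactly the inequality $P(\rho^2)\le 0$ established just before the statement, and since $P$ has roots $-1$ and $\zeta^2$, equality holds only at $\rho=\zeta$. Hence on the open cylinder $(0,\zeta)\times\R/2\pi\Z$ this argument lies in $(-1,1)$, where $\arccos$ is smooth, so $S_\varepsilon$ is smooth there; and since $\arccos$ is continuous on $[-1,1]$ while $\rho f(\rho)\to 0$ and $\rho e^{i\theta}\to 0$ as $\rho\to 0$ (so that $S_\varepsilon(0,\cdot)\equiv(0,0)$), the map $S_\varepsilon$ is continuous on all of $I\times\R/2\pi\Z$.

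Next I would prove that $S_\varepsilon$ restricts to a diffeomorphism from $(0,\zeta)\times\R/2\pi\Z$ onto its image. Injectivity is immediate because $(\rho,\theta)=(|z|,\arg z)$ can be read off from the first component of $S_\varepsilon(\rho,\theta)$. The map is an immersion there: since the differential of its first component, $d(\rho e^{i\theta})$, already has rank $2$ for $\rho>0$, so does $dS_\varepsilon$. To upgrade ``injective immersion'' to ``diffeomorphism onto the image'', I would observe that $r:(\C\setminus\{0\})\times\C\to(0,\infty)\times\R/2\pi\Z$, $(z,w)\mapsto(|z|,\arg z)$, is smooth and is a two-sided inverse of $S_\varepsilon$ on $S_\varepsilon\bigl((0,\zeta)\times\R/2\pi\Z\bigr)$; hence $S_\varepsilon$ is a homeomorphism, and in fact a diffeomorphism, onto its image.

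For the remaining two assertions I would go back to the characterization ``$(z,w)\in\Lambda_0\iff$ (\ref{eq:lambda_0})''. Writing $z=\rho e^{i\theta}$ and $w=\eta e^{i\varphi}$, the first equation of (\ref{eq:lambda_0}) forces $\eta=\rho f(\rho)$, i.e.\ (\ref{eq:eta}), and the second forces, when $\rho>0$, $\cos(\theta+\varphi)=\frac{\rho^2-1}{2}f(\rho)$, i.e.\ (\ref{eq:cos}); this admits a solution exactly when $\rho\in I$, and then $\theta+\varphi=\varepsilon\arccos\bigl(\frac{\rho^2-1}{2}f(\rho)\bigr)$ for $\varepsilon\in\{-1,1\}$, which says precisely that $(z,w)$ equals $S_1(\rho,\theta)$ or $S_{-1}(\rho,\theta)$; the case $\rho=0$ contributes only $(0,0)=S_\varepsilon(0,\cdot)$. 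Conversely, every point in the image of $S_\varepsilon$ satisfies (\ref{eq:lambda_0}) by construction, which gives $\Lambda_0=\Lambda_0^{-1}\cup\Lambda_0^1$. For the intersection, the inclusion $\supseteq$ is clear since $S_1(0,\cdot)=S_{-1}(0,\cdot)=(0,0)$ and, using $\arccos(1)=0$ together with the elementary computations $f(\zeta)=\frac{R_1}{2R_2-R_1}$ and $\zeta f(\zeta)=\frac{\sqrt{R_1(4R_2-R_1)}}{2R_2-R_1}$, one gets $S_1(\zeta,\theta)=S_{-1}(\zeta,\theta)=\bigl(\zeta e^{i\theta},\zeta f(\zeta)e^{-i\theta}\bigr)$, which is the point of $\mathcal{C}$ with angle $\theta$. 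For $\subseteq$, if $S_1(\rho,\theta)=S_{-1}(\rho',\theta')$ then comparing first components gives $\rho=\rho'$ and $\theta=\theta'$ (unless the point is $(0,0)$); if $\rho>0$, comparing second components forces $e^{i\arccos(c)}=e^{-i\arccos(c)}$ with $c=\frac{\rho^2-1}{2}f(\rho)$, hence $\arccos(c)\in\{0,\pi\}$, hence $c=\pm1$, hence $P(\rho^2)=0$, hence $\rho=\zeta$ and $c=1$ — so the point lies on $\mathcal{C}$.

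The only genuinely non-routine point is the passage from injective immersion to embedding for the non-compact cylinder $(0,\zeta)\times\R/2\pi\Z$; this is exactly what the explicit smooth left inverse $(z,w)\mapsto(|z|,\arg z)$ is for. Everything else is bookkeeping, the one recurring ingredient being the already-available fact that, on $I$, the arccosine argument $\frac{\rho^2-1}{2}f(\rho)$ hits $\pm1$ only at $\rho=\zeta$ — equivalently $P$ vanishes on $[0,\infty)$ only at $\zeta^2$ — which simultaneously confines $\rho$ to $I$, makes the two signs $\varepsilon=\pm1$ give distinct points for $\rho\in(0,\zeta)$, and pins the fold locus to $\mathcal{C}$.
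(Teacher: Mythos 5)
Your proposal is correct and follows essentially the same route as the paper: the paper's proof only spells out the intersection computation (writing a common point in both parametrizations, reducing to $\arccos\bigl(\tfrac{\rho^2-1}{2}f(\rho)\bigr)=0$ and hence $\rho=\zeta$), and declares the union, continuity and diffeomorphism statements "easily checked" — exactly the details you fill in via the polar-map left inverse and the characterization (\ref{eq:lambda_0}). Your extra care in allowing $\arccos\in\{0,\pi\}$ (ruling out $\pi$ since the argument equals $+1$ at $\rho=\zeta$) is a minor refinement of the same argument, not a different approach.
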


This means that $\Lambda_0$ consists of two cylinders glued along $(0,0)$ on one end and along $\mathcal{C}$ on the other end. Therefore, $\Lambda_0$ is a torus with a pinch at $(0,0)$ (see Figure \ref{fig:pinched}); of course, we already knew it, since a focus-focus critical fiber is always a pinched torus (see for instance \cite[Proposition $6.2$]{VN}).

\begin{figure}
\begin{center}
\includegraphics[scale=0.7]{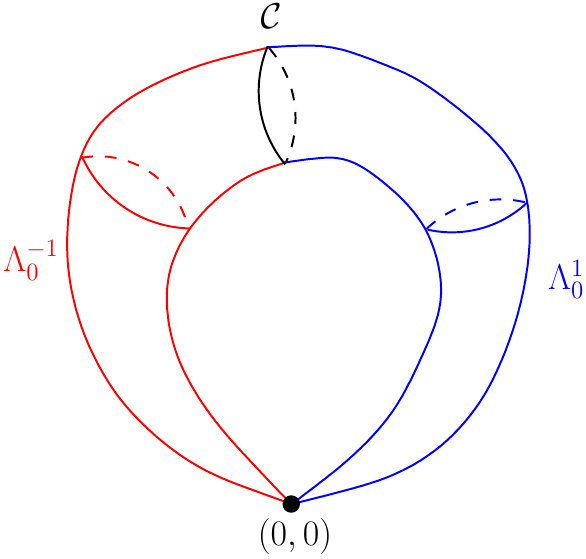}
\end{center}
\caption{The critical fiber $\Lambda_0$.}
\label{fig:pinched}
\end{figure}

\begin{proof}
The fact that $\Lambda_0 = \Lambda_0^{-1} \cup \Lambda_0^1$ comes from the above considerations. Let us prove that $\Lambda_0^{-1} \cap \Lambda_0^1 = \{(0,0)\} \cup \mathcal{C}$; let $(z,w) \in \C^2$ and assume that
\[ (z,w) = \begin{cases} (\rho_1 \exp(i\theta_1), \eta_1 \exp(i\varphi_1)) \in \Lambda_0^{-1}, \\ (\rho_2 \exp(i\theta_2), \eta_2 \exp(i\varphi_2)) \in \Lambda_0^1. \end{cases} \]
If $\rho_1 = 0$, then necessarily $\rho_2 = 0$, and we get that $\eta_1 = 0 = \eta_2$, so $(z,w) = (0,0)$. Otherwise, we have that $\rho_1 = \rho_2, \theta_1 = \theta_2, \eta_1 = \eta_2$ and $\varphi_1 = \varphi_2$. But then we have that 
\[ \varphi_1 = -\arccos\left( \frac{\rho_1^2-1}{2} f(\rho_1) \right) - \theta_1, \quad \varphi_2 = \arccos\left( \frac{\rho_1^2-1}{2} f(\rho_1) \right) - \theta_1, \]
therefore we obtain that $\arccos\left( \frac{\rho_1^2-1}{2} f(\rho_1) \right) = 0$. But we saw that this only happens when $\rho_1 = \zeta$, and in this case
\[ (z,w) = \left( \left(\sqrt{\frac{4R_2}{R_1} - 1}\right) \exp(i\theta_1), \frac{\sqrt{R_1(4R_2-R_1)}}{2R_2-R_1} \exp(-i\theta_1) \right), \]
i.e. $(z,w)$ belongs to $\mathcal{C}$. The other statements are easily checked.
\end{proof}

In what follows, we will also need to compute the Hamiltonian vector fields $X_J$ and $X_H$ of $J$ and $H$ on $F^{-1}(c_0) \setminus \{ m_0 \}$. But we will eventually want to use our parametrization and work on $\Lambda_0 \setminus \{(0,0)\}$; we will slightly abuse notation and use $J,H$ instead of $J \circ \Psi^{-1}, H \circ \Psi^{-1}$ and $X_J, X_H$ for the pushforwards of $X_J$ and $X_H$ by $\Psi$.

\begin{lm}
\label{lm:XJ_XH}
The Hamiltonian vector fields of $J$ and $H$ read
\[ X_J = i \left( -z \frac{\partial}{\partial z} + \bar{z} \frac{\partial}{\partial \bar{z}} + w \frac{\partial}{\partial w} - \bar{w} \frac{\partial}{\partial \bar{w}} \right), \ X_H = \frac{i}{2} \left( \lambda_1(z,w) \frac{\partial}{\partial z} - \overline{\lambda_1(z,w)} \frac{\partial}{\partial \bar{z}} + \lambda_2(z,w) \frac{\partial}{\partial w} - \overline{\lambda_2(z,w)} \frac{\partial}{\partial \bar{w}}  \right) \]
where the functions $\lambda_1, \lambda_2$ are defined as
\[ \lambda_1(z,w) = \frac{\bar{w}-2z|w|^2 - z^2 w}{R_1(1+|w|^2)} , \quad \lambda_2(z,w) = \frac{\bar{z} + w - w|z|^2 - z w^2}{R_2(1+|z|^2)}. \]
\end{lm}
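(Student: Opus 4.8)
The plan is to transport $\omega$, $J$ and $H$ to the coordinates $(z,\bar z,w,\bar w)$ via $\Psi$ and then solve the defining equation $df + \omega(X_f,\cdot) = 0$ block by block. In a stereographic chart $\zeta$ on $\S^2$ the area form $\omega_{\S^2}$ (total area $4\pi$) equals $\pm\frac{2i}{(1+|\zeta|^2)^2}\,d\zeta\wedge d\bar\zeta$, the sign recording whether the chosen projection preserves or reverses orientation. Pulling $\omega = -(R_1\omega_{\S^2}\oplus R_2\omega_{\S^2})$ back through $\pi_S^{-1}$ on the first factor and $\pi_N^{-1}$ on the second, one gets a form $a(z,\bar z)\,dz\wedge d\bar z + b(w,\bar w)\,dw\wedge d\bar w$ with $a,b$ purely imaginary, and keeping track of the orientations of $\pi_S,\pi_N$ one obtains $a = \frac{2iR_1}{(1+|z|^2)^2}$ and $b = \frac{2iR_2}{(1+|w|^2)^2}$. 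One can avoid the orientation discussion altogether by working through the Poisson bracket instead: since the excerpt already records $\{x_i,y_i\} = -z_i/R_i$ and its cyclic permutations, the identity $X_f\cdot g = \{f,g\}$ reduces everything to brackets of $J$ and $H$ with $z = (x_1-iy_1)/(1+z_1)$ and $w = (x_2+iy_2)/(1-z_2)$, computed by the Leibniz and quotient rules.

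Substituting the formulas for $\pi_S^{-1}(z)$ and $\pi_N^{-1}(w)$ recalled before the lemma, and using $x_1x_2+y_1y_2 = \frac{2(zw+\bar z\bar w)}{(1+|z|^2)(1+|w|^2)}$, one finds after collecting terms
\[ J = R_1\frac{1-|z|^2}{1+|z|^2} + R_2\frac{|w|^2-1}{1+|w|^2}, \qquad H = \frac{|w|^2(1-|z|^2) + zw + \bar z\bar w}{(1+|z|^2)(1+|w|^2)}. \]
Since the form above is block diagonal with purely imaginary coefficients, writing a real Hamiltonian vector field as $X_f = A\,\partial_z + \bar A\,\partial_{\bar z} + B\,\partial_w + \bar B\,\partial_{\bar w}$ one gets immediately $A = -\frac{(1+|z|^2)^2}{2iR_1}\,\partial_{\bar z}f$ and $B = -\frac{(1+|w|^2)^2}{2iR_2}\,\partial_{\bar w}f$. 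For $f = J$ one has $\partial_{\bar z}J = -\frac{2R_1 z}{(1+|z|^2)^2}$ and $\partial_{\bar w}J = \frac{2R_2 w}{(1+|w|^2)^2}$, hence $A = -iz$ and $B = iw$, which is the stated formula for $X_J$; it generates the rotations of the two spheres about their vertical axes, which also pins down the signs of $a,b$ a posteriori.

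For $f = H$ one computes $\partial_{\bar z}H$ and $\partial_{\bar w}H$ from the expressions above by the quotient rule and regroups the resulting monomials; they collapse to
\[ (1+|z|^2)^2\,\partial_{\bar z}H = \frac{\bar w - 2z|w|^2 - z^2 w}{1+|w|^2}, \qquad (1+|w|^2)^2\,\partial_{\bar w}H = \frac{\bar z + w - w|z|^2 - zw^2}{1+|z|^2}. \]
Plugging these into the expressions for $A$ and $B$ gives $A = \frac{i}{2}\lambda_1$ and $B = \frac{i}{2}\lambda_2$ with $\lambda_1,\lambda_2$ exactly as in the statement, which is the asserted formula for $X_H$.

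There is no real conceptual content here; the whole proof is bookkeeping, and it can go wrong in two places. The first is the sign in the expression of $\omega$ in the new coordinates: the formula for $\omega_{\S^2}$ in a stereographic chart is classical but orientation-sensitive, and $\pi_S$, $\pi_N$ do not have the same orientation behaviour, so I would nail the signs down by demanding that $X_J$ be the generator of the circle action the excerpt already describes, or simply bypass that step via the Poisson bracket. The second, genuinely laborious, point is the simplification of $\partial_{\bar z}H$ and $\partial_{\bar w}H$: $H$ is a quotient of quadratic polynomials in $z,\bar z,w,\bar w$, so each derivative yields on the order of ten monomials that must be cancelled carefully against one another to produce the compact numerators of $\lambda_1$ and $\lambda_2$.
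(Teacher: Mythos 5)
Your proposal is correct and follows essentially the same route as the paper: pull back $\omega$ through the stereographic chart to get $\frac{2iR_1\,dz\wedge d\bar z}{(1+|z|^2)^2}+\frac{2iR_2\,dw\wedge d\bar w}{(1+|w|^2)^2}$, rewrite $J$ and $H$ in $(z,w)$, and read off the coefficients of the Hamiltonian vector fields from the $\partial_{\bar z}$, $\partial_{\bar w}$ derivatives (the paper works with $\partial_z$, $\partial_w$ and uses reality, which is the same computation up to conjugation). Your simplified numerators for $\partial_{\bar z}H$ and $\partial_{\bar w}H$ agree with the functions $\lambda_1,\lambda_2$ in the statement, so the bookkeeping you flag as the delicate part indeed comes out right.
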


We can also simplify this when $(z,w)$ belongs to $\Lambda_0 \setminus \{(0,0)\}$.

\begin{lm}
\label{lm:XH_Lambda}
On $\Lambda_0 \setminus \{(0,0)\}$, we have that $ X_H = \frac{i}{2} \left( -\frac{z(z + \bar{w})}{R_2 \bar{w}} \frac{\partial}{\partial z} + \frac{\bar{z}(\bar{z}+w)}{R_2 w} \frac{\partial}{\partial \bar{z}} - \frac{w^2}{R_1 \bar{z}} \frac{\partial}{\partial w} + \frac{\bar{w}^2}{R_1 z} \frac{\partial}{\partial \bar{w}}  \right)$.
\end{lm}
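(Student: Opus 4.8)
The plan is to start from the general formulas for $X_H$ in Lemma~\ref{lm:XJ_XH} and simplify the functions $\lambda_1(z,w)$ and $\lambda_2(z,w)$ using the two defining equations of $\Lambda_0$ recorded in~(\ref{eq:lambda_0}). First I would rewrite the system~(\ref{eq:lambda_0}) in the two forms that will actually be used: from the first equation, $R_2(1+|z|^2)|w|^2 = R_1|z|^2(1+|w|^2)$, so that $R_1(1+|w|^2) = R_2(1+|z|^2)|w|^2/|z|^2$ (valid on $\Lambda_0 \setminus \{(0,0)\}$, where $z \neq 0$, hence also $w \neq 0$); and from the second, $(1-|z|^2)|w|^2 = -2\Re(zw)$, i.e. $(1-|z|^2)|w|^2 = -(zw + \bar z \bar w)$. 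These are the only two relations available, and everything must come out of them.

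Next I would simplify $\lambda_1$. We have $\lambda_1(z,w) = \dfrac{\bar w - 2z|w|^2 - z^2 w}{R_1(1+|w|^2)}$. The numerator factors as $\bar w - z|w|^2 - z(|w|^2 + zw) = \bar w - z|w|^2 - z w(\bar w \, w^{-1}\cdot\text{(something)})$; more cleanly, write $|w|^2 = w\bar w$ so the numerator is $\bar w - 2zw\bar w - z^2 w = \bar w(1 - zw) - zw(\bar w + z\bar{?})$. The cleaner route: factor $w$ partially out of the last two terms, $-2z|w|^2 - z^2 w = -zw(2\bar w + z)$, so the numerator equals $\bar w - zw(2\bar w + z)$. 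Hmm, that still mixes $\bar w$ and $w$. Instead I would use the second equation of~(\ref{eq:lambda_0}) to eliminate $|w|^2$: since $|w|^2 = \dfrac{-(zw+\bar z\bar w)}{1-|z|^2}$ when $|z|\neq 1$, substitute and clear denominators; alternatively, and this is the approach I expect to work best, multiply numerator and denominator of $\lambda_1$ by a suitable factor (e.g. $\bar z$ or $|z|^2$) and then invoke the first equation of~(\ref{eq:lambda_0}) to replace $1+|w|^2$ by $R_2(1+|z|^2)|w|^2 / (R_1|z|^2)$. Carrying the same book-keeping through for $\lambda_2(z,w) = \dfrac{\bar z + w - w|z|^2 - zw^2}{R_2(1+|z|^2)}$ — whose numerator is $\bar z + w(1 - |z|^2) - zw^2 = \bar z + w(1-|z|^2) - zw\cdot w$ — and using $(1-|z|^2)|w|^2 = -(zw+\bar z\bar w)$ after an appropriate multiplication by $\bar w$, the numerators should collapse to the monomials predicted by Lemma~\ref{lm:XH_Lambda}, namely $\lambda_1 \rightsquigarrow -\dfrac{z(z+\bar w)}{R_2 \bar w}$ and $\lambda_2 \rightsquigarrow -\dfrac{w^2}{R_1 \bar z}$ on $\Lambda_0 \setminus \{(0,0)\}$. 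Substituting these into the expression for $X_H$ from Lemma~\ref{lm:XJ_XH}, and noting that $-\overline{\lambda_1}$ and $-\overline{\lambda_2}$ give the $\partial_{\bar z}$ and $\partial_{\bar w}$ coefficients, yields exactly the claimed formula.

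I would present this as two short computations, one for $\lambda_1$ and one for $\lambda_2$, each a couple of lines of algebra justified by citing the two equations in~(\ref{eq:lambda_0}), then assemble. The main obstacle I anticipate is purely bookkeeping: the numerators are not symmetric in $z \leftrightarrow w$ or in holomorphic/antiholomorphic variables, so one must be careful about which of the two relations to apply and after multiplying by which conjugate factor, since a wrong choice produces an expression that still contains $|z|^2$ or $|w|^2$ and does not visibly reduce to a monomial. Once the right substitution is chosen — for $\lambda_1$ I expect to multiply through by $\bar w$ and use $(1-|z|^2)|w|^2 = -(zw+\bar z\bar w)$ together with the first equation to handle $1+|w|^2$; for $\lambda_2$, symmetrically, multiply by $\bar z$ — the cancellations are forced and the identity follows. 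There is also the minor point of checking that $z \neq 0 \Rightarrow w \neq 0$ on $\Lambda_0$, which is immediate from the first equation of~(\ref{eq:lambda_0}), so none of the denominators $R_1 z$, $R_2 \bar w$, $R_1 \bar z$, $R_2 w$ vanish on $\Lambda_0 \setminus \{(0,0)\}$.
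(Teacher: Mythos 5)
Your proposal is correct and takes essentially the same route as the paper: the paper also starts from Lemma \ref{lm:XJ_XH} and reduces $\lambda_1,\lambda_2$ on $\Lambda_0\setminus\{(0,0)\}$ using the two relations in (\ref{eq:lambda_0}) (concretely, it multiplies $\lambda_1$ by $\bar z/\bar z$ and $\lambda_2$ by $|w|^2/|w|^2$, applies the second relation to collapse the numerators to $-w(z+\bar w)(1+|z|^2)$ and $-zw^2(1+|w|^2)$, then uses the first relation to handle the factors $1+|w|^2$, $1+|z|^2$), arriving at exactly the expressions $-\frac{z(z+\bar w)}{R_2\bar w}$ and $-\frac{w^2}{R_1\bar z}$ you predict. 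The only divergence is the bookkeeping choice of which conjugate factor to multiply by, which you flag yourself and which does not affect the argument.
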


These two lemmas are proved in Appendix A.

\subsection{The Taylor series invariant}

The description of the Taylor series invariant can seem very complicated to work with; fortunately, one can use the following results to compute its first terms. Recall that there exist local symplectic coordinates $(\hat{u}_1,\hat{u}_2,\hat{\xi}_1,\hat{\xi}_2)$ (Eliasson coordinates) on a neighborhood of $m_0$ and a local diffeomorphism $g: (\R^2,0) \to (\R^2,0)$ such that $F = g \circ (q_1,q_2)$ where $q_1, q_2$ are as in Equation (\ref{eq:Eliasson_nf}). Let $\kappa_{1,0}$ and $\kappa_{2,0}$ be the differential forms defined near $m_0$ in $F^{-1}(c_0) \setminus \{ m_0 \}$ by the conditions
\[ \kappa_{1,0}(X_{q_1}) = -1, \quad \kappa_{1,0}(X_{q_2}) = 0, \quad \kappa_{2,0}(X_{q_1}) = 0, \quad \kappa_{2,0}(X_{q_2}) = -1. \]
Let $S^{\infty} = a_1 X + a_2 Y + \sum_{i+j > 1} b_{ij} X^i Y^j$ be the Taylor series invariant of $(M,\omega,F)$.

\begin{thm}[{\cite[Proposition $6.8$]{VN}, \cite[Theorem $2.5$]{PelVN}}]
Let $\gamma_0$ be a radial simple loop in $F^{-1}(c_0)$, i.e. a simple loop starting on the local unstable manifold at $m_0$ and coming back to $m_0$ via the local stable manifold. Then
\begin{equation} a_1 = \lim_{(s,t) \to (0,0)} \left( \int_{A_0 = \gamma_0(s)}^{B_0=\gamma_0(1-t)} \kappa_{1,0} + \mu_{B_0} - \nu_{A_0} \right), \quad  a_2 = \lim_{(s,t) \to (0,0)} \left( \int_{A_0 = \gamma_0(s)}^{B_0=\gamma_0(1-t)} \kappa_{2,0} + \ln(r_{A_0} \sigma_{B_0})  \right) \label{eq:radial_aj}\end{equation}
where for any point $C$ close to $m_0$ with Eliasson coordinates $(\hat{u}_{1,C},\hat{u}_{2,C},\hat{\xi}_{1,C},\hat{\xi}_{2,C})$, the coordinates $(r_C,\nu_C)$ (respectively $(\sigma_C,\mu_C)$) are the polar coordinates of $\hat{u}_{1,C} + i \hat{u}_{2,C}$ (respectively $\hat{\xi}_{1,C} + i \hat{\xi}_{2,C}$).
\end{thm}

At first glance, this seems easier to handle than the original definition, but these formulas involve Eliasson coordinates, which may be extremely difficult to compute. Nevertheless, the following lemma states that we only need to use a first order approximation of these coordinates.

\begin{lm}[{\cite[Lemma $2.13$]{PelVN}}]
\label{lm:linear_Elia}
The theorem remains true with linear Eliasson coordinates $(u_1,u_2,\xi_1,\xi_2)$ instead of $(\hat{u}_1,\hat{u}_2,\hat{\xi}_1,\hat{\xi}_2)$, i.e. local symplectic coordinates  such that the Hessian of $F$ at $m_0$ equals $\phi \circ (q_1, q_2)$ for some linear map $\phi = (\phi_1,\phi_2)$ with $\frac{\partial \phi_2}{\partial y} >0$, where $q_1 = u_1 \xi_2 - u_2 \xi_1$ and $q_2 = u_1 \xi_1 + u_2 \xi_2$.
\end{lm}

So we see that in order to derive the first terms in the Taylor series invariant, the first step is to compute linear Eliasson coordinates at the focus-focus critical point $m_0$. For this purpose, working with general parameters $R_1$ and $R_2$ leads to very complicated expressions, so it is better to fix their values once and for all. Here we choose $R_1 = 1, R_2 = 5/2$ because it simplifies the computations, but one could fix any other values and apply the following method. For our choice of parameters, the symplectic form and $J, H$ read
\[ \omega = -\left( \omega_{\S^2} \oplus \frac{5}{2} \omega_{\S^2}\right) , \quad J = z_1 + \frac{5}{2} z_2, \quad H = \frac{1}{2} \left( z_1 + x_1x_2 + y_1y_2 + z_1z_2 \right). \]

\paragraph{Linear Eliasson coordinates.} We look for symplectic coordinates $(u_1,u_2,\xi_1,\xi_2)$ on $\R^4$ satisfying the requirements of Lemma \ref{lm:linear_Elia}.

\begin{prop}
\label{prop:linear_eliasson}
Let $\phi: T_{m_0} M \to \R^4$ be the linear isomorphism given by the formula $\phi(x_1,x_2,y_1,y_2) = (u_1,u_2,\xi_1,\xi_2)$ with
\vspace{-5mm}
\[ \begin{cases} u_1 = x_1 + 3 y_1 + 5 x_2, \quad u_2 = 3 x_1 - y_1 - 5 y_2, \vspace{2mm} \\  \xi_1 = \frac{1}{6} x_1 + \frac{1}{12} x_2 + \frac{1}{4} y_2 , \quad \xi_2 = -\frac{1}{6} y_1 + \frac{1}{4} x_2 - \frac{1}{12} y_2. \end{cases} \]
Then $(\phi^{-1})^* \omega_{m_0} = du_1 \wedge d\xi_1 + du_2 \wedge d\xi_2$ and $\mathrm{Hess}(B \circ (F-F(m_0)) \circ \phi^{-1} = (q_1,q_2)$ where
\[ B = \begin{pmatrix} 1 & 0 \\ -\frac{1}{3} & \frac{10}{3} \end{pmatrix}. \]
\end{prop}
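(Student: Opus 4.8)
The plan is to exhibit an explicit linear isomorphism $\phi$ realizing the focus-focus model for $\mathrm{Hess}(F)(m_0)$ — these are exactly the linear Eliasson coordinates of Lemma \ref{lm:linear_Elia}, with $B$ playing the role of the linear map $\phi$ in that lemma — and then to check the two stated identities by direct substitution. Producing $\phi$ is the substantive part; the verification is bookkeeping.

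First I would specialize the computations from the proof of Proposition \ref{prop:focus} to $t = 1/2$, $R_1 = 1$, $R_2 = 5/2$: near $m_0$ the quadratic part of $J - J(m_0)$ is $-\tfrac{1}{2}(x_1^2 + y_1^2) + \tfrac{5}{4}(x_2^2 + y_2^2)$, that of $H - H(m_0)$ is $\tfrac{1}{4}(x_2^2 + y_2^2) + \tfrac{1}{2}(x_1x_2 + y_1y_2)$, and $\omega_{m_0} = -dx_1 \wedge dy_1 + \tfrac{5}{2}\, dx_2 \wedge dy_2$ in the basis associated to $(x_1,y_1,x_2,y_2)$. Since $m_0$ is of focus-focus type, $A = \Omega^{-1}d^2H(m_0)$ has four eigenvalues $\pm\alpha \pm i\beta$ with $\alpha,\beta \neq 0$; working over $\C$ I would compute the eigenvectors, form the two real $A$-invariant planes attached to $\{\lambda,\bar\lambda\}$ and $\{-\lambda,-\bar\lambda\}$, check that they are symplectically orthogonal, and from them build a symplectic basis $(u_1,u_2,\xi_1,\xi_2)$ adapted to case (3) of the Williamson trichotomy recalled above. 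In such coordinates $\mathrm{Hess}(F)(m_0)$ lies in the plane spanned by $q_1 = u_1\xi_2 - u_2\xi_1$ and $q_2 = u_1\xi_1 + u_2\xi_2$, so writing $\mathrm{Hess}(J\circ\phi^{-1}) = b_{11}q_1 + b_{12}q_2$ and $\mathrm{Hess}(H\circ\phi^{-1}) = b_{21}q_1 + b_{22}q_2$ determines $B^{-1} = (b_{ij})$, hence $B$; after fixing a convenient normalization among the several free parameters of the centralizer of the model, one lands on the formulas of the statement, with $B = \begin{pmatrix} 1 & 0 \\ \tfrac{1}{3} & -\tfrac{10}{3} \end{pmatrix}$.

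It then remains to verify the two claims for this explicit $\phi$. For the symplectic statement I would substitute the linear expressions for $u_1, u_2, \xi_1, \xi_2$ into $du_1 \wedge d\xi_1 + du_2 \wedge d\xi_2$, expand, and check that the coefficients of $dx_1 \wedge dx_2$, $dx_2 \wedge dy_1$, $dx_1 \wedge dy_2$ and $dy_1 \wedge dy_2$ all cancel while those of $dx_1 \wedge dy_1$ and $dx_2 \wedge dy_2$ are $-1$ and $\tfrac{5}{2}$, i.e. $\phi^*(du_1 \wedge d\xi_1 + du_2 \wedge d\xi_2) = \omega_{m_0}$. For the Hessian identity, note that because $m_0$ is a critical point of $F$ and $B$ is invertible, $\mathrm{Hess}(B \circ (F - F(m_0)) \circ \phi^{-1}) = (q_1,q_2)$ is equivalent to $\mathrm{Hess}(J)(m_0) = q_1 \circ \phi$ and $\mathrm{Hess}(H)(m_0) = \tfrac{1}{10}\, q_1 \circ \phi - \tfrac{3}{10}\, q_2 \circ \phi$, the second row of $B^{-1}$ being $(\tfrac{1}{10}, -\tfrac{3}{10})$; so I would substitute the linear formulas into $q_1$ and into $\tfrac{1}{10}q_1 - \tfrac{3}{10}q_2$, collect the ten quadratic monomials in $(x_1,y_1,x_2,y_2)$, and confront the result with the quadratic parts of $J - J(m_0)$ and $H - H(m_0)$ recalled above.

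The hard part is step one, the construction rather than the verification of $\phi$: the eigenvalues of $A$ are genuinely complex, so one must diagonalize over $\C$ and then descend to a real symplectic basis, and since the centralizer of the focus-focus model $(q_1,q_2)$ acts with several free parameters there is a whole family of admissible choices — the content of the proposition is that a clean representative exists, and selecting the normalization that yields the stated rational coefficients takes some care. Once the formulas are in hand, both checks are routine algebra.
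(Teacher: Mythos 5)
Your proposal is correct and follows essentially the same route as the paper: specialize to $t=1/2$, $R_1=1$, $R_2=5/2$, diagonalize $A=\Omega^{-1}d^2H(m_0)$ over $\C$, pass from the complex eigenvectors to a real symplectic basis adapted to the focus-focus normal form (the paper invokes the unique symplectic completion of $(Y_1,Y_2)$), read off $B$ from the expression of the Hessians in the span of $q_1,q_2$, and confirm the stated formulas by direct substitution. Your decomposition $\mathrm{Hess}(J)=q_1\circ\phi$, $\mathrm{Hess}(H)=\tfrac{1}{10}q_1\circ\phi-\tfrac{3}{10}q_2\circ\phi$ agrees with the paper's $\alpha=-\tfrac{3}{10}$, $\beta=\tfrac{1}{10}$ and with its closing double-check via the explicit quadratic forms, so the plan is sound.
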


One could check these claims directly, using the explicit formulas
\[ q_1 = -\frac{1}{2} (x_1^2 + y_1^2) + \frac{5}{4} (x_2^2 + y_2^2), \quad q_2 = \frac{1}{6} (x_1^2 + y_1^2) + \frac{5}{12} (x_2^2 + y_2^2) + \frac{5}{3} (x_1 x_2 + y_1 y_2), \]
and the fact that for our choice of parameters, the quadratic parts of $J$ and $H$ are respectively
\[ - \frac{1}{2} (x_1^2 + y_1^2) + \frac{5}{4} (x_2^2 + y_2^2), \qquad \frac{1}{4} (x_2^2 + y_2^2) + \frac{1}{2}(x_1x_2 + y_1 y_2), \]
see the proof of Proposition \ref{prop:focus}. However, this would not give any insight on how to obtain these coordinates, hence we describe the general method. The idea is to find a basis of $T_{m_0} M$ in which the matrix $A = \Omega^{-1} d^2H(m_0)$ (again, see the proof of Proposition \ref{prop:focus} for notation) becomes
\begin{equation} \begin{pmatrix} -\alpha & \beta & 0 & 0 \\ -\beta & -\alpha & 0 & 0 \\ 0 & 0 & \alpha & \beta \\ 0 & 0 & -\beta & \alpha \end{pmatrix} \label{eq:mat_focus}\end{equation}
for some $\alpha, \beta \in \R$; then $(u_1,u_2,\xi_1,\xi_2)$ will be the coordinates associated with this basis, and we will have $\mathrm{Hess}(J) \circ \phi^{-1} = q_1$ and $\mathrm{Hess}(H) \circ \phi^{-1} = \beta q_1 + \alpha q_2$. Consequently, the matrix $B$ in the proposition will be given as the inverse of $\begin{pmatrix} 1 & 0 \\ \beta & \alpha \end{pmatrix}$. The details of the computation are available in Appendix A.

\paragraph{Construction of a radial simple loop in $\Lambda_0$.} The second ingredient that we will need is a radial simple loop $\gamma_0$ in $\Lambda_0$, that we will construct as an integral curve of the radial vector field $X_{q_2}$. It follows from the previous proposition that $X_{q_2} = (-X_J + 10 X_H)/3$, hence Lemmas \ref{lm:XJ_XH} and \ref{lm:XH_Lambda} yield
\[ X_{q_2} = -\frac{i}{3} \left( \frac{z(2z+\bar{w})}{\bar{w}} \frac{\partial}{\partial z} -  \frac{\bar{z}(2\bar{z} + w)}{w} \frac{\partial}{\partial \bar{z}} +  \frac{w(\bar{z} + 5w)}{\bar{z}} \frac{\partial}{\partial w} - \frac{\bar{w}(z + 5\bar{w})}{z} \frac{\partial}{\partial \bar{w}} \right) \]
on $\Lambda_0 \setminus \{(0,0)\}$. Since we want to use the parametrization of $\Lambda_0$ that we obtained in the previous section, we need to express $X_{q_2}$ in polar coordinates with $z = \rho \exp(i \theta)$ and $w = \eta \exp(i \varphi)$. One has
\[ \frac{\partial}{\partial z} = \frac{\exp(-i\theta)}{2} \left( \frac{\partial}{\partial \rho} - \frac{i}{\rho} \frac{\partial}{\partial \theta} \right), \quad \frac{\partial}{\partial \bar{z}} = \frac{\exp(i\theta)}{2} \left( \frac{\partial}{\partial \rho} + \frac{i}{\rho} \frac{\partial}{\partial \theta} \right), \]
and similarly for $\frac{\partial}{\partial w}, \frac{\partial}{\partial \bar{w}}$. A straightforward computation using these relations yields the following.

\begin{lm}
On $\Lambda_0 \setminus \{(0,0)\}$, we have that
\[X_{q_2} = \frac{1}{3} \left( \frac{2\rho^2 \sin(\theta + \varphi)}{\eta} \frac{\partial}{\partial \rho} -  \left( 1 + \frac{2\rho \cos(\theta + \varphi)}{\eta} \right) \frac{\partial}{\partial \theta} + \frac{5 \eta^2 \sin(\theta + \varphi)}{\rho} \frac{\partial}{\partial \eta} - \left( 1 + \frac{5 \eta \cos(\theta + \varphi)}{\rho} \right) \frac{\partial}{\partial \varphi} \right). \]
\end{lm}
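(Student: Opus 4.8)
The statement is proved by a direct change of coordinates, starting from the expression for $X_{q_2}$ on $\Lambda_0\setminus\{(0,0)\}$ obtained just before the statement, which has the form
\[ X_{q_2} = \frac{i}{3}\left( c_1\frac{\partial}{\partial z} - \overline{c_1}\frac{\partial}{\partial\bar z} + c_2\frac{\partial}{\partial w} - \overline{c_2}\frac{\partial}{\partial\bar w}\right), \qquad c_1 = \frac{z(2z+\bar w)}{\bar w}, \quad c_2 = \frac{w(\bar z+5w)}{\bar z}. \]
No further use of the defining equations of $\Lambda_0$ is required at this point, since those have already been used to reach this form; what remains is a pure computation in polar coordinates. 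I would first record that on $\Lambda_0\setminus\{(0,0)\}$ one has $\rho\neq0$ and $\eta\neq0$: indeed, by Equation (\ref{eq:eta}) the positive factor $\sqrt{R_1/(R_2+(R_2-R_1)\rho^2)}$ relating $\eta$ to $\rho$ never vanishes, so $\eta=0$ exactly when $\rho=0$, that is, only at the excluded point; hence every denominator appearing below is nonzero.

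Next I would substitute $\frac{\partial}{\partial z}=\frac{e^{-i\theta}}{2}\bigl(\frac{\partial}{\partial\rho}-\frac{i}{\rho}\frac{\partial}{\partial\theta}\bigr)$ and $\frac{\partial}{\partial\bar z}=\frac{e^{i\theta}}{2}\bigl(\frac{\partial}{\partial\rho}+\frac{i}{\rho}\frac{\partial}{\partial\theta}\bigr)$ into the first two terms. Writing $d=c_1 e^{-i\theta}$, so that $\overline{c_1}\,e^{i\theta}=\bar d$, these two terms collapse to
\[ \frac{i}{6}\left( (d-\bar d)\frac{\partial}{\partial\rho} - \frac{i}{\rho}(d+\bar d)\frac{\partial}{\partial\theta}\right) = \frac{1}{3}\left( -\Im(d)\frac{\partial}{\partial\rho} + \frac{\Re(d)}{\rho}\frac{\partial}{\partial\theta}\right), \]
which already exhibits $X_{q_2}$ as a genuine (real) vector field. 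The key simplification is that multiplying $c_1$ by $e^{-i\theta}$ cancels the factor $z=\rho e^{i\theta}$ in its numerator, so that after substituting $z=\rho e^{i\theta}$, $w=\eta e^{i\varphi}$ one finds $d=\rho+\frac{2\rho^2 e^{i(\theta+\varphi)}}{\eta}$; hence $\Re(d)=\rho+\frac{2\rho^2\cos(\theta+\varphi)}{\eta}$ and $\Im(d)=\frac{2\rho^2\sin(\theta+\varphi)}{\eta}$, which reproduces exactly the $\frac{\partial}{\partial\rho}$ and $\frac{\partial}{\partial\theta}$ components stated in the lemma.

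The last two terms are treated in the same way: setting $e=c_2 e^{-i\varphi}$ absorbs the factor $w=\eta e^{i\varphi}$ and gives $e=\eta+\frac{5\eta^2 e^{i(\theta+\varphi)}}{\rho}$, so that $\Re(e)=\eta+\frac{5\eta^2\cos(\theta+\varphi)}{\rho}$ and $\Im(e)=\frac{5\eta^2\sin(\theta+\varphi)}{\rho}$, and their contribution is $\frac13\bigl(-\Im(e)\frac{\partial}{\partial\eta}+\frac{\Re(e)}{\eta}\frac{\partial}{\partial\varphi}\bigr)$. Summing the four pieces yields the claimed formula. There is no conceptual difficulty here --- this is precisely the ``tedious but straightforward computation'' announced before the statement --- and the only thing that needs care is the bookkeeping of signs and complex conjugates; the structural observation that each Wirtinger pair contributes only $\Re$ and $\Im$ of $c_j$ times the appropriate unit phasor, with the angular prefactor cancelling against the numerator, is what keeps the calculation transparent.
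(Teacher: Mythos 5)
Your proposal is correct and follows exactly the route the paper takes: it substitutes the polar-coordinate expressions for the Wirtinger derivatives into the formula $X_{q_2}=\tfrac{i}{3}\bigl(c_1\partial_z-\overline{c_1}\partial_{\bar z}+c_2\partial_w-\overline{c_2}\partial_{\bar w}\bigr)$ obtained just before the lemma, the paper merely labelling this step a ``tedious but straightforward computation''. Your bookkeeping device of reducing each Wirtinger pair to $\Re$ and $\Im$ of $d=c_1e^{-i\theta}$ and $e=c_2e^{-i\varphi}$, together with the remark that $\rho,\eta\neq 0$ away from $(0,0)$, correctly fills in the details and reproduces the stated coefficients.
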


With our choice of parameters, we parametrize $\Lambda_0$ by $S_{\varepsilon}: [0,3] \times \R / 2\pi\Z \to \C^2$, for $\varepsilon = \pm 1$ and set $\Lambda_0^{\varepsilon} = S_{\varepsilon}([0,3] \times \R / 2\pi\Z)$ with $S_{\varepsilon}(\rho,\theta) = \big( \rho \exp(i\theta), \eta(\rho,\theta) \exp\left( i \varphi(\rho,\theta) \right) \big)$ where $\eta$ and $\varphi$ satisfy
\[ \eta(\rho,\theta) = \rho \sqrt{\frac{2}{5+3\rho^2}}, \quad \varphi(\rho,\theta) = \varepsilon \arccos\left( \frac{\rho^2-1}{\sqrt{2(5+3\rho^2)}} \right) - \theta.  \]

\begin{lm}
\label{lm:Xq2_sing}
On $\Lambda_0^{\varepsilon} \setminus \{(0,0)\}$, we have that
\[ X_{q_2} = \frac{1}{3} \left( \varepsilon \rho \sqrt{(\rho^2+1)(9-\rho^2)} \frac{\partial}{\partial \rho} -  \rho^2 \frac{\partial}{\partial \theta}  + \frac{5\varepsilon \rho \sqrt{2(\rho^2+1)(9-\rho^2)}}{(5+3\rho^2)^{3/2}} \frac{\partial}{\partial \eta} - \frac{8 \rho^2}{5+3\rho^2}  \frac{\partial}{\partial \varphi} \right). \]
\end{lm}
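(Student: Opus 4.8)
The plan is simply to substitute the explicit parametrization $S_\varepsilon$ into the coordinate expression for $X_{q_2}$ obtained in the preceding lemma. That lemma gives, on $\Lambda_0\setminus\{(0,0)\}$,
\[ X_{q_2} = \frac{1}{3}\left( -\frac{2\rho^2\sin(\theta+\varphi)}{\eta}\frac{\partial}{\partial\rho} + \Big(1+\frac{2\rho\cos(\theta+\varphi)}{\eta}\Big)\frac{\partial}{\partial\theta} - \frac{5\eta^2\sin(\theta+\varphi)}{\rho}\frac{\partial}{\partial\eta} + \Big(1+\frac{5\eta\cos(\theta+\varphi)}{\rho}\Big)\frac{\partial}{\partial\varphi} \right), \]
so the whole task reduces to evaluating the three functions $\eta$, $\cos(\theta+\varphi)$ and $\sin(\theta+\varphi)$ along $S_\varepsilon$ and simplifying four expressions built from these and nested square roots.

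First I would record, from the definition of $S_\varepsilon$ with $R_1=1$, $R_2=5/2$, i.e. from the specializations of Equations (\ref{eq:eta}) and (\ref{eq:cos}), that $\eta(\rho,\theta)=\rho\sqrt{2/(5+3\rho^2)}$ and $\theta+\varphi(\rho,\theta)=\varepsilon\arccos\big(\tfrac{\rho^2-1}{\sqrt{2(5+3\rho^2)}}\big)$. Since cosine is even, $\cos(\theta+\varphi)=\tfrac{\rho^2-1}{\sqrt{2(5+3\rho^2)}}$; since sine is odd and $\arccos$ takes values in $[0,\pi]$, $\sin(\theta+\varphi)=\varepsilon\sqrt{1-\cos^2(\theta+\varphi)}$. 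The single algebraic identity that drives everything is
\[ 1-\frac{(\rho^2-1)^2}{2(5+3\rho^2)} = \frac{2(5+3\rho^2)-(\rho^2-1)^2}{2(5+3\rho^2)} = \frac{-\rho^4+8\rho^2+9}{2(5+3\rho^2)} = \frac{(9-\rho^2)(\rho^2+1)}{2(5+3\rho^2)}, \]
a quantity which is nonnegative for $\rho\in[0,3]$, so that $\sin(\theta+\varphi)=\varepsilon\sqrt{\tfrac{(9-\rho^2)(\rho^2+1)}{2(5+3\rho^2)}}$.

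It then remains to plug these four data into the coefficients above. Using $\tfrac{2\rho}{\eta}=\sqrt{2(5+3\rho^2)}$ one gets $\tfrac{2\rho}{\eta}\cos(\theta+\varphi)=\rho^2-1$, hence the $\partial/\partial\theta$-coefficient is $\rho^2$; similarly $\tfrac{5\eta}{\rho}\cos(\theta+\varphi)=\tfrac{5(\rho^2-1)}{5+3\rho^2}$ gives $\partial/\partial\varphi$-coefficient $\tfrac{8\rho^2}{5+3\rho^2}$; and for the two sine terms, using $\tfrac{\rho^2}{\eta}=\rho\sqrt{(5+3\rho^2)/2}$ and $\tfrac{\eta^2}{\rho}=\tfrac{2\rho}{5+3\rho^2}$ together with the identity above (and $10/\sqrt2=5\sqrt2$), one gets $-\varepsilon\rho\sqrt{(9-\rho^2)(\rho^2+1)}$ and $-\tfrac{5\varepsilon\rho\sqrt{2(\rho^2+1)(9-\rho^2)}}{(5+3\rho^2)^{3/2}}$ respectively. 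Collecting the four simplified coefficients and keeping the overall factor $1/3$ yields exactly the asserted formula.

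I do not expect a genuine obstacle here: the only points requiring care are the sign of $\sin(\theta+\varphi)$ — which is where the lone factor $\varepsilon$ in the $\partial/\partial\rho$ and $\partial/\partial\eta$ coefficients originates, while the $\partial/\partial\theta$ and $\partial/\partial\varphi$ coefficients depend only on $\cos(\theta+\varphi)$ and are therefore $\varepsilon$-independent — and the bookkeeping of the nested radicals via the factorization $2(5+3\rho^2)-(\rho^2-1)^2=(9-\rho^2)(\rho^2+1)$. One should also note that the resulting expression is valid on all of $\Lambda_0^\varepsilon\setminus\{(0,0)\}$, including the boundary circle $\mathcal C$ at $\rho=3$, where the radicands vanish and the $\partial/\partial\rho$ and $\partial/\partial\eta$ components of $X_{q_2}$ simply drop out.
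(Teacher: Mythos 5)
Your proposal is correct and takes essentially the same route as the paper: one evaluates $\cos(\theta+\varphi)=\tfrac{\rho^2-1}{\sqrt{2(5+3\rho^2)}}$ and $\sin(\theta+\varphi)=\varepsilon\sqrt{\tfrac{(\rho^2+1)(9-\rho^2)}{2(5+3\rho^2)}}$ along $S_\varepsilon$ (with the sign coming from $\arccos$ taking values in $[0,\pi]$) and substitutes into the polar-coordinate expression for $X_{q_2}$ from the preceding lemma. The only difference is that you carry out explicitly the radical bookkeeping and coefficient simplifications that the paper leaves to the reader.
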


\begin{proof}
Using the relation between $\theta$ and $\varphi$, we compute
\[ \cos(\theta + \varphi) = \frac{\rho^2-1}{\sqrt{2(5+3\rho^2)}}, \quad \sin^2(\theta + \varphi)  = 1 - \frac{(\rho^2-1)^2}{2(5+3\rho^2)} = \frac{(\rho^2+1)(9-\rho^2)}{2(5+3\rho^2)} \]
and we obtain, since $\theta + \varphi$ belongs to $[0,\pi]$ if $\varepsilon = 1$ and to $[-\pi,0]$ if $\varepsilon = -1$, that
\[ \sin(\theta + \varphi) = \varepsilon \sqrt{\frac{(\rho^2+1)(9-\rho^2)}{2(5+3\rho^2)}}. \]
We get the result by substituting these formulas in the expression obtained in the previous lemma.
\end{proof}

We define the loop $\gamma_0:[0,1] \to \Lambda_0$ as: 
\[ \gamma_0(t) = \begin{cases} \tilde{\gamma}_{-1}(6t) \qquad \mathrm{if} \ 0 \leq t \leq \frac{1}{2}, \\ \tilde{\gamma}_{1}(6(1-t)) \qquad \mathrm{if} \ \frac{1}{2} \leq t \leq 1. \end{cases} \]
where $\tilde{\gamma}_{\varepsilon}: I \to \Lambda_0^{\varepsilon}$ is given by $\tilde{\gamma}_{\varepsilon}(\rho) = S_{\varepsilon}\left( \rho, \theta_{\varepsilon}(\rho) \right) = \left(  \rho \exp(i\theta_{\varepsilon}(\rho)), \eta(\rho) \exp\left( i \varphi_{\varepsilon}(\rho) \right) \right)$ with
\begin{equation} \eta(\rho) = \rho \sqrt{\frac{2}{5+3\rho^2}}, \quad \theta_{\varepsilon}(\rho) = \frac{\varepsilon}{2} \arcsin\left(\frac{4-\rho^2}{5} \right) + \frac{\varepsilon \pi}{4}, \quad \varphi_{\varepsilon}(\rho) = \varepsilon \arccos\left( \frac{\rho^2-1}{\sqrt{2(5+3\rho^2)}} \right) - \theta_{\varepsilon}(\rho). \label{eq:etaphi}\end{equation} This means that $\gamma_0$ starts at $(0,0)$, goes through $\Lambda_0^{-1}$, then through $\Lambda_0^{1}$, and ends at $(0,0)$. Observe that this loop is well-defined since $\theta_{\varepsilon}(3) = 0$ and $S_{-1}(3,0) = S_1(3,0)$.

\begin{prop}
\label{prop:radial}
The curve $\gamma_0$ is an integral curve of $X_{q_2}$.
\end{prop}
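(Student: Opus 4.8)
The plan is to verify directly that $\gamma_0$, piece by piece, solves the ODE $\dot\gamma_0(t) = X_{q_2}(\gamma_0(t))$, using the explicit formula for $X_{q_2}$ on $\Lambda_0^\varepsilon\setminus\{(0,0)\}$ from the previous lemma together with the chain rule. Concretely, on the first half $0\le t\le \tfrac12$ the curve is $\gamma_0(t)=S_1(\rho(t),\theta(t))$ with $\rho(t)=6t$ and $\theta(t)=\tfrac12\arccos\!\bigl(\tfrac{\rho(t)^2-1}{\sqrt{2(5+3\rho(t)^2)}}\bigr)$; since $S_1$ also determines $\eta=\eta(\rho,\theta)$ and $\varphi=\varphi(\rho,\theta)$ via the parametrization, it suffices to check the two scalar identities $\dot\rho(t) = -\tfrac13\,\rho\sqrt{(\rho^2+1)(9-\rho^2)}\big|_{\rho=\rho(t)}\cdot(\text{sign})$ — wait, here $\varepsilon=1$, so $\dot\rho(t)$ should equal the $\tfrac{\partial}{\partial\rho}$-component of $X_{q_2}$ with $\varepsilon=1$ — and $\dot\theta(t)$ should equal the $\tfrac{\partial}{\partial\theta}$-component, namely $\tfrac13\rho^2$. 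The $\eta$ and $\varphi$ components are then automatically matched because $\eta,\varphi$ are functions of $(\rho,\theta)$ along $\Lambda_0^\varepsilon$ and the vector field was computed consistently with that constraint; alternatively one checks the $\varphi$-equation as a third scalar identity as a sanity check, using $\varphi = \varepsilon\arccos(\cdots)-\theta$.

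First I would treat the $\rho$-component: with $\varepsilon=1$ and $\rho(t)=6t$ one has $\dot\rho(t)=6$, so the required identity is $6 = -\tfrac13\rho\sqrt{(\rho^2+1)(9-\rho^2)}$ at $\rho=6t\in(0,3)$ — but the right side is negative, so in fact the sign convention must be read carefully: since $\gamma_0$ travels \emph{outward} from $\rho=0$ to $\rho=3$ on $\Lambda_0^1$, and $X_{q_2}$ points inward there (coefficient $-\varepsilon\rho\sqrt{\cdots}$), $\gamma_0$ must be an integral curve of $X_{q_2}$ only after the correct orientation is pinned down; the key point is that $\gamma_0$ as defined traces $\Lambda_0^1$ outward and $\Lambda_0^{-1}$ inward, and one simply checks that the \emph{speed} along $\rho$ matches $\tfrac13\rho\sqrt{(\rho^2+1)(9-\rho^2)}$ in absolute value — this is where I expect to have to be careful, matching the parametrization $t\mapsto 6t$ against the $\rho$-reparametrization induced by the flow, i.e. verifying $\tfrac{d\rho}{dt} = \mp\tfrac13\rho\sqrt{(\rho^2+1)(9-\rho^2)}$ is \emph{not} literally true for $\rho(t)=6t$, so more plausibly $\gamma_0$ is an integral curve only up to time reparametrization, OR the intended reading is that $\gamma_0$ is an integral curve of $X_{q_2}$ in the sense of tracing out the same oriented trajectory, which is all that formulas \eqref{eq:radial_a1}–\eqref{eq:radial_a2} require. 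I would therefore state and prove: the \emph{image} of $\gamma_0$ is the trajectory of $X_{q_2}$ through points near $m_0$, oriented from the unstable to the stable manifold, by checking that $\dot\gamma_0(t)$ is everywhere a positive multiple of $X_{q_2}(\gamma_0(t))$ on each half — equivalently that $\dot\theta(t)/\dot\rho(t)$ equals the ratio of the corresponding components of $X_{q_2}$, namely $\tfrac{\rho^2}{-\varepsilon\rho\sqrt{(\rho^2+1)(9-\rho^2)}} = \tfrac{-\varepsilon\rho}{\sqrt{(\rho^2+1)(9-\rho^2)}}$.

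So the core computation is: differentiate $\theta(t)=\tfrac{\varepsilon}{2}\arccos\!\bigl(u(\rho)\bigr)$ with $u(\rho)=\tfrac{\rho^2-1}{\sqrt{2(5+3\rho^2)}}$ with respect to $t$ via $\rho=6t$, obtaining $\dot\theta = \tfrac{\varepsilon}{2}\cdot\tfrac{-u'(\rho)}{\sqrt{1-u(\rho)^2}}\cdot 6$, then use the identity $1-u(\rho)^2=\tfrac{(\rho^2+1)(9-\rho^2)}{2(5+3\rho^2)}$ already recorded, compute $u'(\rho)$, simplify, and compare with $\dot\rho$ times $\tfrac{-\varepsilon\rho}{\sqrt{(\rho^2+1)(9-\rho^2)}}$; the two should agree, and in fact the scale should come out so that $\dot\gamma_0(t)=X_{q_2}(\gamma_0(t))$ exactly (the factor $6$ in $\rho(t)=6t$ and the $\tfrac13$ in $X_{q_2}$ combining to make the $\rho$-equation $\dot\rho = -2\,(\text{something})$ consistent) — I expect the hard part to be precisely this bookkeeping of constants, plus checking continuity and the matching $S_1(3,0)=S_{-1}(3,0)$ at $t=\tfrac12$ so that the two integral-curve pieces glue into one smooth trajectory. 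Finally I would note $\gamma_0$ starts and ends at $(0,0)=\Psi(m_0)$, approaching along the unstable manifold (as $t\to0^+$) and returning along the stable manifold (as $t\to1^-$), so it is a \emph{radial simple loop} in the sense required by the theorem on $a_1,a_2$.
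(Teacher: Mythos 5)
Your reading of the statement is sensible: the given parametrization cannot satisfy $\dot\gamma_0=X_{q_2}\circ\gamma_0$ literally, and indeed the paper only proves a collinearity statement, $X_{q_2} = -\tfrac{\varepsilon}{3}\rho\sqrt{(\rho^2+1)(9-\rho^2)}\,(S_{\varepsilon})_*\tfrac{\partial}{\partial\rho}$ along $\gamma_0$. Your reduction of collinearity to a single slope identity is also a legitimate test in principle: both $\dot\gamma_0$ and $X_{q_2}$ are tangent to $\Lambda_0^{\varepsilon}$, which is parametrized by $(\rho,\theta)$ with $\eta$ and $\varphi$ functions of $(\rho,\theta)$ there, so two such tangent vectors are proportional if and only if their $\rho$- and $\theta$-components are. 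The genuine gap is that the computation you defer with ``the two should agree'' does not close. Differentiating $\theta(\rho)=\tfrac{\varepsilon}{2}\arccos\bigl(\tfrac{\rho^2-1}{\sqrt{2(5+3\rho^2)}}\bigr)$ gives, as the paper itself records inside this very proof, $\tfrac{d\theta}{d\rho}=-\tfrac{\varepsilon(13+3\rho^2)\rho}{2(5+3\rho^2)\sqrt{(\rho^2+1)(9-\rho^2)}}$, whereas the ratio of the $\theta$- and $\rho$-components of $X_{q_2}$ taken from the preceding lemma is $-\tfrac{\varepsilon\rho}{\sqrt{(\rho^2+1)(9-\rho^2)}}$; these differ by the non-constant factor $\tfrac{13+3\rho^2}{2(5+3\rho^2)}$, equal to $1$ only at $\rho=1$. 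So the identity on which your whole argument hinges is not an identity, and for the same reason your fallback hope that the constants $6$ and $\tfrac13$ conspire to give the exact ODE is unavailable: carrying out your plan produces a mismatch, not a confirmation.

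This is exactly where your route departs from the paper's. The paper never compares the $\theta$-components separately: invoking $\varphi=\theta$ along $\gamma_0$, it merges the $\partial_\theta$- and $\partial_\varphi$-contributions, writing the velocity of $\gamma_0$ as $\tfrac{\partial}{\partial\rho}+\tfrac{\partial\eta}{\partial\rho}\tfrac{\partial}{\partial\eta}+2\tfrac{\partial\theta}{\partial\rho}\tfrac{\partial}{\partial\theta}$ and $X_{q_2}$ with the combined angular coefficient $\tfrac{(13+3\rho^2)\rho^2}{3(5+3\rho^2)}$, and then exhibits a single proportionality factor for the $\rho$-, $\eta$- and merged angular components. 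That check is blind to the direction $\tfrac{\partial}{\partial\varphi}-\tfrac{\partial}{\partial\theta}$, i.e.\ to the $X_J$-direction (recall $X_J=-\partial_\theta+\partial_\varphi$ in these coordinates), so what it actually establishes is collinearity of $\dot\gamma_0$ and $X_{q_2}$ modulo $X_J$; your finer componentwise test shows that the velocity of $\gamma_0$ differs from a multiple of $X_{q_2}$ precisely by a multiple of $X_J$, nonzero for $\rho\neq 0,1$. Consequently you cannot reach the stated conclusion along the lines you propose: you would have to either justify working modulo $X_J$ (note that the later computation of $a_2$ only uses $\kappa_{2,0}$, which annihilates $X_{q_1}=X_J$, so that weaker collinearity is what is really exploited there), or replace $\gamma_0$ by a curve whose slope actually solves $\tfrac{d\theta}{d\rho}=-\tfrac{\varepsilon\rho}{\sqrt{(\rho^2+1)(9-\rho^2)}}$. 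As written, the proposal's central step fails, and the discrepancy it uncovers is exactly the subtlety hidden in the paper's identification of $\partial_\theta$ with $\partial_\varphi$ along $\gamma_0$.
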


\begin{proof}
It suffices to prove that the vector field $(\tilde{\gamma}_{\varepsilon})_* \frac{\partial}{\partial \rho}$, which by construction is tangent to the image of $\gamma_0$, is colinear to $X_{q_2}$ at every point of $\gamma_0$. A straightforward computation using the relation between $\theta_{\varepsilon}$ and $\varphi_{\varepsilon}$ yields
\[ \eta'(\rho) =  \frac{5\sqrt{2}}{(5+3\rho^2)^{3/2}}, \quad \theta_{\varepsilon}'(\rho) = \frac{-\varepsilon \rho}{\sqrt{(\rho^2 + 1)(9 - \rho^2)}}, \quad \varphi_{\varepsilon}'(\rho) = \frac{-8 \varepsilon \rho}{(5 + 3\rho^2) \sqrt{(\rho^2 + 1)(9 - \rho^2)}},\]
which means that at the point $\gamma_{\varepsilon}(\rho)$,
\[ (\tilde{\gamma}_{\varepsilon})_* \frac{\partial}{\partial \rho}  = \frac{\partial}{\partial \rho} - \frac{\varepsilon \rho}{\sqrt{(\rho^2 + 1)(9 - \rho^2)}} \frac{\partial}{\partial \theta} +  \frac{5\sqrt{2}}{(5+3\rho^2)^{3/2}} \frac{\partial}{\partial \eta} - \frac{8 \varepsilon \rho}{(5 + 3\rho^2) \sqrt{(\rho^2 + 1)(9 - \rho^2)}} \frac{\partial}{\partial \varphi}. \]
Comparing this with the expression for $X_{q_2}$ displayed in the previous lemma, we obtain that 
\begin{equation} X_{q_2} =\frac{\varepsilon}{3} \rho \sqrt{(\rho^2+1)(9-\rho^2)} \ (\tilde{\gamma}_{\varepsilon})_* \frac{\partial}{\partial \rho}. \label{eq:colinear} \end{equation}
\end{proof}

One final step before computing the invariants $a_1$, $a_2$ is to express the linear Eliasson coordinates of points on the image of $\gamma_0$ in polar coordinates.

\begin{lm}
\label{lm:polar_eliasson}
Let $\varepsilon = \pm 1$ and let $m = (\rho \exp(i\theta_{\varepsilon}),\eta \exp(i\varphi_{\varepsilon})) \in \Lambda_0^{\varepsilon} \cap \gamma_0$ be a point close to $m_0$, with $\eta$, $\theta_{\varepsilon}$ and $\varphi_{\varepsilon}$ as in Equation (\ref{eq:etaphi}). Then the linear Eliasson coordinates $(u_1, u_2, \xi_1, \xi_2)$ of $m$ satisfy
\[ u_1 + i u_2 = \frac{2\rho \left( (1 + 3i) \exp(i\theta_{\varepsilon}) + g(\rho) \exp(-i\varphi_{\varepsilon})  \right)}{1+\rho^2} , \quad \xi_1 + i \xi_2 = \frac{\rho\left( 10 \exp(i\theta_{\varepsilon})  +  (1 + 3i) g(\rho) \exp(-i\varphi_{\varepsilon}) \right)}{30(1+\rho^2)}   \]
where $g(\rho) = \sqrt{2(5+3\rho^2)}$. In particular, 
\[ |u_1 + i u_2 |^2 = \frac{8 \rho^2  \left( 9 + 4 \rho^2 - 3 \varepsilon \sqrt{(\rho^2 + 1)(9 - \rho^2)} \right)}{(1 + \rho^2)^2}, \quad |\xi_1 + i \xi_2|^2 =  \frac{\rho^2 \left( 9 + 4 \rho^2 + 3 \varepsilon \sqrt{(\rho^2 + 1)(9 - \rho^2)} \right)}{45(1 + \rho^2)^2}.  \]
\end{lm}

\begin{proof}
The first part follows from Proposition \ref{prop:linear_eliasson} and Equation (\ref{eq:invstereo}) giving $(x_j, y_j, z_j)$ in terms of $z, w$. The second part follows from the expressions of $\cos(\theta_{\varepsilon} + \varphi_{\varepsilon})$ and $\sin(\theta_{\varepsilon} + \varphi_{\varepsilon})$ in terms of $\rho$, see the proof of Lemma \ref{lm:Xq2_sing}.
\end{proof}

\paragraph{Computation of $a_1$.} We begin by computing $a_1$. In order to do so, we introduce two points $A = (\rho \exp(i\theta_{-1}),\eta \exp(i\theta_{-1})) \in \Lambda_0^{-1} \cap \gamma_0$ and $B = (\rho \exp(i\theta_{1}),\eta \exp(i\varphi_{1})) \in \Lambda_0^{1} \cap \gamma_0$ for $\rho > 0$ small enough (using the notation $\theta_{\pm 1}$ from Equation (\ref{eq:etaphi})), and write their linear Eliasson coordinates as
\[ u_{1,A} + i u_{2,A} =  r_A(\rho) \exp(i\nu_A(\rho)), \quad \xi_{1,B} + i \xi_{2,B} = \sigma_B(\rho) \exp(i \mu_B(\rho)). \]
Since $\kappa_{1,0}(X_{q_2}) = 0$, Equation (\ref{eq:radial_aj}) yields $a_1 = \lim_{\rho \to 0} (\mu_B(\rho) - \nu_A(\rho))$.

\begin{prop}
\label{prop:a1}
For our choice of parameters $t = 1/2, R_1 = 1, R_2 = 5/2$, the term $a_1$ in the Taylor series invariant satisfies $a_1 = \arctan(\frac{9}{13})$. 
\end{prop}

\begin{proof}

It follows from Lemma \ref{lm:polar_eliasson} that $ u_{1,A} + i u_{2,A} = C_A(\rho) z(\rho)$ and $\xi_{1,B} + i \xi_{2,B} = C_B(\rho) w(\rho) $
with $C_A(\rho), C_B(\rho) > 0$ for small enough $\rho > 0$ and, since $\theta_{-1} = -\theta_1$ and $\varphi_{-1} = - \varphi_1$, 
\[ z(\rho) = (1 + 3i) \exp(-i\theta_1(\rho)) + g(\rho) \exp(i\varphi_1(\rho)) , \quad w(\rho) =  10 \exp(i\theta_1(\rho))  +  (1 + 3i) g(\rho) \exp(-i\varphi_1(\rho)). \]
Using the relations $g(0) = \sqrt{10}$, $\sqrt{10} \exp(i\arccos(-\frac{1}{\sqrt{10}})) = - 1 + 3i$ and (\ref{eq:etaphi}), one readily checks that
\[ z(0) = 6 i \exp(-i\theta_1(0)), \quad w(0) = 6(3-i) \exp(i\theta_1(0)). \]
Consequently, $\nu_A(0) = \frac{\pi}{2} - \theta_1(0)$, $\mu_B(0) = -\arctan(\frac{1}{3}) + \theta_1(0)$ and
\[ a_1 = \mu_B(0) - \nu_A(0) = 2 \theta_1(0) - \frac{\pi}{2} -\arctan\left(\frac{1}{3}\right) \]
Since $2 \theta_1(0) = \arcsin(\frac{4}{5}) + \frac{\pi}{2} = \arctan(\frac{4}{3}) + \frac{\pi}{2}$, the $\arctan$ addition formula yields 
$a_1 = \arctan(\frac{9}{13})$.
\end{proof}

\paragraph{Computation of $a_2$.} In order to compute $a_2$, we start by computing the integral term in Equation (\ref{eq:radial_aj}); since $\kappa_{2,0}(X_{q_2}) = -1$, Equation (\ref{eq:colinear}) implies that
\[ \int_{A}^{B} \kappa_{2,0} = \int_{\rho}^{3} \frac{3 \ d\tau}{\tau \sqrt{(\tau^2+1)(9-\tau^2)}} - \int_3^{\rho} \frac{3 \ d\tau}{\tau \sqrt{(\tau^2+1)(9-\tau^2)}},  \]

\begin{lm}
We have that
\begin{equation} \int_{A}^{B} \kappa_{2,0} = - \ln 5 - 2 \ln \rho + \ln\left( 4 \rho_1^2 + 9 + 3 \sqrt{(\rho^2+1)(9-\rho^2)} \right). \label{eq:int_kappa}\end{equation}
\end{lm}

\begin{proof}
It suffices to prove that 
\[ I = \int_{\rho_1}^3 \frac{d\rho}{\rho\sqrt{(\rho^2+1)(9-\rho^2)}} = \frac{1}{6} \left( \ln\left( 4 \rho_1^2 + 9 + 3 \sqrt{(\rho_1^2+1)(9-\rho_1^2)} \right) - 2 \ln \rho_1 - \ln 5 \right). \]
 The successive changes of variables $u = \rho^2$ and $v=u-4$ yield
\[ I = \int_{\rho_1^2}^9 \frac{du}{2u \sqrt{-u^2+8u+9}} =  \int_{\rho_1^2}^9 \frac{du}{2u \sqrt{25-(u-4)^2}} = \int_{\rho_1^2-4}^5 \frac{dv}{2(v+4) \sqrt{25-v^2}}. \]
Using the change of variables $v = 5 \sin \theta$, we can rewrite this integral as
\[ I = \int_{\arcsin\left( \frac{\rho_1^2 - 4}{5} \right)}^{\frac{\pi}{2}} \frac{d\theta}{2(4 + 5\sin \theta)}. \]
This integral can be computed using the usual change of variables $t = \tan(\theta/2)$:
\[ I = \int_{f(\rho_1)}^1 \frac{dt}{4t^2 + 10t + 4} = \int_{f(\rho_1)}^1 \frac{dt}{3(2t+1)} - \int_{f(\rho_1)}^1 \frac{dt}{6(t+2)} = \frac{1}{6} \ln \left( \frac{f(\rho_1)+2}{2f(\rho_1)+1} \right), \]
where $f(\rho_1)$ reads, using the identity $x\tan(\arcsin(x)/2) = 1 - \sqrt{1-x^2}$,
\[ f(\rho_1) = \tan\left( \frac{\arcsin\left( \frac{\rho_1^2 - 4}{5} \right)}{2} \right) = \frac{5-\sqrt{(\rho_1^2+1)(9-\rho_1^2)}}{\rho_1^2-4}. \]
Consequently, we obtain that
\[ \frac{f(\rho_1)+2}{2f(\rho_1)+1} = \frac{-3 + 2\rho_1^2 - \sqrt{(\rho_1^2+1)(9-\rho_1^2)}}{6 + \rho_1^2 - 2 \sqrt{(\rho_1^2+1)(9-\rho_1^2)} } = \frac{4 \rho_1^2 + 9 + 3 \sqrt{(\rho_1^2+1)(9-\rho_1^2)}}{5\rho_1^2}.\]
\end{proof}

It remains to compute the logarithmic term in (\ref{eq:radial_aj}). 

\begin{prop}
\label{prop:a2}
For our choice of parameters $t = 1/2, R_1 = 1, R_2 = 5/2$, the term $a_2$ in the Taylor series invariant satisfies
\[ a_2 = \frac{7}{2} \ln 2 + 3 \ln 3 - \frac{3}{2} \ln 5. \]
\end{prop}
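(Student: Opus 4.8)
The plan is to feed the already-evaluated integral~(\ref{eq:int_kappa}) (with $\rho_1=\rho_2=\rho$) into formula~(\ref{eq:taylor_a2}) and to compute the logarithmic correction $\ln(r_A\sigma_B)$ in closed form using Lemma~\ref{lm:polar_eliasson}, after which letting $\rho\to 0$ will produce $a_2$. Since $r_A=|u_{1,A}+iu_{2,A}|$ and $\sigma_B=|\xi_{1,B}+i\xi_{2,B}|$, everything reduces to taking the moduli of the two complex numbers furnished by that lemma.

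First I would abbreviate $K=\sqrt{2(5+3\rho^2)}$ and record the two identities $\cos(2\theta_A)=(\rho^2-1)/K$ (which is how $\theta_A$ was defined) and $\sin(2\theta_A)=\sqrt{(9-\rho^2)(1+\rho^2)}/K$; the latter comes from the algebraic identity $K^2-(\rho^2-1)^2=(9-\rho^2)(1+\rho^2)$ together with $2\theta_A\in[0,\pi]$, which selects the nonnegative root. Expanding $|10e^{i\theta_A}+(1+3i)Ke^{-i\theta_A}|^2$, the diagonal terms contribute $100+10K^2$ and the cross term contributes $20K\bigl(\cos(2\theta_A)+3\sin(2\theta_A)\bigr)=20\bigl((\rho^2-1)+3\sqrt{(9-\rho^2)(1+\rho^2)}\bigr)$, so that
\[ r_A^2=\frac{\rho^2}{45(1+\rho^2)^2}\Bigl(4\rho^2+9+3\sqrt{(\rho^2+1)(9-\rho^2)}\Bigr).\]
The same manipulation applied to $\xi_{1,B}+i\xi_{2,B}$ (using $\theta_B=-\theta_A$) gives
\[ \sigma_B^2=\frac{8\rho^2}{(1+\rho^2)^2}\Bigl(4\rho^2+9+3\sqrt{(\rho^2+1)(9-\rho^2)}\Bigr).\]

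The key observation is that the factor $4\rho^2+9+3\sqrt{(\rho^2+1)(9-\rho^2)}$ occurring here is exactly the one in~(\ref{eq:int_kappa}). Hence
\[ \ln(r_A\sigma_B)=\tfrac32\ln 2-\tfrac12\ln 45+2\ln\rho-2\ln(1+\rho^2)+\ln\Bigl(4\rho^2+9+3\sqrt{(\rho^2+1)(9-\rho^2)}\Bigr),\]
and when this is added to~(\ref{eq:int_kappa}) the two $\pm 2\ln\rho$ terms cancel, leaving
\[ \int_A^B\kappa_{2,0}+\ln(r_A\sigma_B)=\tfrac32\ln 2-\tfrac12\ln 45-\ln 5-2\ln(1+\rho^2)+2\ln\Bigl(4\rho^2+9+3\sqrt{(\rho^2+1)(9-\rho^2)}\Bigr).\]
Letting $\rho\to 0$, the last bracket tends to $18$ and $\ln(1+\rho^2)\to 0$, so $a_2=\tfrac32\ln 2-\tfrac12\ln 45-\ln 5+2\ln 18$, which rearranges, using $45=3^2\cdot 5$ and $18=2\cdot 3^2$, into $\tfrac72\ln 2+3\ln 3-\tfrac32\ln 5$.

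The only delicate point is the modulus computation for $r_A$ and $\sigma_B$: one must be scrupulous about the sign of $\sin(2\theta_A)$ and about verifying $K^2-(\rho^2-1)^2=(9-\rho^2)(1+\rho^2)$, since it is precisely this identity that makes the radicals in $r_A$, $\sigma_B$ coincide with the one coming from~(\ref{eq:int_kappa}) and hence makes the final cancellation work. Beyond this the argument is purely computational, and no analytic subtlety arises, the existence of the limit in~(\ref{eq:taylor_a2}) being guaranteed by the theorem quoted above.
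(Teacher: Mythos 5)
Your proposal is correct and follows essentially the same route as the paper: it plugs the evaluated integral (\ref{eq:int_kappa}) into (\ref{eq:taylor_a2}), computes $r_A^2$ and $\sigma_B^2$ from Lemma \ref{lm:polar_eliasson} (arriving at the same expressions, with the same common factor $9+4\rho^2+3\sqrt{(\rho^2+1)(9-\rho^2)}$ producing the cancellation of the $\ln\rho$ terms), and takes the limit $\rho\to 0$. Your explicit justification of the sign of $\sin(2\theta_A)$ and of the identity $2(5+3\rho^2)-(\rho^2-1)^2=(9-\rho^2)(1+\rho^2)$ only makes explicit what the paper leaves implicit.
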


\begin{proof}
It follows from Lemma \ref{lm:polar_eliasson} that
\[ r_A^2 = \frac{8 \rho^2}{(1 + \rho^2)^2} \left( 9 + 4 \rho^2 + 3 \sqrt{(\rho^2 + 1)(9 - \rho^2)} \right), \quad \sigma_B^2 =  \frac{\rho^2}{45(1 + \rho^2)^2} \left( 9 + 4 \rho^2 + 3 \sqrt{(\rho^2 + 1)(9 - \rho^2)} \right). \]
Consequently, we obtain that
\[ \ln(r_A \sigma_B) = \frac{3}{2} \ln 2 - \ln 3 - \frac{1}{2} \ln 5 + 2 \ln \rho - 2 \ln(1 + \rho^2) + \ln\left(9 + 4 \rho^2 + 3 \sqrt{(\rho^2 + 1)(9 - \rho^2)}\right); \]
this equation together with Equation (\ref{eq:int_kappa}) gives
\[  \int_A^B \kappa_{2,0} + \ln(r_{A}\sigma_{B}) = \frac{3}{2} \ln 2 - \ln 3 - \frac{3}{2} \ln 5 + 2 \ln\left(9 + 4 \rho^2 + 3 \sqrt{(\rho^2 + 1)(9 - \rho^2)}\right) - 2 \ln(1 + \rho^2) \]
which implies, by taking the limit when $\rho$ goes to zero, that $a_2 = \frac{7}{2} \ln 2 + 3 \ln 3 - \frac{3}{2} \ln 5$, as announced.
\end{proof}

\subsection{The polygonal invariant}

Next we compute the polygonal invariant of the system (see Section \ref{sect:symp_invariant} for notation).

\begin{prop}
\label{prop:polygon}
For $t = 1/2$, the polygonal invariant is the $(\mathcal{G} \times \mathcal{T})$-orbit consisting of the two following rational convex polygons:
\begin{enumerate}
\item the parallelogram $\Delta_1$ with vertices at $(-(R_1 + R_2),0), (R_1-R_2,2R_1), (R_1 + R_2, 2R_1)$ and $(R_2 - R_1,0)$, with $\epsilon = 1$,
\item the trapezoid $\Delta_2$ with vertices at $(-(R_1 + R_2),0), (R_1 + R_2,0), (R_2 - R_1,-2R_1)$ and $(R_1-R_2,-2R_1)$, with $\epsilon = -1$.
\end{enumerate}
These polygons are depicted in Figure \ref{fig:polygons}.
\end{prop}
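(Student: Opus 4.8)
The plan is to run the recipe of \cite[Theorem~3.8]{VNpoly} quoted above, feeding it the description of $F$ obtained in Section~\ref{sect:classical}. First I would record the global picture at $t=1/2$: by the Corollary of Section~\ref{sect:classical} the system is simple semitoric with a single focus-focus value $c_0=(R_1-R_2,0)$, and the description of $F(M)$ shows that $F(M)$ is the compact region bounded by three smooth arcs meeting at the three elliptic-elliptic values $c_1=(-(R_1+R_2),0)$, $c_2=(R_2-R_1,-1)$ and $c_3=(R_1+R_2,1)$, with $c_0$ in the interior and $J(c_1)<J(c_0)<J(c_2)<J(c_3)$. In particular the vertical line $\ell_0=\{J=R_1-R_2\}$ through $c_0$ meets the upper arc (joining $c_1$ to $c_3$) and the lower arc joining $c_1$ to $c_2$; this is the datum on which the cases $\epsilon=1$ and $\epsilon=-1$ will differ.

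Next, since every singularity other than $c_0$ is elliptic (elliptic-elliptic at the $c_i$, elliptic-transverse along the arcs by Proposition~\ref{prop:crit_corank_one} and the Corollary), the integral-affine structure of $B_r=F(M)\setminus\{c_0\}$ is flat away from $c_0$, with monodromy around $c_0$ generated by the transvection $T$ of Section~\ref{sect:symp_invariant}; this is the standard focus-focus monodromy, and its eigendirection can also be read off from the linear Eliasson coordinates of Proposition~\ref{prop:linear_eliasson}. Hence, after removing a vertical half-line $\ell_0^{\epsilon}$ issued from $c_0$, the region $F(M)\setminus\ell_0^{\epsilon}$ is flat and simply connected and develops, via a map $\mu=(J,K)$ preserving $J$, onto a rational convex polygon $\Delta$. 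Each of the three boundary arcs develops onto a straight segment, hence onto an edge of $\Delta$, with the exception of the arc crossed by $\ell_0^{\epsilon}$, which the development breaks into two edges whose direction vectors differ by $T$.

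It then remains to pin down these edges. At each elliptic-elliptic point $m_i$ one reads off from the Hessians of $J$ and $H$ (computed, or computable exactly as in Section~\ref{sect:classical}) a Williamson basis of elliptic quadratic forms, hence the two primitive integer edge directions of $\Delta$ at $\mu(c_i)$; imposing that the first component of the developing map be $J$, that the vertices have the prescribed first coordinates $-(R_1+R_2),R_2-R_1,R_1+R_2$, that the $T$-jump occur where $\ell_0^{\epsilon}$ crosses the relevant arc, and that $\Delta$ be convex, determines everything. Carrying this out and normalizing $\mu(c_1)=(-(R_1+R_2),0)$, one finds that for $\epsilon=1$ (cut running upward, breaking the $c_1$--$c_3$ arc) the vertices close up to the parallelogram $\Delta_1$ with the stated vertices, with $\mu(m_0)$ strictly inside $\ell_0\cap\Delta_1$, while for $\epsilon=-1$ (cut running downward, breaking the $c_1$--$c_2$ arc) one gets the trapezoid $\Delta_2$. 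Finally one checks that both weighted polygons $(\Delta_1,\ell_0,1)$ and $(\Delta_2,\ell_0,-1)$ are admissible and, being produced from the same system by the two choices of $\epsilon$, lie in a single $(\mathcal{G}\times\mathcal{T})$-orbit, which is therefore the polygonal invariant; as a consistency check one can also verify directly that $(\Delta_2,\ell_0,-1)$ is the image of $(\Delta_1,\ell_0,1)$ under the relevant element of $\mathcal{G}\times\mathcal{T}$.

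The main obstacle is the global bookkeeping in the last two steps: organizing the development so that the local edge data at $c_1,c_2,c_3$, the position of $c_0$, and the monodromy $T$ are mutually consistent, i.e.\ locating the ``fake'' vertex on $\ell_0$ correctly and getting the direction in which $T$ acts right -- this is exactly what separates the two values of $\epsilon$ and makes $\Delta_1$ a parallelogram while $\Delta_2$ is a genuine trapezoid. Establishing that the developed images of the arcs are straight, with no spurious vertices, rests on the flatness statement above; and to be fully rigorous one will likely want to compute the second action variable $K$ along one suitable loop explicitly, which is where the parametrization of the singular fiber $\Lambda_0$ and the Hamiltonian vector fields $X_J,X_H$ of Section~\ref{sect:symp_invariant} enter. (Alternatively one could try to obtain $\Delta_1$ by performing a nodal trade on the toric polygon of the system for $t$ slightly below $t^-$, but this requires first normalizing that toric polygon and then the precise nodal-trade formula, so it is not obviously shorter.)
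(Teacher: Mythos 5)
Your plan is essentially the paper's proof: the paper likewise takes the polygon construction of \cite{VNpoly}, fixes the abscissas of the corners at the images of $m_1,m_2,m_3$ and at the fake corner on $\ell_0$, and pins down the slopes from local data at the elliptic\--elliptic fixed points — packaged as the isotropy weights $(1,1)$, $(1,-1)$, $(-1,-1)$ of $J$ and the slope\--change formula of Theorem $5.3$ in \cite{VNpoly} — together with the jump by $T$ across the cut, carried out for $\epsilon=1$ and left to the reader for $\epsilon=-1$. Your only off-target concern is the suggested computation of a second action variable along a loop: the slope\--change formula plus the prescribed first coordinates and the normalization by the $\mathcal{G}\times\mathcal{T}$-action already determine both polygons, so no action integrals are needed.
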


\begin{proof}
We use Theorem $5.3$ in \cite{VNpoly}, which allows us to compute the difference between the slopes of the top and bottom edges at a vertex $x$ of such a convex polygon in terms of the number of focus-focus critical points and the isotropy weights of $J$ at the elliptic-elliptic critical points. These isotropy weights are defined as follows; near a critical point $m$ of elliptic-elliptic type, there exists local symplectic coordinates $(q_1,q_2,p_1,p_2)$ in which $J$ can be written as
\[ J = J(m) + a \left( \frac{q_1^2+p_1^2}{2} \right) + b \left( \frac{q_2^2+p_2^2}{2} \right) + \mathcal{O}(3). \]
The numbers $a,b$ are the isotropy weights of $J$ at $m$: let us compute them in our case. One readily checks that
\[ J = -(R_1 + R_2) + R_1 \left( \frac{x_1^2 + y_1^2}{2} \right) + R_2 \left( \frac{x_2^2 + y_2^2}{2} \right) + O(3) \]
near $m_1$; and since $\omega = R_1 dx_1 \wedge dy_1 + R_2 dx_2 \wedge dy_2$ at $m_1$, the coordinates $q_i = \sqrt{R_i} y_i$, $p_i = \sqrt{R_i} x_i$ are symplectic. In these coordinates,
\[ J = -(R_1 + R_2) + \frac{q_1^2 + p_1^2}{2}  + \frac{q_2^2 + p_2^2}{2}+ O(3), \]
hence the isotropy weights are $(1,1)$. Similar computations show that the isotropy weights of $J$ at $m_2$ and $m_3$ are $(1,-1)$ and $(-1,-1)$ respectively.

We can now compute the polygonal invariant. Let us first choose $\epsilon = 1$, which means that we construct the polygon by introducing a corner above $c_0$. We may assume that $c_1$ is fixed, so the first vertex of the polygon is $A = (-(R_1+R_2),0)$. We may also assume that the bottom edge $e_1$ leaving $A$ is an horizontal segment going rightwards. By the discussion above, we know that the difference between the slope of the other edge $e_2$ leaving $A$ and the slope of $e_1$ (the latter being zero) is equal to one. Therefore the slope of $e_2$ is equal to one; the other vertex $B$ of $e_2$ is the intersection of $e_2$ with the vertical line defined by the abscissa of $c_0$. This means that $B = (R_1-R_2,2R_1)$. The difference between the slope of the other edge $e_3$ leaving $B$ and the slope of $e_2$ is equal to $-1$, hence $e_3$ is horizontal. Its other vertex $C$ has the same abscissa as $m_3$, so we must have $C = (R_1+R_2,2R_1)$. The difference between the slope of $e_3$ and the slope of the other edge $e_4$ leaving $C$ is equal to $-1$, thus $e_4$ has slope $1$. Finally, $e_4$ meets $e_1$ at $(R_2-R_1,0)$, and we obtain the parallelogram described above.

The case $\epsilon = -1$ follows the same lines and is left to the reader.
\end{proof}

\begin{figure}[H]
\begin{center}
\subfigure[$\Delta_1$, $\epsilon=1$.]{\includegraphics[scale=0.4]{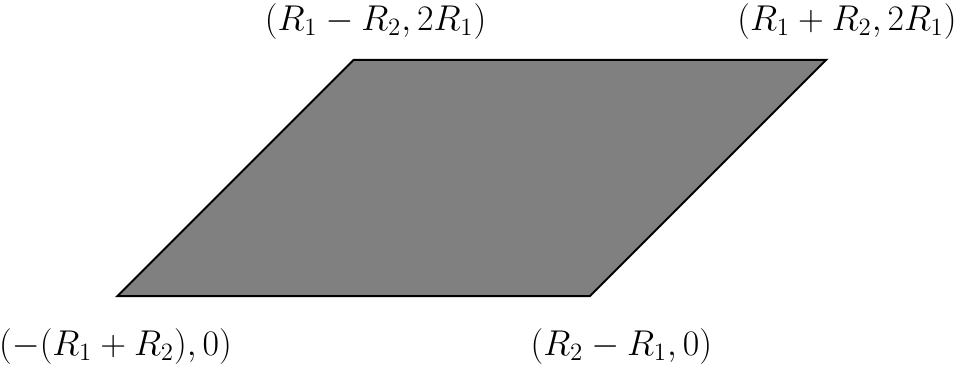} }
\hspace{1cm}
\subfigure[$\Delta_2$, $\epsilon=-1$.]{\includegraphics[scale=0.4]{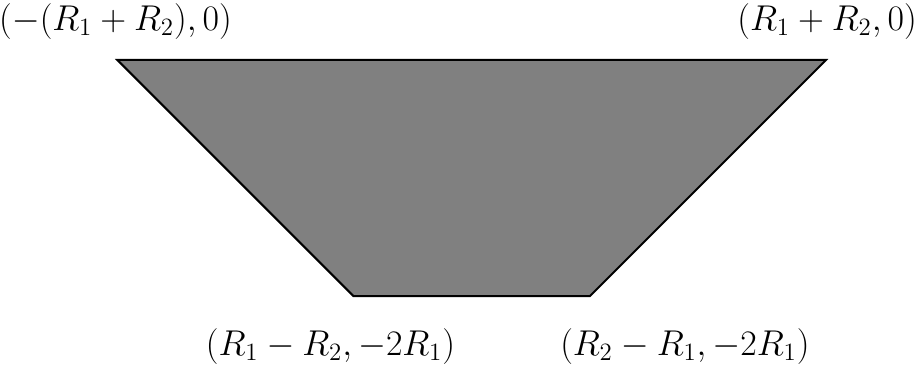} }
\end{center}
\caption{The polygonal invariants.}
\label{fig:polygons}
\end{figure}

We will not compute the twisting indices of these two polygons, but let us say a few words about them. Let $\ell_0$ be the vertical line passing through $c_0$. Recall that $T = \begin{pmatrix} 1 & 0 \\ 1 & 1 \end{pmatrix}$ and that for $n \in \Z$, $t_{\ell_0}^n$ is the piecewise integral-affine transformation equal to the identity on the left half-space defined by $\ell_0$ and to $T^n$ on the right half-space defined by $\ell_0$. We claim that $\Delta_2 = (t_{\ell_0}^1 \circ T^{-1})(\Delta_1)$. Therefore, the description of the behavior of the twisting index under the $(\mathcal{G} \times \mathcal{T})$-action in Section \ref{sect:symp_invariant} yields the following result.
\begin{lm}
\label{lm:twist}
The twisting indices $k_1, k_2$ of $\Delta_1, \Delta_2$ satisfy the equality $k_2 = k_1 - 1$.
\end{lm}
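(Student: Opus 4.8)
The plan is to verify the claimed identity $\Delta_2 = (t_{\ell_0}^1 \circ T^{-1})(\Delta_1)$ by an explicit computation on the four vertices, and then invoke the transformation law for the twisting index recorded in Section~\ref{sect:symp_invariant}. First I would recall that $T^{-1} = \begin{pmatrix} 1 & 0 \\ -1 & 1 \end{pmatrix}$, and that $\ell_0$ is the vertical line $\{x = c\}$ where $c = R_1 - R_2$ is the abscissa of the focus-focus value $c_0$; the map $t_{\ell_0}^1$ fixes the half-plane $\{x \le c\}$ pointwise and applies $T$ to the half-plane $\{x \ge c\}$. Since the group action on weighted polygons in Section~\ref{sect:symp_invariant} involves $t_{\ell_\lambda}^n \circ \tau$ with $n = (\epsilon - \delta)/2$, passing from $\epsilon = 1$ (for $\Delta_1$) to $\delta = -1$ (for $\Delta_2$) forces $n = 1$, which matches the exponent in $t_{\ell_0}^1$; the element $\tau \in \mathcal{T}$ with linear part $T^{-1}$ is the remaining freedom, corresponding to $r = -1$.

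Next I would carry out the vertex computation. The vertices of $\Delta_1$ are $A = (-(R_1+R_2),0)$, $B = (R_1-R_2, 2R_1)$, $C = (R_1+R_2, 2R_1)$ and $D = (R_2-R_1, 0)$. Applying $T^{-1}$ (about the chosen origin, which I take on $\ell_0$, so that afterwards $t_{\ell_0}^1$ is well-defined relative to the same origin) sends $(x,y) \mapsto (x, y - x)$ after an appropriate normalization; I would track each vertex carefully, being attentive to the placement of the origin on $\ell_0$ since $T$ and $T^{-1}$ are linear maps and the piecewise map $t_{\ell_0}^1$ depends on a choice of origin in $\ell_0$. Then apply $t_{\ell_0}^1$: vertices with abscissa $\le R_1 - R_2$ (namely the images of $A$ and $B$) are unchanged, while vertices with abscissa $\ge R_1 - R_2$ (the images of $C$ and $D$) get hit by $T$. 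The claim is that the resulting four points are exactly the vertices of $\Delta_2$, i.e. $(-(R_1+R_2),0)$, $(R_1+R_2,0)$, $(R_2-R_1,-2R_1)$, $(R_1-R_2,-2R_1)$. I would also check that edges map to edges (no new vertices are created on $\ell_0$), which is automatic once the vertex correspondence and the piecewise-linearity are in hand because both polygons have their only ``kink'' from the piecewise map precisely on $\ell_0$.

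Finally, with the identity $\Delta_2 = (t_{\ell_0}^1 \circ T^{-1})(\Delta_1)$ established, I would apply the transformation rule for the twisting index from Section~\ref{sect:symp_invariant}: under the action of $(\delta, \tau) \in \mathcal{G} \times \mathcal{T}$ with $\tau$ having linear part $T^r$, the twisting index $k$ becomes $k + r$. Here $r = -1$, hence $k_2 = k_1 + (-1) = k_1 - 1$, which is the assertion of Lemma~\ref{lm:twist}.

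The main obstacle I anticipate is purely bookkeeping: getting the origin on $\ell_0$ and the normalization of the linear maps consistent, since $T^{-1}$ is linear (origin-dependent once we insist $t_{\ell_0}^1$ use the same origin) while the vertices are given in absolute coordinates. A clean way to sidestep sign errors is to first translate so that $c_0$ sits at the origin, perform $T^{-1}$ and then $t_{\ell_0}^1$ (now $t$ splits the plane along $\{x=0\}$), translate back, and compare; I would organize the computation in a small table of the four vertices at each stage. No genuinely hard step is expected here — the content of the lemma is entirely in the explicit polygon identity, which the figures already suggest.
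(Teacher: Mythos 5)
Your route is the same as the paper's: establish that $\Delta_2$ is obtained from $\Delta_1$ by acting with an element of $\mathcal{G}\times\mathcal{T}$ having $\delta=-1$ (hence $n=(\epsilon-\delta)/2=1$) and whose $\mathcal{T}$-part has linear part $T^{-1}$, and then read off $k_2=k_1+r$ with $r=-1$ from the transformation rule recalled in Section \ref{sect:symp_invariant}. Two bookkeeping points in your verification plan would, however, not come out as you state them. First, with $T^{-1}(x,y)=(x,y-x)$ taken about the standard origin and $t_{\ell_0}^1$ shearing about $\ell_0=\{x=R_1-R_2\}$, one finds $(t_{\ell_0}^1\circ T^{-1})(\Delta_1)=\Delta_2+(0,R_1+R_2)$, i.e.\ only a vertical translate of $\Delta_2$ (your alternative normalization, translating $c_0$ to the origin first, gives $\Delta_2+(0,2R_1)$ instead). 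This is harmless, since vertical translations belong to $\mathcal{T}$ and have trivial linear part, so they affect neither $\epsilon$ nor the twisting index, and the identity claimed in the paper should be read modulo such a translation; but you need to say so, because the ``exact'' vertex match you plan to check will not occur. Second, the correspondence is not vertex-to-vertex: after applying $t_{\ell_0}^1\circ T^{-1}$, the two edges of $\Delta_1$ meeting at $B=(R_1-R_2,2R_1)$ both become horizontal, so the image of $B$ lies in the interior of the top edge of $\Delta_2$, while the vertex of $\Delta_2$ at $(R_1-R_2,-2R_1)$ is not the image of any vertex of $\Delta_1$: it is the image of the point where the bottom edge of $\Delta_1$ crosses $\ell_0$, where $t_{\ell_0}^1$ creates a new kink. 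So your check that ``no new vertices are created on $\ell_0$'' would in fact fail; what one must do instead is track the boundary of $T^{-1}(\Delta_1)$ across $\ell_0$, recording which corners are straightened and which are created. With these two corrections the computation closes, and the conclusion $k_2=k_1-1$ follows exactly as you (and the paper) argue.
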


\subsection{The height invariant}
\label{sect:height}

Recall that the height invariant $h$ is the volume of $\{H < H(m_0)\} = \{H < 0\}$ in the reduced manifold $M^{\text{red}} = J^{-1}(J(m_0))/S^1$, with respect to $|\omega^{\text{red}}|/2\pi$, with $\omega^{\text{red}}$ the canonical symplectic form on $M^{\text{red}}$. The $S^1$-action is given by the Hamiltonian flow of $J$; in other words, in polar coordinates,
\[ \exp(is) \cdot (\rho \exp(i\theta),\eta \exp(i\varphi)) = (\rho \exp(i(\theta-s)),\eta \exp(i(\varphi+s))). \]
Moreover, on $J^{-1}(J(m_0))$, $\eta = \rho \sqrt{\frac{R_1}{R_2+(R_2-R_1)\rho^2}}$. Thus an element of $M^{\text{red}}$ can be described by two coordinates $(\rho,\alpha)$, as the equivalence class $$\left[\left(\rho, \rho \sqrt{\frac{R_1}{R_2+(R_2-R_1)\rho^2}} \exp(i\alpha)\right)\right].$$ We would like to express $\omega^{\text{red}}$ in these coordinates; by definition, for $[X],[Y] \in TM^{\text{red}}$, we have $\omega^{\text{red}}([X],[Y]) = \omega(X,Y)$ for any choice of representatives $X, Y \in T(J^{-1}(c_0))$. But
\[ X = \mu_1^X \frac{\partial}{\partial \rho} + \frac{\mu_2^X}{2} \frac{\partial}{\partial \theta} + \frac{R_2 \sqrt{R_1}}{(R_2+(R_2-R_1)\rho^2)^{3/2}} \frac{\partial}{\partial \eta} + \frac{\mu_2^X}{2} \frac{\partial}{\partial \varphi} \]
is a representative of $[X] = \mu_1^X \frac{\partial}{\partial \rho} + \mu_2^X \frac{\partial}{\partial \alpha}$. Therefore,
\[ \omega^{\text{red}}([X],[Y]) = \left( \frac{2R_1\rho}{(1+\rho^2)^2} + \frac{2R_2^2 \sqrt{R_1} \eta}{(1+\eta^2)^2 (R_2+(R_2-R_1)\rho^2)^{3/2}} \right) ( \mu_1^X \mu_2^Y - \mu_2^X \mu_1^Y ). \]
Writing $\eta$ in terms of $\rho$, a straightforward computation yields
\[ \omega^{\text{red}}([X],[Y]) = \frac{4R_1\rho}{(1+\rho^2)^2} ( \mu_1^X \mu_2^Y - \mu_2^X \mu_1^Y ), \quad i.e. \quad \omega^{\text{red}} = \frac{4R_1\rho}{(1+\rho^2)^2}  d\rho \wedge d\alpha.  \]
This is consistent with the fact that the volume of $M^{\text{red}}$ with respect to $|\omega^{\text{red}}|/2\pi$ is equal to the length of the vertical segment containing the image of $c_0$ in any polygon constructed in the previous section, that is $2R_1$. Indeed,
\[ \int_0^{2\pi} \int_0^{+\infty} \frac{4R_1\rho}{(1+\rho^2)^2}  \frac{d\rho \wedge d\alpha}{2\pi} = \int_0^{+\infty} \frac{4R_1\rho}{(1+\rho^2)^2} d\rho = 2R_1. \]
Using previous considerations, we get that the height invariant is the volume of the submanifold consisting of all elements $\left[\left(\rho, \rho \sqrt{\frac{R_1}{R_2+(R_2-R_1)\rho^2}} \exp(i\alpha)\right)\right] \in M^{\text{red}}$ satisfying
\[ \sqrt{\frac{R_1}{R_2+(R_2-R_1)\rho^2}}(1-\rho^2) + 2 \cos \alpha < 0  . \]
This inequality is possible if and only if $0 < \rho < \sqrt{\frac{4R_2}{R_1}-1}$, and it means that
\[ \arccos\left( \frac{\rho^2-1}{2} \sqrt{\frac{R_1}{R_2+(R_2-R_1)\rho^2}} \right) < \alpha < 2\pi - \arccos\left( \frac{\rho^2-1}{2} \sqrt{\frac{R_1}{R_2+(R_2-R_1)\rho^2}} \right). \]
Consequently, we have that
\[ h = \int_0^{\sqrt{4\Theta-1}} \frac{4R_1\rho}{2\pi(1+\rho^2)^2} \left( 2\pi-2\arccos\left( \frac{\rho^2-1}{2} \sqrt{\frac{R_1}{R_2+(R_2-R_1)\rho^2}} \right) \right) = 2R_1\left(1-\frac{2I}{\pi}\right), \]
where $\Theta = R_2/R_1$ and $I$ is the integral
\[ I = \int_0^{\sqrt{4\Theta-1}} \frac{\rho}{(1+\rho^2)^2} \arccos\left( \frac{\rho^2-1}{2} \sqrt{\frac{R_1}{R_2+(R_2-R_1)\rho^2}}\right) d\rho. \]
One can compute the latter as follows.

\begin{lm}
\label{lm:compute_int}
The integral $I$ satisfies
\[ I = \frac{\pi}{2} - \frac{1}{2} \arccos\left(\frac{1}{2\sqrt{\Theta}}\right)  - \frac{\sqrt{4\Theta-1}}{4}  + \left(\frac{\Theta-1}{2} \right) \arctan\left( \frac{4 \Theta - 1 - \sqrt{4\Theta-1}}{(2 \Theta - 1)(\sqrt{4\Theta - 1} - 1)} \right). \]
\end{lm}

This can be checked using any computer algebra software, but we give a proof in Appendix A. Using this lemma and the considerations before it, we finally obtain the following formula for the height invariant.

\begin{prop}
\label{prop:height}
As above, let $\Theta = R_2/R_1$. The height invariant $h$ is given by the formula
\[ h = \frac{R_1}{\pi} \left(2 \arccos\left(\frac{1}{2\sqrt{\Theta}}\right) + \sqrt{4\Theta - 1} - 2(\Theta-1) \arctan\left( \frac{4 \Theta - 1 - \sqrt{4\Theta-1}}{(2 \Theta - 1)(\sqrt{4\Theta - 1} - 1)} \right) \right). \]
\end{prop}

Coming back to our example where $R_1 = 1, R_2 = 5/2$, this formula yields
\[ h = \frac{1}{\pi} \left( 2 \arccos\left(\frac{1}{\sqrt{10}}\right) + 3 - 3 \arctan\left(\frac{3}{4}\right) \right) \approx 1.136.  \]

\section{Quantization of $F$}
\label{sect:quantum}

We now want to study a quantum version of this integrable system. Namely, we want to find two commuting self-adjoint operators $\hat{J}, \hat{H}$ acting on some Hilbert space, and quantizing $J,H$ in some sense. To be more precise, we will be working with a semiclassical parameter, and we will ask the principal symbols of $\hat{J}, \hat{H}$ to be $J,H$. Note that such operators might not exist in general, see \cite{GvS}, but we will see that in our example such a pair can be constructed. Since the phase space for coupled angular momenta is compact, the quantization procedure relies on the so-called geometric quantization \cite{Kos,Sou}, and the quantum observables are Berezin-Toeplitz operators \cite{BouGui,Cha1,Mama,Schli}. We start by briefly reviewing these notions for the sake of completeness.

\subsection{Geometric quantization and Berezin-Toeplitz operators}

Let $(M,\omega,j)$ be a compact, connected, K\"ahler manifold. Assume that $M$ is quantizable, i.e. that there exists a holomorphic, Hermitian line bundle $L \to M$ whose Chern connection $\nabla$ has curvature $-i\omega$ (a prequantum line bundle); this is equivalent to asking the cohomology class $[\omega/2\pi]$ to be integral. Let $K \to M$ be another holomorphic, Hermitian line bundle; typically, we would like to take $K = \delta$ a half-form bundle, that is a square root of the canonical line bundle $\Lambda^{n,0} T^*M$, but such a line bundle might not exist globally. For any positive integer $k$, we consider the Hilbert space
\[ \Hil_k = H^0(M,L^{\otimes k} \otimes K), \quad \scal{\phi}{\psi} = \int_M h_k(\phi,\psi) \mu \]
of holomorphic sections of the line bundle $L^{\otimes k} \otimes K \to M$. Here $h_k$ is the Hermitian metric induced on $L^{\otimes k} \otimes K$ by the ones on $L$ and $K$, and $\mu$ is the Liouville measure associated with $\omega$. Since $M$ is compact, this space is finite dimensional. The integer $k$ is a semiclassical parameter, representing the inverse of Planck's constant $\hbar$, and the semiclassical limit is $k \to +\infty$.

The quantum observables, Berezin-Toeplitz operators, are sequences of operators defined as follows. Let $L^2(M, L^{\otimes k} \otimes K)$ be the completion of the space $\classe{\infty}(L^{\otimes k} \otimes K)$ of smooth sections of $L^{\otimes k} \otimes K \to M$ with respect to the scalar product $\scal{\cdot}{\cdot}$, and let $\Pi_k$ be the orthogonal projector from $L^2(M, L^{\otimes k} \otimes K)$ to the quantum space $\Hil_k$.

\begin{dfn}
A \strong{Berezin-Toeplitz operator} is a sequence of operators $(T_k:\Hil_k \to \Hil_k)_{k \geq 1}$ of the form $T_k = \Pi_k M_{f(\cdot,k)} + R_k$ for some sequence of smooth functions $f(\cdot,k)$ with an asymptotic expansion of the form $f(\cdot,k) = \sum_{\ell \geq 0} k^{-\ell} f_{\ell}$ in the $\classe{\infty}{}$-topology, and some sequence of operators $R_k$ whose operator norm $\| R_k \|$ is a $O(k^{-\infty})$, that is a $O(k^{-N})$ for every $N \geq 1$.
\end{dfn}

Here $M_f$ stands for the operator of multiplication by $f$. The first term $f_0$ in the above asymptotic expansion is the \strong{principal symbol} of $T_k$.

\subsection{Quantization of the sphere}

In order to use this recipe to quantize the sphere $\S^2$, we start by working on $\C\P^1$, and consider the tautological line bundle $\bigO{-1} = \left\{ ([u],v) \in \C\P^1 \times \C^{2} \ | \ v \in \C u  \right\} \subset \C\P^1 \times \C^{2}, $
endowed with its natural holomorphic and Hermitian structures. One can check that the associated Chern connection has curvature $i \omega_{FS}$, where $\omega_{FS}$ is the Fubini-Study form on $\C\P^1$ (normalized so that the area of $\C\P^1$ is equal to $2\pi$). Therefore, the dual line bundle $L = \bigO{1}$ is a prequantum line bundle for $(\C\P^1,\omega_{FS})$. Moreover, it is well-known that the canonical bundle can be identified with $\bigO{-2}$; thus $\delta = \bigO{-1}$ is a half-form bundle. So the Hilbert spaces that we consider are $H^0(\C\P^1,L^{\otimes k} \otimes \delta) = H^0(\C\P^1,\bigO{k-1})$. Now, given an integer $p \geq 1$, it is standard that $H^0(\C\P^1,\bigO{p})$ can be identified with the space $\C_{p}[z_1,z_2]$ of homogeneous polynomials of degree $p$ in two complex variables. In this isomorphism, the scalar product becomes
\[ \scal{P}{Q} = \int_{\C} \frac{P(1,z) \overline{Q(1,z)}}{(1+|z|^2)^{p+2}} \ | dz \wedge d\bar{z} |. \]

To come back to $\S^2$, we use the stereographic projection (for instance from the north pole to the equatorial plane); one can check that the pullback of $\omega_{\text{FS}}$ by the latter is $-\tfrac{1}{2}\omega_{\S^2}$ (actually, we already used this result in a previous section). There exist very explicit formulas for the quantization of the coordinates $(x_0,y_0,z_0)$ on $\S^2$. In what follows, we identify $\C_p[z_1,z_2]$ with the space of polynomials of one complex variable $z$ of degree less than or equal to $p$; the following result holds in this identification.

\begin{lm}
\label{lm:coord_BTO}
The self-adjoint operators
\[ T_p(x_0) = \frac{1}{p+2} \left( (1-z^2) \frac{d}{dz} + pz \right), \ T_p(y_0) = \frac{i}{p+2} \left( (1+z^2) \frac{d}{dz} - pz \right), \ T_p(z_0) = \frac{1}{p+2} \left( 2z \frac{d}{dz} - p \ \mathrm{Id} \right)  \]
are Berezin-Toeplitz operators acting on $H^0(\C\P^1,\mathcal{O}(p))$ with respective principal symbols $x_0$, $y_0$, $z_0$.
\end{lm}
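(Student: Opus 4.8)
The plan is to verify each of the three claimed formulas in two steps: first, that the operators $T_p(x_0), T_p(y_0), T_p(z_0)$ are genuinely Berezin-Toeplitz operators, and second, that their principal symbols are the functions $x_0, y_0, z_0$. For the second part, I would exploit the fact that the principal symbol of a Berezin-Toeplitz operator built as $\Pi_k M_{f} + R_k$ is $f_0$; since multiplication by one of the linear coordinate functions $x_0, y_0$ or $z_0$ restricted to the sphere is a bounded operator on sections, one reduces to checking that $T_p(x_0) - \Pi_p M_{x_0}$ (and similarly for the other two) is $O(p^{-\infty})$ in operator norm, or more simply that $T_p(x_0)$ already has the right leading-order behavior.

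The cleanest route is to use the explicit realization of $H^0(\C\P^1,\mathcal{O}(p))$ as polynomials in $z$ of degree $\leq p$ with the inner product $\scal{P}{Q} = \int_{\C} P(z)\overline{Q(z)} (1+|z|^2)^{-p-2} |dz \wedge d\bar z|$, together with the coherent states formalism. I would compute, for the normalized coherent state $e_w$ centered at a point $w \in \C$ (which is the section whose reproducing-kernel property singles out $w$, concretely $e_w(z) \propto (1+\bar w z)^p$), the contravariant symbol $\langle e_w, T_p(\cdot) e_w \rangle$. A direct calculation using integration by parts — differentiating $(1+\bar w z)^p$ and using that $\int |z|^{2j}(1+|z|^2)^{-p-2}$ is an explicit Beta-function expression — shows that $\langle e_w, T_p(z_0) e_w\rangle$ equals $\frac{|w|^2-1}{|w|^2+1}$ up to lower-order corrections, and this is exactly $z_0$ pulled back through the stereographic projection $\pi_N$; similarly for $x_0$ and $y_0$. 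Since the map $w \mapsto \langle e_w, T e_w\rangle$ recovers the principal symbol in the limit $p \to \infty$, this identifies the symbols. That $T_p$ is a Berezin-Toeplitz operator then follows because a differential operator of this type, whose covariant symbol admits an asymptotic expansion in $1/p$ with smooth coefficients, is a Berezin-Toeplitz operator by the standard Berezin-Toeplitz calculus (this is where one invokes results of Bordemann-Meinrenken-Schlichenmaier / Charles).

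Alternatively — and this is probably the slickest argument to write down — I would observe that $\mathfrak{su}(2)$ acts on $\C\P^1$ by Hamiltonian vector fields with moment map components $\tfrac12 x_0, \tfrac12 y_0, \tfrac12 z_0$ (with appropriate normalization for $\omega_{FS} = -\tfrac12 \omega_{\S^2}$), and that geometric quantization of this action produces the Kostant-Souriau prequantum operators, which on $H^0(\C\P^1,\mathcal{O}(p))$ restrict to the irreducible $\mathfrak{su}(2)$-representation of dimension $p+1$. The three displayed operators are precisely (a rescaling of) the standard generators of this representation in the polynomial model: $2z\frac{d}{dz} - p$ is the weight operator, and $(1-z^2)\frac{d}{dz}+pz$, $(1+z^2)\frac{d}{dz}-pz$ are the symmetric combinations of the raising and lowering operators. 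One checks self-adjointness by an integration by parts in the inner product above. The principal symbol statement is then the general fact that the Berezin-Toeplitz quantization of a moment map component has that component (divided by the relevant factor, here the $\tfrac{1}{p+2}$ normalization is chosen exactly so that the symbol comes out to $x_0$ rather than $\tfrac12 x_0$) as principal symbol.

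The main obstacle is bookkeeping the normalization constants: the factor $\tfrac{1}{p+2}$ rather than $\tfrac1p$ or $\tfrac1{p+1}$, and the sign/scaling relating $\omega_{FS}$ to $\omega_{\S^2}$ and the stereographic projection $\pi_N$ versus $\pi_S$. One must be careful that the constant is chosen so that the \emph{principal} symbol (the $p^0$ term) is exactly $x_0$, while the $O(p^{-1})$ discrepancy between $p+2$ and $p$ contributes only to the subprincipal symbol and hence is irrelevant to the claim. Concretely, the cleanest check is the $z_0$ case: $T_p(z_0)$ is diagonal in the monomial basis $\{z^j\}_{0 \le j \le p}$ with eigenvalue $(2j-p)/(p+2)$, which ranges over $[-\tfrac{p}{p+2}, \tfrac{p}{p+2}] \to [-1,1]$ as $p \to \infty$, matching $z_0 \in [-1,1]$; combined with the $\mathfrak{su}(2)$-equivariance this pins down $x_0$ and $y_0$ as well without further computation. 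I would present the $z_0$ computation in full and remark that $x_0, y_0$ follow by the same method or by equivariance, deferring the verification of the Berezin-Toeplitz property to the cited references.
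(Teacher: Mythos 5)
For this lemma the paper offers no argument of its own: it simply points to \cite[Lemma 3.4]{BloGol}, so your proposal is being measured against a citation rather than an internal proof. Your sketch is correct and would work: in the polynomial model of $H^0(\C\P^1,\mathcal{O}(p))$ with weight $(1+|z|^2)^{-(p+2)}$, the coherent states are indeed $e_w(z)\propto(1+\bar w z)^p$, the covariant symbol of $T_p(z_0)$ at $w$ comes out to $\tfrac{p}{p+2}\cdot\tfrac{|w|^2-1}{1+|w|^2}\to z_0\circ\pi_N^{-1}$, and your identification of the three operators with the $\mathfrak{su}(2)$ generators (weight operator and the combinations $E\pm F$ in the degree-$p$ polynomial representation) is accurate, as is the self-adjointness check. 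Two remarks, though. First, you can dispense with the appeal to the general Berezin--Toeplitz calculus (and with Tuynman-type comparison results in your equivariance route), because the displayed operators are \emph{exactly} the Toeplitz operators $\Pi_p M_{x_0}\Pi_p$, $\Pi_p M_{y_0}\Pi_p$, $\Pi_p M_{z_0}\Pi_p$: the Beta-integral computation you already gesture at gives, for instance, $\langle z_0 z^j,z^j\rangle/\|z^j\|^2=\bigl((j+1)-(p+1-j)\bigr)/(p+2)=(2j-p)/(p+2)$, which is precisely the eigenvalue of $\tfrac{1}{p+2}\bigl(2z\tfrac{d}{dz}-p\bigr)$ on $z^j$, and the analogous off-diagonal matrix elements of $\Pi_pM_{x_0}\Pi_p$ and $\Pi_pM_{y_0}\Pi_p$ reproduce the tridiagonal formulas of Lemma \ref{lm:action_BTO}. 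This explains the normalization $p+2$ and proves the lemma with $R_k=0$ and a $k$-independent multiplier, so the principal symbols are immediate from the definition; this exact identity is the cleanest self-contained proof and is essentially what the cited reference records. Second, the shortcut you propose for $x_0,y_0$ --- that the spectrum of $T_p(z_0)$ fills $[-1,1]$ and equivariance then ``pins down'' the other two symbols --- is not sufficient as stated (many functions have range $[-1,1]$, and equivariance alone only determines the symbols up to such ambiguities); either carry out the coherent-state computation for all three operators, or use the exact Toeplitz identity above.
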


For a proof, see for instance \cite[Lemma $3.4$]{BloGol}. One readily checks that these operators satisfy the commutation relation
\begin{equation} [T_p(x_0),T_p(y_0)] =  \frac{2i}{p+2} T_p(z_0)   \label{eq:commutation_BTO} \end{equation}
and its cyclic permutations. Now, one can check that the family
\[ \phi_{\ell} = \sqrt{\frac{(p+1) \binom{p}{\ell}}{2\pi}} z^{p-\ell}, \quad 0 \leq \ell \leq p \]
is an orthonormal basis of $H^0(\C\P^1,\mathcal{O}(p))$. A direct computation shows the following.

\begin{lm}
\label{lm:action_BTO}
The action of $T_p(x_0), T_p(y_0)$ and $T_p(z_0)$ in the basis $(\phi_{\ell})_{0 \leq \ell \leq p}$ is given by the formulas
\begin{enumerate}
\item $T_p(x_0) \phi_{\ell} = \frac{1}{p+2} \left( \sqrt{\ell(p-\ell+1)} \ \phi_{\ell-1} + \sqrt{(\ell+1)(p-\ell)} \ \phi_{\ell+1}  \right)$,
\item $T_p(y_0) \phi_{\ell} = \frac{-i}{p+2} \left( \sqrt{\ell(p-\ell+1)} \ \phi_{\ell-1} - \sqrt{(\ell+1)(p-\ell)} \ \phi_{\ell+1}  \right)$,
\item $T_p(z_0) \phi_{\ell} = \left(\frac{p-2\ell}{p+2}\right) \phi_{\ell}$.
\end{enumerate}
Here we have used the convention $\phi_{-1} = 0 = \phi_{p+1}$.
\end{lm}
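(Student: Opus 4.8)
The statement to prove is Lemma~\ref{lm:action_BTO}, which computes the action of the Berezin--Toeplitz operators $T_p(x_0), T_p(y_0), T_p(z_0)$ on the orthonormal basis $(\phi_\ell)_{0\le\ell\le p}$ of $H^0(\C\P^1,\mathcal{O}(p))$, where $\phi_\ell = \sqrt{\tfrac{(p+1)\binom{p}{\ell}}{2\pi}}\, z^{p-\ell}$.

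\medskip

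\noindent\textbf{Approach.} The plan is to work entirely in the explicit polynomial model of Lemma~\ref{lm:coord_BTO}: each $T_p$-operator is a first-order differential operator in $z$, so applying it to the monomial $z^{p-\ell}$ is a one-line computation, and the only real content is bookkeeping the normalization constants $c_\ell := \sqrt{(p+1)\binom{p}{\ell}/(2\pi)}$ so that the answer comes out in terms of the $\phi_\ell$ rather than the raw monomials. First I would record the elementary identity relating consecutive normalization constants: since $\binom{p}{\ell-1}/\binom{p}{\ell} = \ell/(p-\ell+1)$ and $\binom{p}{\ell+1}/\binom{p}{\ell} = (p-\ell)/(\ell+1)$, we get
\[
\frac{c_{\ell-1}}{c_\ell} = \sqrt{\frac{\ell}{p-\ell+1}}, \qquad \frac{c_{\ell+1}}{c_\ell} = \sqrt{\frac{p-\ell}{\ell+1}},
\]
equivalently $c_\ell = c_{\ell-1}\sqrt{(p-\ell+1)/\ell}$ and $c_\ell = c_{\ell+1}\sqrt{(\ell+1)/(p-\ell)}$. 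These are exactly the factors that will turn monomials back into basis vectors.

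\medskip

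\noindent\textbf{Key steps.} Next I would compute $T_p(z_0)\phi_\ell$: applying $\tfrac{1}{p+2}(2z\,d/dz - p\,\mathrm{Id})$ to $c_\ell z^{p-\ell}$ gives $\tfrac{c_\ell}{p+2}(2(p-\ell) - p) z^{p-\ell} = \tfrac{p-2\ell}{p+2}\phi_\ell$, establishing item (3). Then for $T_p(x_0)\phi_\ell$, applying $\tfrac{1}{p+2}((1-z^2)\,d/dz + pz)$ to $c_\ell z^{p-\ell}$ yields
\[
\frac{c_\ell}{p+2}\Big( (p-\ell) z^{p-\ell-1} - (p-\ell) z^{p-\ell+1} + p\, z^{p-\ell+1} \Big) = \frac{c_\ell}{p+2}\Big( (p-\ell) z^{p-\ell-1} + \ell\, z^{p-\ell+1} \Big).
\]
Now $z^{p-\ell-1} = z^{p-(\ell+1)} = \phi_{\ell+1}/c_{\ell+1}$ and $z^{p-\ell+1} = z^{p-(\ell-1)} = \phi_{\ell-1}/c_{\ell-1}$, so using the ratio identities above, $(p-\ell) c_\ell / c_{\ell+1} = (p-\ell)\sqrt{(\ell+1)/(p-\ell)} = \sqrt{(\ell+1)(p-\ell)}$ and $\ell\, c_\ell / c_{\ell-1} = \ell\sqrt{(p-\ell+1)/\ell} = \sqrt{\ell(p-\ell+1)}$, which gives item (1). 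Item (2) follows the same way with $\tfrac{i}{p+2}((1+z^2)\,d/dz - pz)$: the derivative produces $(p-\ell)z^{p-\ell-1} + (p-\ell)z^{p-\ell+1} - p z^{p-\ell+1} = (p-\ell) z^{p-\ell-1} - \ell z^{p-\ell+1}$, and after the same substitutions and an overall factor of $i$ one gets $\tfrac{-i}{p+2}\big(\sqrt{\ell(p-\ell+1)}\phi_{\ell-1} - \sqrt{(\ell+1)(p-\ell)}\phi_{\ell+1}\big)$. Throughout, the convention $\phi_{-1}=\phi_{p+1}=0$ takes care of the boundary cases $\ell=0$ and $\ell=p$ automatically, since there the corresponding coefficient $\sqrt{\ell(p-\ell+1)}$ or $\sqrt{(\ell+1)(p-\ell)}$ vanishes anyway — a point worth noting explicitly.

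\medskip

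\noindent\textbf{Main obstacle.} There is no genuine obstacle here; this is a routine verification. The only place to be careful is the algebra of the normalization constants — matching $(p-\ell)c_\ell/c_{\ell+1}$ and $\ell c_\ell/c_{\ell-1}$ to the square-root expressions in the statement — and making sure the raising/lowering indices are not swapped (a monomial of lower degree $z^{p-\ell-1}$ corresponds to $\phi_{\ell+1}$, which is the mild index inversion baked into the convention $\phi_\ell \propto z^{p-\ell}$). Once the ratio identities are in hand, all three formulas drop out immediately, so I would present the proof compactly: state the ratio identities, do the $z_0$ case, do the $x_0$ case in full, and remark that $y_0$ is identical up to signs and the factor $i$.
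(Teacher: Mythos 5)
Your computation is correct and is exactly the "direct computation" the paper alludes to: apply the explicit first-order operators of Lemma \ref{lm:coord_BTO} to the monomials $z^{p-\ell}$ and convert back to the $\phi_\ell$ via the ratios of the normalization constants. The paper gives no further detail, so your write-up (including the check that the vanishing coefficients handle the boundary cases $\ell=0,p$) matches and slightly elaborates its intended argument.
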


\subsection{Quantum coupled angular momenta}

Now we want to quantize $M = \S^2 \times \S^2$ with symplectic form $-(R_1 \omega_{\S^2} \otimes R_2 \omega_{\S^2})$. This is possible if and only if $R_1$ and $R_2$ are positive half-integers. In this case, the external tensor product
\[ L = \bigO{2R_1} \boxtimes \bigO{2R_2} = p_1^* \bigO{2R_1} \otimes p_2^* \bigO{2R_2} \]
is a prequantum line bundle over $\C\P^1 \times \C\P^1$, where $p_i: \C\P^1 \times \C\P^1 \to \C\P^1$, $i=1,2$, are the natural projections to the first and second factor. Moreover, the line bundle $\delta = \bigO{-1} \boxtimes \bigO{-1}$ is a half-form bundle over $\C\P^1 \times \C\P^1$, hence the quantum spaces are $ \Hil_k = H^0\left(\C\P^1 \times \C\P^1, \bigO{2kR_1-1} \boxtimes \bigO{2kR_2-1} \right)$
for $k \geq 1$ integer. By a version of the K\"unneth formula for the Dolbeault cohomology \cite{SamWas}, this yields
\[ \Hil_k =  H^0\left(\C\P^1, \bigO{2kR_1-1}\right) \otimes H^0\left(\C\P^1, \bigO{2kR_2-1}\right). \]
Let us now turn to quantum observables. For $i=1,2$, we consider the operators
\[ X_i = \left( 1 + \frac{1}{2kR_i} \right) T_{2kR_i-1}(x_i), \quad Y_i = \left( 1 + \frac{1}{2kR_i} \right) T_{2kR_i-1}(y_i), \quad
 Z_i = \left( 1 + \frac{1}{2kR_i} \right) T_{2kR_i-1}(z_i) \]
acting on $H^0\left(\C\P^1, \bigO{2kR_i-1}\right)$. It follows from Lemma \ref{lm:coord_BTO} that the operators
\begin{equation} \hat{J}_k = R_1 Z_1 + R_2 Z_2, \qquad \hat{H}_k = (1-t) Z_1 \otimes \mathrm{Id} + t \left( X_1 \otimes X_2 + Y_1 \otimes Y_2 + Z_1 \otimes Z_2 \right) \label{eq:quantum_system}\end{equation}
acting on $\Hil_k$ are Berezin-Toeplitz operators with principal symbols $J$ and $H$ respectively. Indeed, multiplying by a scalar of the form $1 + \bigO{k^{-1}}$ does not change the principal symbol.

\begin{lm}
The operators $\hat{J}_k$ and $\hat{H}_k$ commute.
\end{lm}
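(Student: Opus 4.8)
The plan is to reduce the statement to the classical fact, already established in Lemma~2.4 (the very first lemma of Section~3), that the symbols $J$ and $H$ Poisson-commute, together with the algebraic relations among the Berezin-Toeplitz operators quantizing the coordinate functions. Recall that $\hat J_k$ is built only from $Z_1$ and $Z_2$, while $\hat H_k$ involves $X_1\otimes X_2$, $Y_1\otimes Y_2$, $Z_1\otimes Z_2$ and $Z_1\otimes\mathrm{Id}$. Since operators acting on the first factor commute with operators acting on the second factor, and since $Z_1$ commutes with itself, the commutator $[\hat J_k,\hat H_k]$ reduces to a $t$-weighted sum of the four terms
\[
R_1\bigl[Z_1,X_1\bigr]\otimes X_2 + R_1\bigl[Z_1,Y_1\bigr]\otimes Y_2 + R_2\,X_1\otimes\bigl[Z_2,X_2\bigr] + R_2\,Y_1\otimes\bigl[Z_2,Y_2\bigr].
\]

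First I would record the relevant commutation relations for the rescaled operators $X_i,Y_i,Z_i$. From~\eqref{eq:commutation_BTO} and its cyclic permutations, together with the definition $X_i=(1+\tfrac{1}{2kR_i})T_{2kR_i-1}(x_i)$ and $p_i=2kR_i-1$ so that $p_i+2=2kR_i(1+\tfrac{1}{2kR_i})^{-1}\cdot(\cdots)$ — more precisely $p_i+2 = 2kR_i+1$, and the scaling factor $1+\tfrac1{2kR_i}=\tfrac{2kR_i+1}{2kR_i}$ — one computes $[X_i,Y_i] = \tfrac{i}{kR_i}Z_i$, and cyclically $[Z_i,X_i]=\tfrac{i}{kR_i}Y_i$ and $[Y_i,Z_i]=\tfrac{i}{kR_i}X_i$. (The only thing to check here is that the two factors of $1+\tfrac1{2kR_i}$ from the two operators on the left combine with the $\tfrac{2}{p_i+2}$ from~\eqref{eq:commutation_BTO} and the one factor of $1+\tfrac1{2kR_i}$ hidden in $Z_i$ on the right to give exactly $\tfrac{1}{kR_i}$; this is a short arithmetic identity using $p_i+2=2kR_i+1$.)

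Substituting these into the four-term expression above, the $R_i$ factors cancel against the $\tfrac1{kR_i}$ in the brackets, and the commutator becomes
\[
[\hat J_k,\hat H_k] = \frac{it}{k}\Bigl( -Y_1\otimes X_2 + X_1\otimes Y_2 + Y_1\otimes X_2 - X_1\otimes Y_2 \Bigr) = 0,
\]
where the signs are dictated by $[Z_1,X_1]=\tfrac{i}{kR_1}Y_1$ and $[Z_1,Y_1]=-\tfrac{i}{kR_1}X_1$ and the analogous relations on the second factor. So after the bookkeeping the four terms cancel in pairs. The main (and only) obstacle is getting the scaling constants and signs exactly right: one must be careful that the rescaling by $1+\tfrac1{2kR_i}$ is precisely what turns the coordinate commutation relation~\eqref{eq:commutation_BTO}, whose coefficient $\tfrac{2}{p+2}$ depends on $p$, into the clean relation $[Z_i,X_i]=\tfrac{i}{kR_i}Y_i$ with a coefficient that no longer involves the $+2$ shift. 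Everything else is the observation that the cross terms between the two tensor factors are the only ones that survive, and they manifestly cancel because the structure constants of $\mathfrak{so}(3)$ are antisymmetric — which is exactly the quantum shadow of the classical computation $\{J,H\}=0$ carried out in Lemma~2.4.
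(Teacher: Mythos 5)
Your proof is correct and takes essentially the same route as the paper: expand $[\hat J_k,\hat H_k]$ into the four surviving cross terms, use the rescaled relations $[Z_i,X_i]=\tfrac{i}{kR_i}Y_i$ and $[Z_i,Y_i]=-\tfrac{i}{kR_i}X_i$ (which the paper quotes directly from (\ref{eq:commutation_BTO}) while you additionally verify the rescaling arithmetic), and observe the pairwise cancellation. The only cosmetic point is that the signs in your final display are not listed in the same order as the four terms of your decomposition produce them, but the sum vanishes either way, so the argument stands.
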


\begin{proof}
We have that
\[[\hat{J}_k,\hat{H}_k]  =  t ( R_1 [Z_1 \otimes \mathrm{Id}, X_1 \otimes X_2] + R_1 [Z_1 \otimes \mathrm{Id}, Y_1 \otimes Y_2] + R_2 [\mathrm{Id} \otimes Z_2, X_1 \otimes X_2]  +  R_2 [\mathrm{Id} \otimes Z_2, Y_1 \otimes Y_2]). \]
Using the commutation relations (\ref{eq:commutation_BTO}), we obtain that
\[  [Z_1 \otimes \mathrm{Id}, X_1 \otimes X_2] = [Z_1,X_1] \otimes X_2 = \frac{i}{k R_1} Y_1 \otimes X_2. \]
Similarly, we get
\[ [Z_1 \otimes \mathrm{Id}, Y_1 \otimes Y_2] = -\frac{i}{k R_1} X_1 \otimes Y_2, \quad [\mathrm{Id} \otimes Z_2, X_1 \otimes X_2] = \frac{i}{k R_2} X_1 \otimes Y_2, \quad [\mathrm{Id} \otimes Z_2, Y_1 \otimes Y_2] = -\frac{i}{k R_2} Y_1 \otimes X_2.  \]
Using these relations, we obtain that $[\hat{J}_k,\hat{H}_k] = 0$.
\end{proof}

\subsection{Joint spectrum}

We want to compute the joint spectrum of $(\hat{J}_k,\hat{H}_k)$, which is the set of elements $(\lambda_1,\lambda_2) \in \R^2$ such that there exists a common eigenvector $v \neq 0 \in \Hil_k$ such that $\hat{J}_k v = \lambda_1 v$ and $\hat{H}_k v = \lambda_2 v$. In order to do so, we start by finding the eigenvalues of $\hat{J}_k$. We start by endowing $\Hil_k$ with the orthonormal basis
\[ g_{\ell,m} = e_{\ell} \otimes f_m, \qquad 0 \leq \ell \leq 2kR_1-1, \quad 0 \leq m \leq 2kR_2 - 1 \]
where $e_{\ell}$, $f_m$ are defined by
\[ e_{\ell} = \sqrt{\frac{2kR_1 \binom{2kR_1-1}{\ell}}{2\pi}} z^{2kR_1-1-\ell}, \qquad f_{m} = \sqrt{\frac{2kR_2 \binom{2kR_2-1}{m}}{2\pi}} w^{2kR_2-1-m} \]
in the identification of $H^0\left(\C\P^1, \bigO{2kR_1-1}\right)$ with the space of polynomials of degree at most $2kR_1-1$ in the variable $z$ and of $H^0\left(\C\P^1, \bigO{2kR_2-1}\right)$ with the space of polynomials of degree at most $2kR_2 - 1$ in the variable $w$.

\begin{lm}
\label{lm:ev_J}
The eigenvalues of $\hat{J}_k$ are the numbers $\lambda_j^{(k)}$ for $0 \leq j \leq 2(k(R_1 + R_2) - 1)$, where
\[ \lambda_j^{(k)} = R_1 + R_2 - \frac{j+1}{k}. \]
\end{lm}

\begin{proof}
It follows from Lemma \ref{lm:action_BTO} that
\[ Z_1 e_{\ell} = \frac{2kR_1-1-2\ell}{2kR_1} e_{\ell}, \quad Z_2 f_m = \frac{2kR_2-1-2m}{2kR_2} f_{m}. \]
Since $\hat{J}_k g_{\ell,m} = R_1 Z_1 e_{\ell} \otimes f_m + R_2 e_{\ell} \otimes Z_2 f_m$, this yields
\[ \hat{J}_k g_{\ell,m} =  \frac{2k(R_1+R_2) - 2 - 2(\ell + m)}{2k} g_{\ell,m} = \left( R_1 + R_2 - \frac{\ell + m + 1}{k} \right) g_{\ell,m}. \]
\end{proof}
In order to compute the joint spectrum of $\hat{J}_k$ and $\hat{H}_k$, we need to find the eigenvalues of the restriction of $\hat{H}_k$ to a given eigenspace of $\hat{J}_k$. The eigenspace associated with the eigenvalue $\lambda_j^{(k)}$ is
\[ E_{\lambda_j^{(k)}} = \mathrm{Span} \left\{ g_{\ell,m} | \ \ \ell + m = j, \ 0 \leq \ell \leq 2kR_1-1, \ 0 \leq m \leq 2kR_2 - 1 \right\}. \]
In order to compute an orthonormal basis for $E_{\lambda_j^{(k)}}$, we need to separate the three following cases:
\begin{enumerate}
\item if $0 \leq j \leq 2kR_1 - 1$, then $E_{\lambda_j^{(k)}} = \mathrm{Span}\left( g_{0,j}, g_{1,j-1}, \ldots, g_{j,0} \right)$, which has dimension $j+1$,
\item if $2kR_1 \leq j \leq 2kR_2 - 1$, then $E_{\lambda_j^{(k)}} = \mathrm{Span}\left( g_{0,j}, g_{1,j-1}, \ldots, g_{2kR_1-1,j-2kR_1+1} \right)$, which has dimension $2kR_1$,
\item if $2kR_2 \leq j \leq 2(k(R_1+R_2) - 1)$, then $E_{\lambda_j^{(k)}} = \mathrm{Span}\left( g_{j-2kR_2+1,2kR_2-1}, \ldots, g_{2kR_1-1,j-2kR_1+1} \right)$, which has dimension $2k(R_1 + R_2)-(j+1)$.
\end{enumerate}
A direct computation using these formulas shows that the sum of the dimensions of the eigenspaces of $\hat{J}_k$ is indeed equal to $4k^2 R_1R_2 = \dim \Hil_k$. Now that we have this very explicit description, it only remains to understand how $\hat{H}_k$ acts on a basis element $g_{\ell,m}$.

\begin{lm}
Using the convention $e_{-1} = 0 = e_{2kR_1}$ and $f_{-1} = 0 = f_{2kR_2}$, one has
\[ \begin{split} & \hat{H}_k g_{\ell,m} =  \frac{1}{4k^2R_1R_2} \left( 2t \sqrt{\ell(2kR_1-\ell)(m+1)(2kR_2-1-m)} \ g_{\ell-1,m+1} \right. \\
 \\
& + \left. \left(2(kR_1-\ell)-1\right)(2kR_2 - (2m+1)t) g_{\ell,m} + 2t \sqrt{(\ell+1)(2kR_1-1-\ell)m(2kR_2-m)} \ g_{\ell+1,m-1} \right). \end{split} \]
\end{lm}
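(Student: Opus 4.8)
The plan is to compute $\hat{H}_k g_{\ell,m}$ directly from the definition \eqref{eq:quantum_system} of $\hat{H}_k$ together with the explicit action formulas of Lemma \ref{lm:action_BTO}. Recall that $\hat{H}_k = (1-t) Z_1 \otimes \mathrm{Id} + t(X_1 \otimes X_2 + Y_1 \otimes Y_2 + Z_1 \otimes Z_2)$, where $Z_i = (1 + \tfrac{1}{2kR_i}) T_{2kR_i-1}(z_i)$, and similarly for $X_i, Y_i$, so that for instance $Z_1$ acts on $e_\ell$ by $Z_1 e_\ell = \tfrac{2kR_1 - 1 - 2\ell}{2kR_1} e_\ell$ (the factor $1 + \tfrac{1}{2kR_1}$ cancelling the $2kR_1 + 1$ in the denominator coming from $T_{2kR_1-1}(z_1)$). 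The first step is to record, from Lemma \ref{lm:action_BTO} applied with $p = 2kR_i - 1$ and the rescaling factor, the action of each of $X_1, Y_1, Z_1$ on $e_\ell$ and of $X_2, Y_2, Z_2$ on $f_m$; note that $X_i$ and $Y_i$ shift the index by $\pm 1$ while $Z_i$ is diagonal.

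Next I would assemble the four terms. The diagonal term $(1-t) Z_1 \otimes \mathrm{Id} + t Z_1 \otimes Z_2$ sends $g_{\ell,m}$ to a multiple of itself; combining the two contributions and putting everything over the common denominator $4k^2 R_1 R_2 = (2kR_1)(2kR_2)$ should produce the coefficient $(2(kR_1 - \ell) - 1)(2kR_2 - (2m+1)t)$ after factoring. For the off-diagonal part, the key observation is that $X_1 \otimes X_2 + Y_1 \otimes Y_2$ is, up to the raising/lowering operator identities, proportional to $S_1^+ \otimes S_1^- + S_1^- \otimes S_1^+$ in spin language: writing $X_i = \tfrac12(S_i^+ + S_i^-)$, $Y_i = \tfrac{1}{2i}(S_i^+ - S_i^-)$ one gets $X_1 X_2 + Y_1 Y_2 = \tfrac12(S_1^+ S_2^- + S_1^- S_2^+)$, so the cross terms $S_1^+ S_2^+$ and $S_1^- S_2^-$ cancel and only the index-preserving-sum shifts $g_{\ell-1,m+1}$ and $g_{\ell+1,m-1}$ survive (consistent with $\hat H_k$ preserving the $\hat J_k$-eigenspace $\ell + m = j$). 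Concretely, from Lemma \ref{lm:action_BTO} the $\phi_{\ell-1}$ part of $X_1 \phi_\ell$ pairs with the $\phi_{m+1}$ part of $X_2 \phi_m$, and likewise for $Y_1 \otimes Y_2$; adding these (the $-i$ factors in the $Y$-formulas combining to give the same sign as the $X$-terms on the surviving shifts and opposite signs on the vanishing ones) gives $2t$ times a product of square roots. Then one checks that the rescaling factors $(1 + \tfrac{1}{2kR_1})(1 + \tfrac{1}{2kR_2})$ times $\tfrac{1}{(2kR_1+1)(2kR_2+1)}$ equals $\tfrac{1}{4k^2 R_1 R_2}$, and that the square-root arguments coming out of Lemma \ref{lm:action_BTO}, namely $\ell(2kR_1 - \ell)$ from the lowering part on $e_\ell$ and $(m+1)(2kR_2 - 1 - m)$ from the raising part on $f_m$, match the stated expression $\sqrt{\ell(2kR_1-\ell)(m+1)(2kR_2-1-m)}$; symmetrically for the other shift.

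This is entirely a bookkeeping computation, so there is no real conceptual obstacle; the only thing requiring care is the sign reconciliation in the $X_1 \otimes X_2 + Y_1 \otimes Y_2$ combination — making sure the contributions to $g_{\ell+1,m+1}$ and $g_{\ell-1,m-1}$ genuinely cancel while those to $g_{\ell-1,m+1}$ and $g_{\ell+1,m-1}$ add with a coefficient of exactly $2t$ rather than $t$ or $4t$ — and keeping the index conventions $e_{-1} = e_{2kR_1} = 0$, $f_{-1} = f_{2kR_2} = 0$ straight so that boundary terms are handled correctly. Collecting the diagonal coefficient and the two off-diagonal coefficients and normalizing by $4k^2 R_1 R_2$ yields the claimed formula.
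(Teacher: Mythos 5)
Your proposal is correct and follows essentially the same route as the paper: a direct bookkeeping computation applying Lemma \ref{lm:action_BTO} (with the rescaling factors $1+\tfrac{1}{2kR_i}$), combining the diagonal terms $(1-t)Z_1\otimes\mathrm{Id}+tZ_1\otimes Z_2$ into $\bigl(2(kR_1-\ell)-1\bigr)\bigl(2kR_2-(2m+1)t\bigr)$, and observing that in $X_1\otimes X_2+Y_1\otimes Y_2$ the $g_{\ell\pm1,m\pm1}$ contributions cancel while the $g_{\ell\mp1,m\pm1}$ ones double. Your raising/lowering-operator remark is just a repackaging of the same sign cancellation that the paper carries out explicitly with the coefficients $a_{\ell,m},b_{\ell,m},c_{\ell,m},d_{\ell,m}$.
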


\begin{rmk}
This formula is consistent with the fact that $\hat{H}_k$ preserves the eigenspaces of $\hat{J}_k$; indeed, if $(\ell,m)$ is such that $\ell + m = j$, the same holds for $(\ell-1,m+1)$ and $(\ell+1,m-1)$.
\end{rmk}

\begin{proof}
Again, it directly follows from Lemma \ref{lm:action_BTO} that
\[ (Z_1 \otimes \mathrm{Id}) g_{\ell,m} = \frac{2(kR_1-\ell)-1}{2kR_1} g_{\ell,m}, \quad (Z_1 \otimes Z_2) g_{\ell,m} = \frac{(2(kR_1-\ell)-1)(2(kR_2-m)-1)}{4k^2 R_1 R_2} g_{\ell,m}. \]
Therefore, we obtain that
\[ \left((1-t)(Z_1 \otimes \mathrm{Id}) + t Z_1 \otimes Z_2 \right)  g_{\ell,m} = \frac{\left(2(kR_1-\ell)-1\right)(2kR_2 - (2m+1)t)}{4k^2 R_1 R_2}  g_{\ell,m}.  \]
It remains to understand how the operator $X_1 \otimes X_2 + Y_1 \otimes Y_2$ acts on $g_{\ell,m}$. Using Lemma \ref{lm:action_BTO}, we obtain that
\[ (X_1 \otimes X_2) g_{\ell,m} = \frac{1}{4k^2 R_1 R_2} \left( a_{\ell,m} g_{\ell-1,m-1} + b_{\ell,m} g_{\ell-1,m+1} + c_{\ell,m} g_{\ell+1,m-1} + d_{\ell,m} g_{\ell+1,m+1} \right) \]
where $a_{\ell,m} = \sqrt{\ell(2kR_1-\ell)m(2kR_2-m)}$, $\ b_{\ell,m} = \sqrt{\ell(2kR_1-\ell)(m+1)(2kR_2-1-m)}$, \vspace{1mm} \\ $c_{\ell,m} = \sqrt{(\ell+1)(2kR_1-1-\ell)m(2kR_2-m)}$ and $d_{\ell,m} = \sqrt{(\ell+1)(2kR_1-1-\ell)(m+1)(2kR_2-1-m)}$. Similarly, we obtain that
\[ (Y_1 \otimes Y_2) g_{\ell,m} = \frac{1}{4k^2 R_1 R_2} \left( -a_{\ell,m} g_{\ell-1,m-1} + b_{\ell,m} g_{\ell-1,m+1} + c_{\ell,m} g_{\ell+1,m-1} - d_{\ell,m} g_{\ell+1,m+1} \right). \]
Consequently, $ (X_1 \otimes X_2 + Y_1 \otimes Y_2) g_{\ell,m} = \frac{1}{4k^2 R_1 R_2} \left( 2b_{\ell,m} g_{\ell-1,m+1} + 2c_{\ell,m} g_{\ell+1,m-1}\right)$.
\end{proof}

Now we have everything that we need in order to compute the matrices of the restrictions $\hat{H}_k^{(j)}$ of $\hat{H}_k$ to the eigenspaces $E_{\lambda_j^{(k)}}$ in the bases introduced above, hence the joint spectrum
\[ \mathrm{JSp}\left(\hat{J}_k,\hat{H}_k\right) = \left\{ \left( \lambda_j^{(k)}, \lambda \right) | \ 0 \leq j \leq 2(k(R_1 + R_2) - 1), \ \lambda \in \mathrm{Sp}\left(\hat{H}_k^{(j)}\right) \right\} . \]
We display this joint spectrum for certain values of the parameters in Figure \ref{fig:spectrum}.

\begin{figure}[H]
\begin{center}
\subfigure[$t=0$]{\includegraphics[scale=0.22]{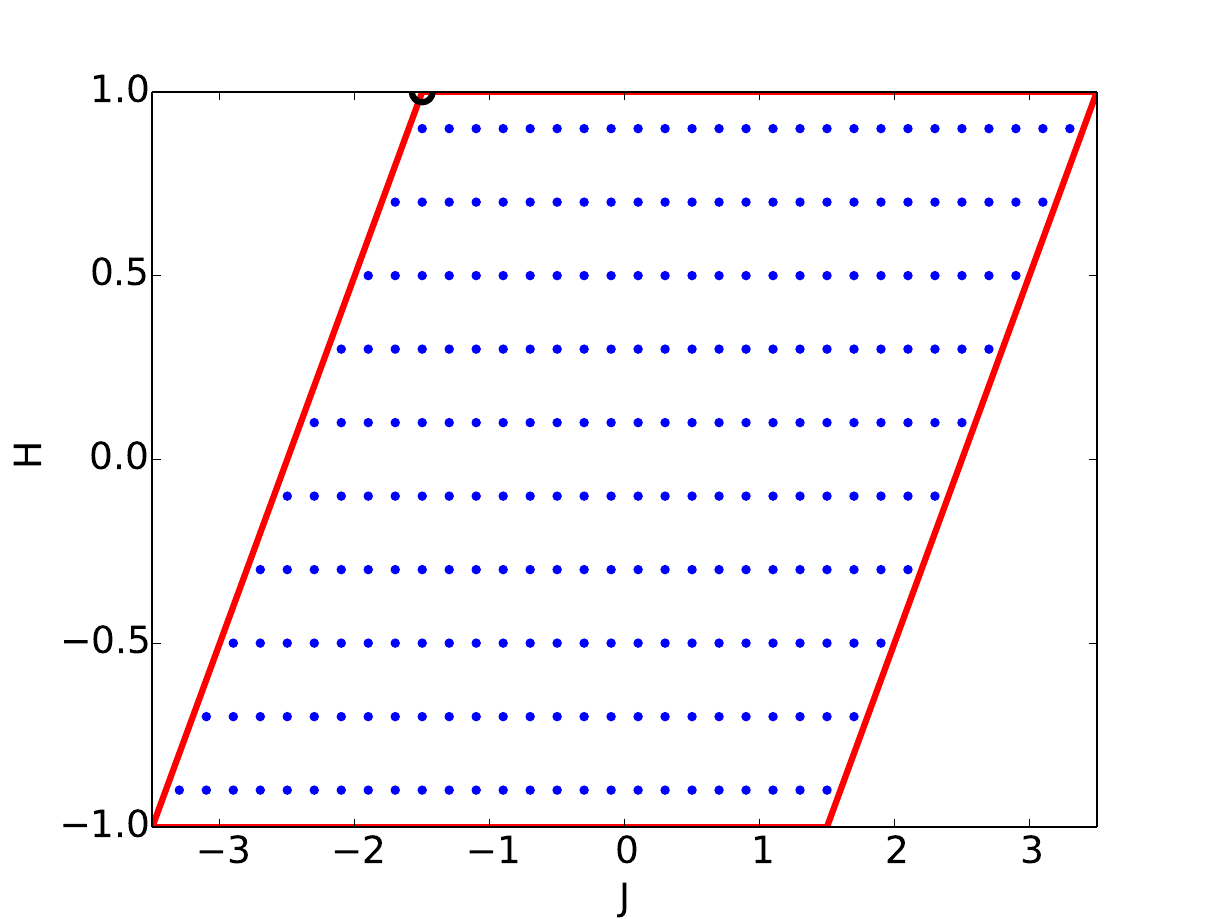} }
\subfigure[$t=0.2$]{\includegraphics[scale=0.22]{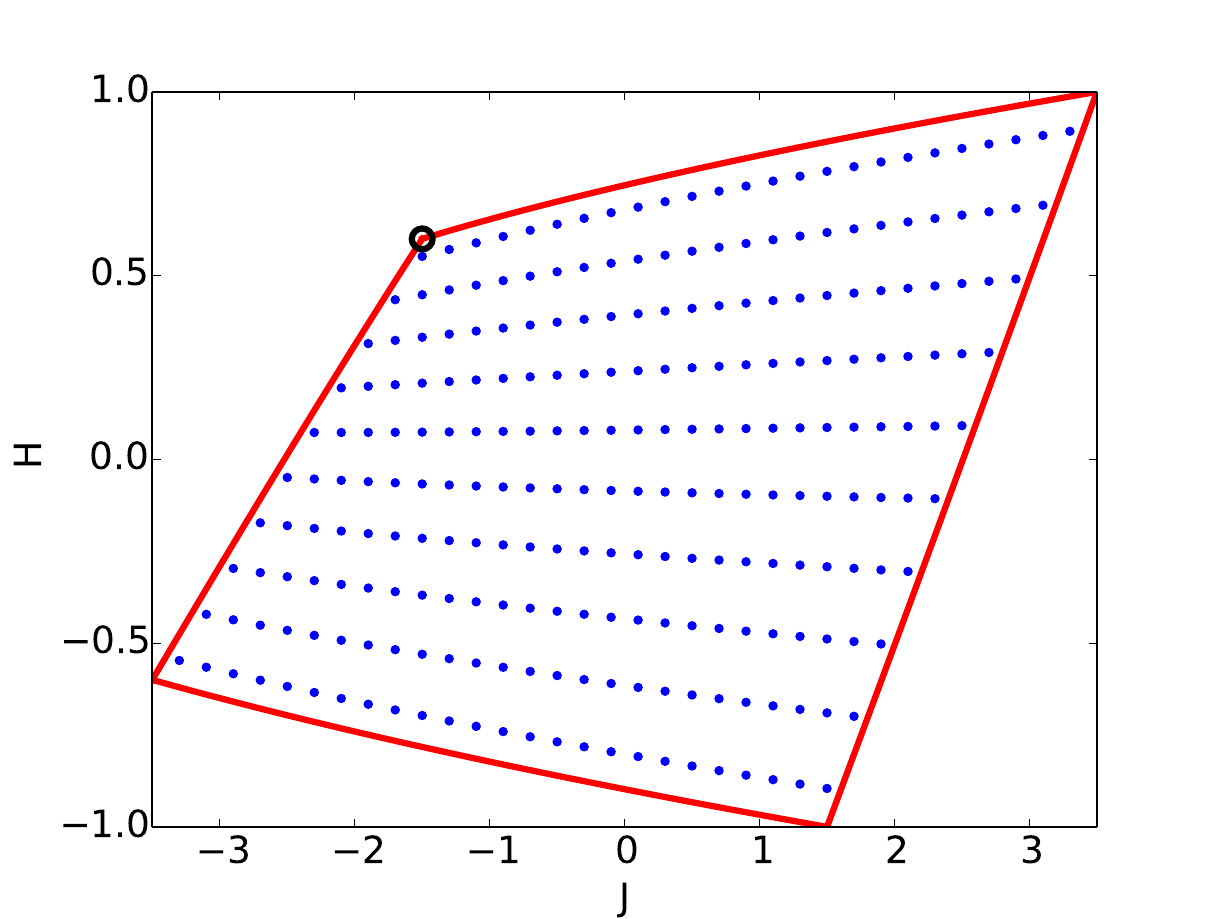} }
\subfigure[$t=t^- = \frac{5}{2(6+\sqrt{10})}$]{\includegraphics[scale=0.22]{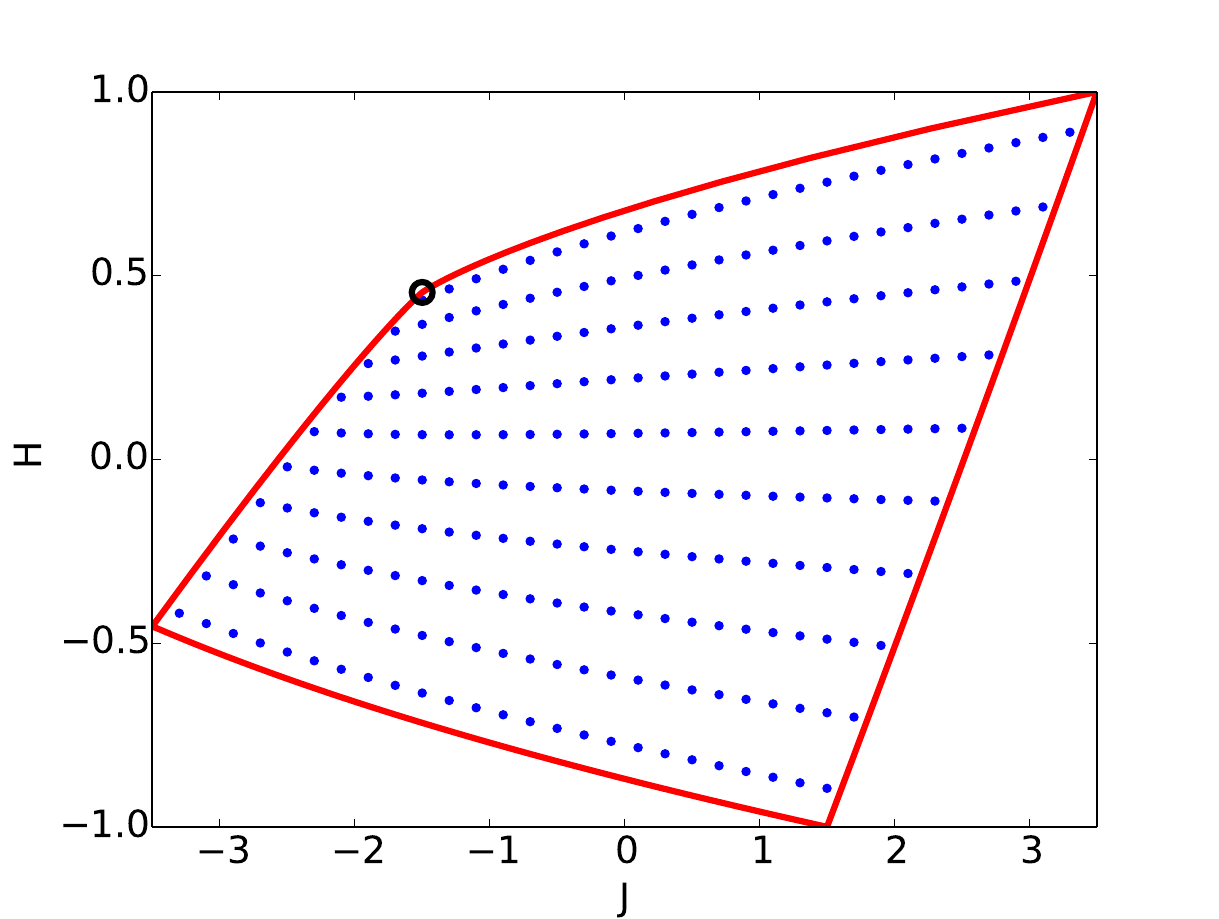} }
\subfigure[$t=0.4$]{\includegraphics[scale=0.22]{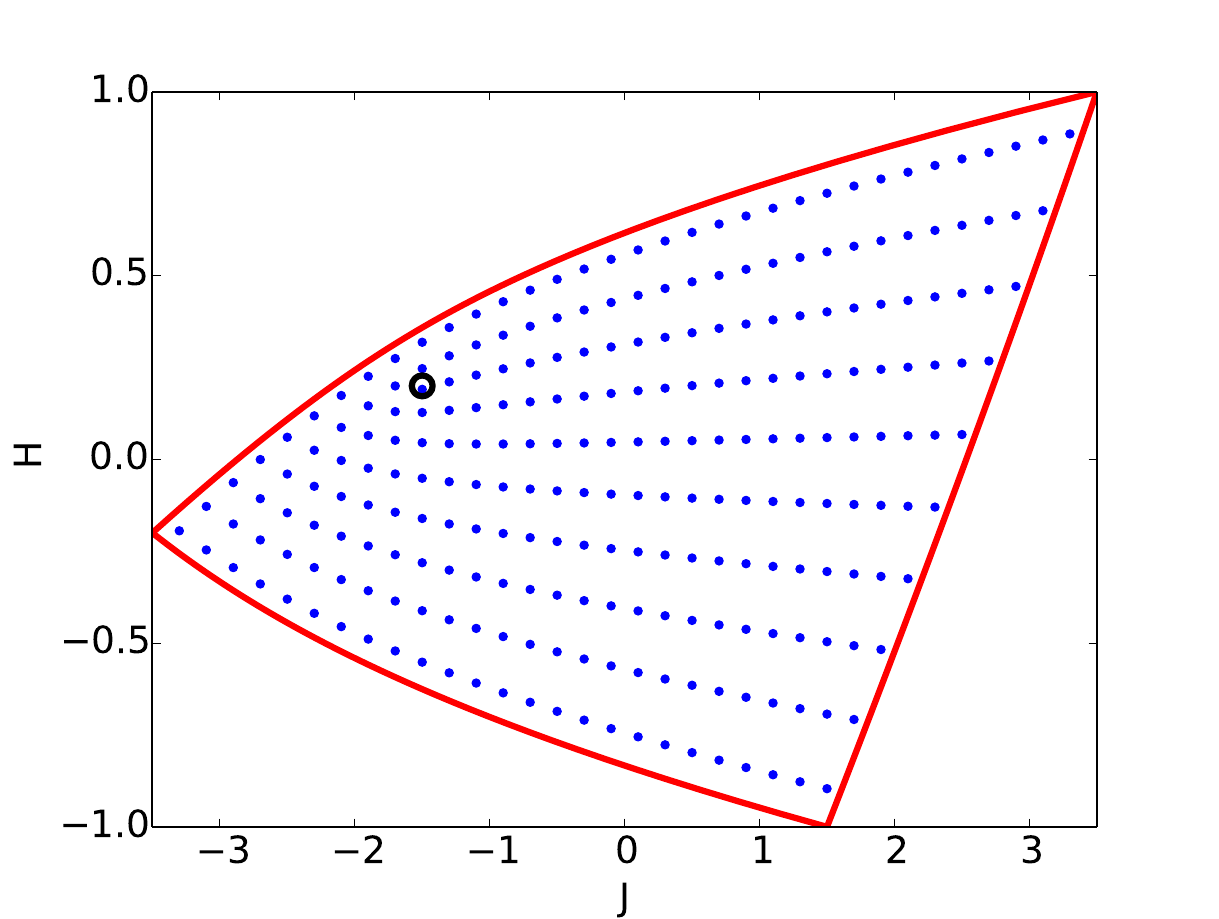} }
\subfigure[$t=0.5$]{\includegraphics[scale=0.22]{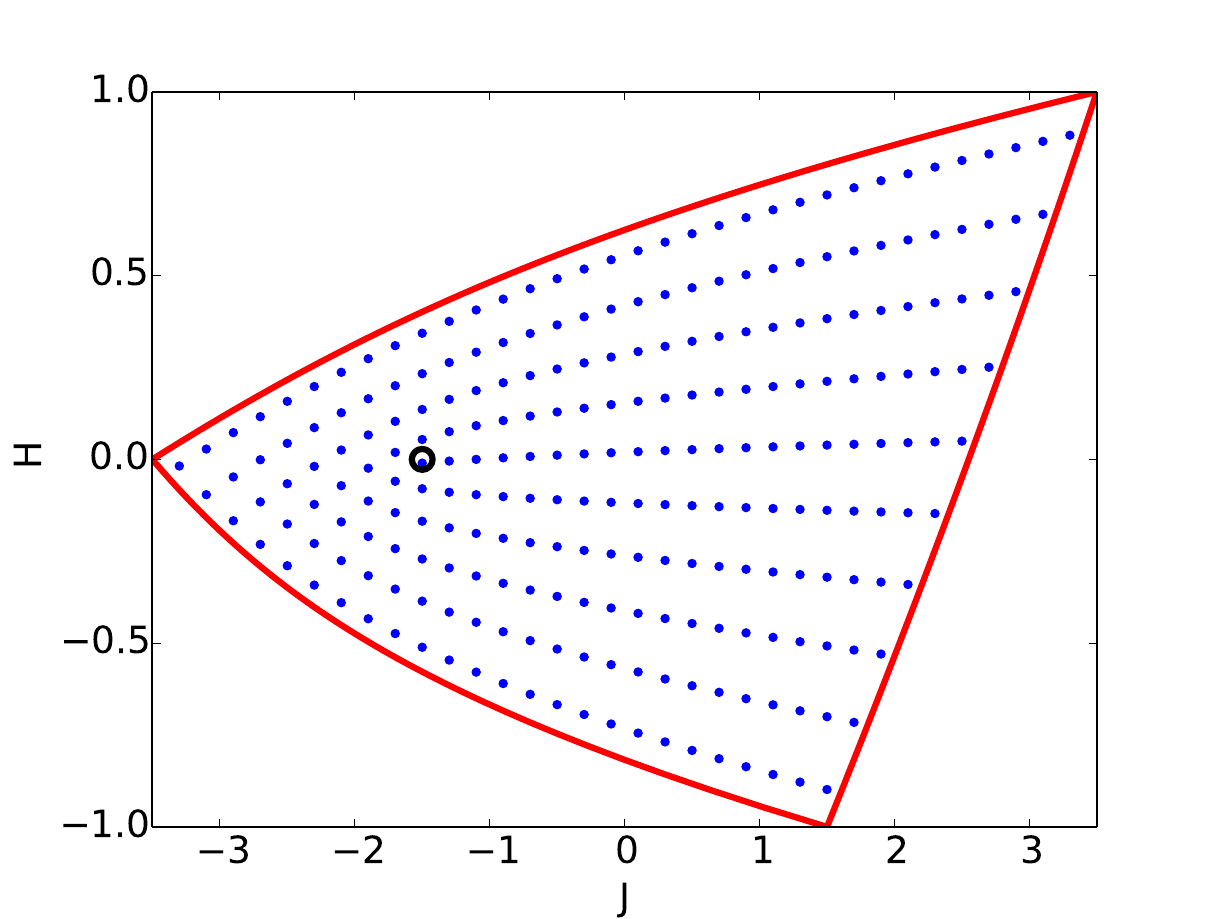} }
\subfigure[$t=0.7$]{\includegraphics[scale=0.22]{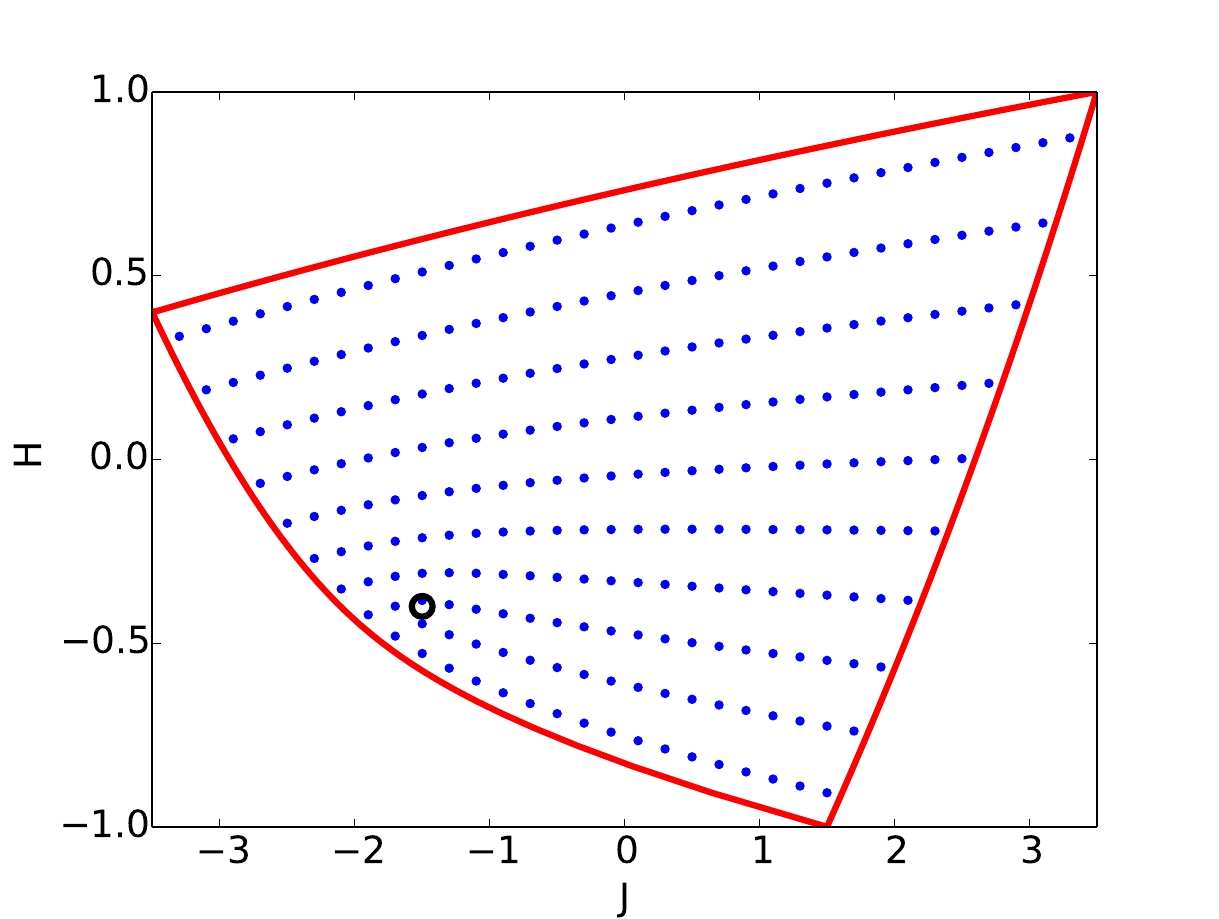} }
\subfigure[$t = t^+ = \frac{5}{2(6-\sqrt{10})}$]{\includegraphics[scale=0.22]{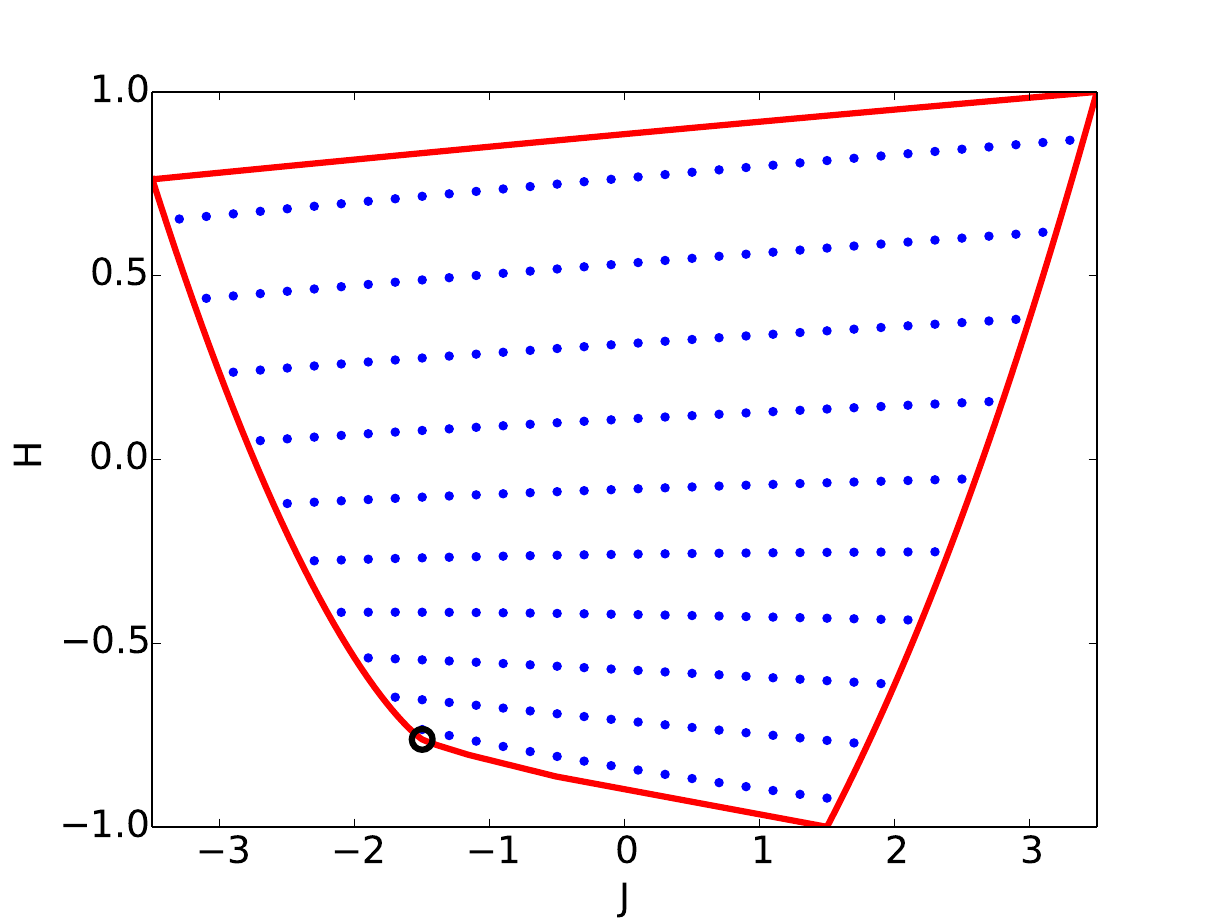} }
\subfigure[$t=0.9$]{\includegraphics[scale=0.22]{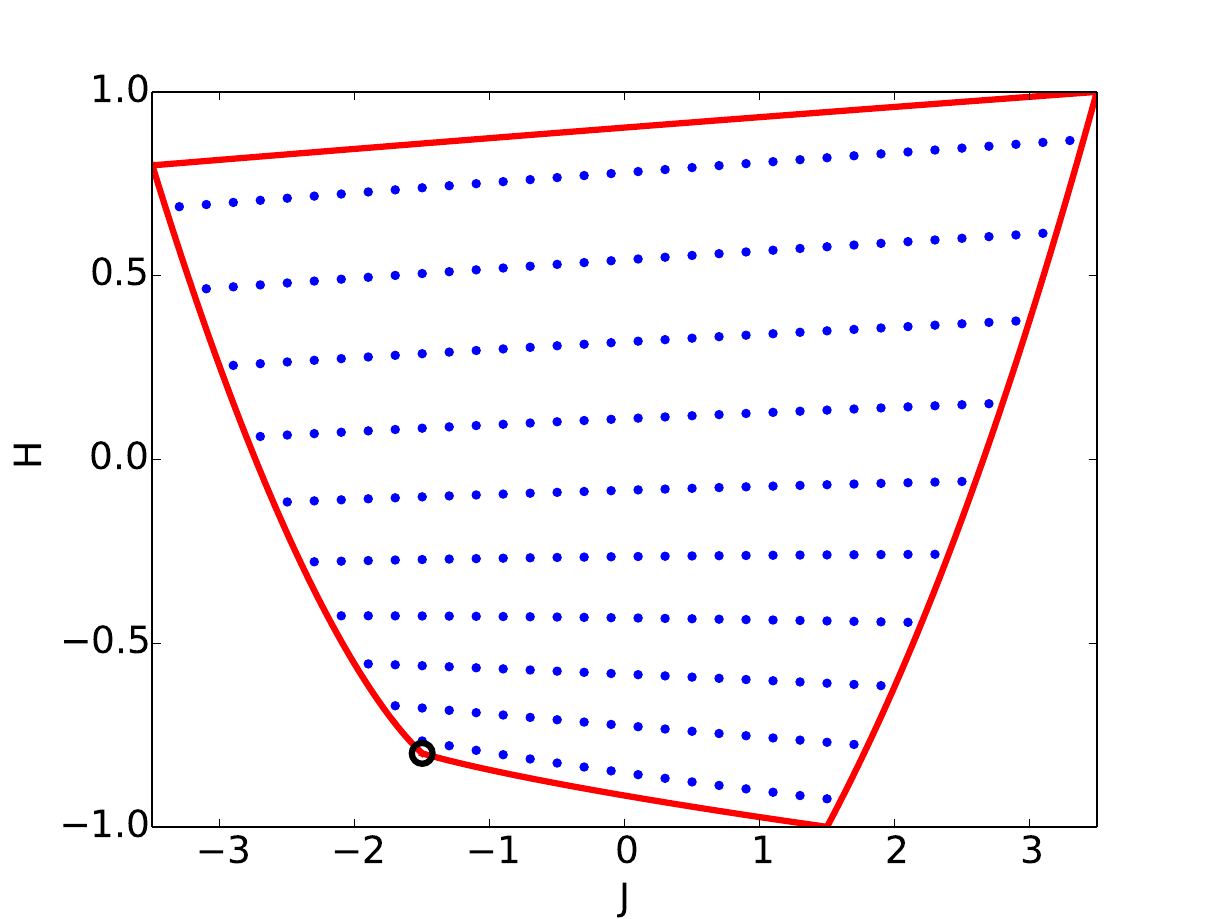} }
\subfigure[$t=1$]{\includegraphics[scale=0.22]{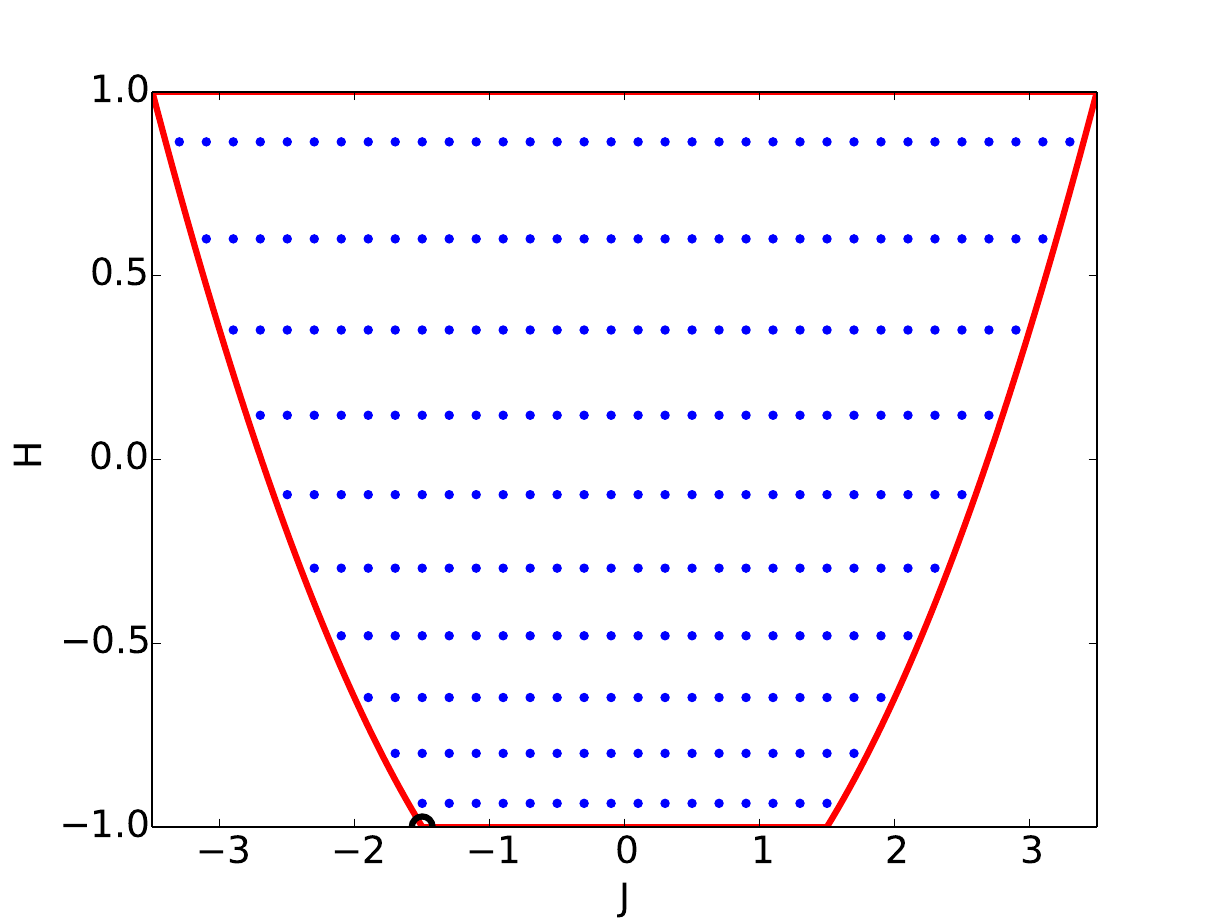} }
\end{center}
\caption{The blue dots constitute the joint spectrum of the quantized coupled angular momenta $(\hat{J}_k,\hat{H}_k)$ for different values of $t$ for $R_1 = 1$, $R_2 = 5/2$ and $k=5$. The red line indicates the boundary of $F(M)$; the center of the black circle corresponds to the critical value $c_0 = (-\tfrac{3}{2},1-2t)$.}
\label{fig:spectrum}
\end{figure}

\subsection{Conjectural Bohr-Sommerfeld rules}

In this section, we state a conjecture regarding the description of the joint spectrum of a pair of commuting self-adjoint Berezin-Toeplitz operators near a focus-focus critical value of the map formed by their principal symbols. We also give numerical evidence for this conjecture using the operators constructed above (or, if one believes in the conjecture, numerical methods to recover the coefficient $a_2$ and the height invariant). Note that a similar conjecture was stated in \cite[Conjecture $5.2$]{PelVN}, but we give here a more precise version taking into account the specificities of Berezin-Toeplitz operators. 

Let $(M,\omega)$ be a compact, connected four dimensional K\"ahler manifold, endowed with a prequantum line bundle $(L,\nabla) \to M$. In order to simplify the discussion, let us assume that there exists a half-form bundle $\delta \to M$, with a given line bundle isomorphism $\varphi: \delta^{\otimes 2} \to  \Lambda^{n,0} T^*M$; we could still state a conjecture without this assumption, but it would be more complicated (observe also that in the example above, such a half-form bundle exists). Let $A_k, B_k$ be two commuting self-adjoint Berezin-Toeplitz operators acting on $\Hil_k = H^0(M,L^{\otimes k} \otimes \delta)$, with respective principal symbols $f$ and $g$ and subprincipal symbols $r$ and $s$ (for a definition of the subprincipal symbol in this context, see \cite{ChaHalf}). Assume also that $(0,0)$ is a critical value of focus-focus type for $F = (f,g)$, and that there is a unique critical point $m_0$ on $F^{-1}(0,0)$. We want to describe the joint spectrum $\Sigma_k$ of $(A_k,B_k)$ near $(0,0)$.

Let $\gamma \subset M$ be an embedded closed curve. The \strong{principal action} of $\gamma$ is the number $c_0(\gamma) \in \Z/2\pi\Z$ such that the parallel transport along $\gamma$ in $(L,\nabla)$ is the multiplication by $\exp(ic_0(\gamma))$. We also define an index $\varepsilon(\gamma) \in \{0,1\}$ in the following way. Let $\delta_{\gamma}$ be the restriction of $\delta$ to $\gamma$, and consider the map $\varphi_{\gamma}: \delta_{\gamma} \to T^*\gamma \otimes \C, \ u \mapsto \iota^*\varphi(u)$ where $\iota$ is the embedding of $\gamma$ into $M$. It is an isomorphism of line bundles, and the set $ \{ u \in \delta_{\gamma} \ | \quad \varphi_{\gamma}(u^{\otimes 2}) > 0 \}$ has either one connected component, in which case we set $\varepsilon(\gamma) = 1$, or two connected components, in which case we set $\varepsilon(\gamma) = 0$. Finally, let $\gamma_0$ be a radial simple loop on $F^{-1}(0,0)$.

\begin{conj}
\label{conj:BS}
There exist sequences of functions $\lambda(t,k)$, $e^{(1)}(t,k)$, $e^{(2)}(t,k) \in \classe{\infty}{(\R^2,\R)}$ having asymptotic expansions of the form
\[ \lambda(t,k) = \sum_{\ell \geq -1} k^{-\ell} \lambda_{\ell}, \quad e^{(i)}(t,k) = \sum_{\ell \geq 0} k^{-\ell} {e_{\ell}^{(i)}} \]
for $i=1,2$, for the $\classe{\infty}$-topology, such that for every compact neighborhood $K$ of the origin in $\R^2$ and for every family $t = (t_1,t_2) \in K$, the pair $(k^{-1}t_1,k^{-1}t_2)$ belongs to $\Sigma_k + O(k^{-\infty})$ if and only if $e^{(1)}(t,k) = n + O(k^{-\infty})$ for some $n \in \Z$ and
\[ \lambda(t,k) + |n| \frac{\pi}{2} - e^{(2)}(t,k) \ln(2k^{-1}) - 2 \arg\left(\Gamma\left( \frac{i e^{(2)}(t,k) + |n| + 1}{2} \right) \right) \in 2\pi\Z + O(k^{-\infty}) . \]
Here, $\Gamma$ is the Gamma function. Furthermore,
\begin{itemize}
\item $\lambda_{-1}(t) = c_0(\gamma_0)$, \ $\lambda_0(t) = I_{\gamma_0}(t) + \varepsilon(\gamma_0) \pi$, where $I_{\gamma_0}$ is a regularized integral (the same as in the pseudodifferential case, see \cite[Corollary 6.10]{VN});
\item $\begin{pmatrix} e^{(1)}_0(t) \vspace{2mm} \\ e^{(2)}_0(t) \end{pmatrix} = B \begin{pmatrix} t_1 - r(m_0) \\ t_2 - s(m_0) \end{pmatrix}$ where $B$ stands for a $2 \times 2$ matrix such that $B \circ (d^2 f,d^2 g) = (q_1,q_2)$ near the focus-focus critical point (see Equation (\ref{eq:Eliasson_nf}) and the discussion preceding it).
\end{itemize}
\end{conj}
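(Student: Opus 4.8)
The plan is to transpose to the Berezin--Toeplitz framework the strategy used by V\~u Ng\d{o}c \cite{VN} to describe the joint spectrum of commuting pseudodifferential operators near a focus-focus value, keeping track in addition of the prequantum line bundle and the half-form bundle. The argument has three ingredients: a microlocal normal form near the critical point $m_0$, a semi-global WKB analysis on a neighborhood of the critical fiber minus $m_0$, and a gluing step producing the quantization condition, together with the identification of the leading terms of the expansions.

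\emph{Microlocal normal form near $m_0$.} First I would prove a Toeplitz analogue of Eliasson's normal form at the quantum level: there is a unitary Fourier integral operator $U_k$ of Toeplitz type which intertwines $(A_k,B_k)$, microlocally near $m_0$, with model operators $\widehat{q}_1,\widehat{q}_2$ quantizing $q_1 = \widehat u_1\widehat\xi_2 - \widehat u_2\widehat\xi_1$ and $q_2 = \widehat u_1\widehat\xi_1 + \widehat u_2\widehat\xi_2$ on a standard model (Bargmann) space, the conjugation being normalized so that the subprincipal symbols are transported to $r(m_0)$ and $s(m_0)$; this forces the linear identity $B\circ(d^2 f,d^2 g) = (q_1,q_2)$ and hence the formula for $e^{(1)}_0,e^{(2)}_0$, exactly as in Lemma \ref{lm:linear_Elia} and Proposition \ref{prop:linear_eliasson}. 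The operator $\widehat q_1$ generates a circle action, so its spectrum is $k^{-1}\Z$ up to a shift, which is the quantized variable $e^{(1)}(t,k) = k^{-1}n$; after conjugation the combination $\widehat q_2 + i\widehat q_1$ becomes, on each eigenspace of $\widehat q_1$, conjugate to the generator of dilations on a half-line, whose transfer/resolvent analysis via a Mellin-type transform produces a monodromy phase of the form $|n|\tfrac{\pi}{2} - e^{(2)}\ln(2k^{-1}) - 2\arg\Gamma\!\left(\tfrac{i e^{(2)}(t,k) + |n| + 1}{2}\right)$. This is precisely where the $\Gamma$ function in the statement comes from, and matching it with the pseudodifferential computation of \cite{VN} (and with \cite[Conjecture $5.2$]{PelVN}) is the crux of the analysis.

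\emph{Regular part and holonomy.} On $F^{-1}(0,0)\setminus\{m_0\}$, which is a cylinder, I would construct microlocal quasimodes by parallel transport in $L^{\otimes k}\otimes\delta$ along the radial loop $\gamma_0$. The resulting phase has leading term $k\,c_0(\gamma_0)$ (holonomy in the prequantum bundle), subleading term the regularized integral $I_{\gamma_0}(t)$ of the subprincipal form (as in the pseudodifferential case, see \cite{ChaHalf}), and an additional $\varepsilon(\gamma_0)\pi$ coming from the index of the restriction of $\delta$ to $\gamma_0$ --- this is exactly the index $\varepsilon(\gamma)$ introduced just before the statement. Together these yield $\lambda_{-1}(t) = c_0(\gamma_0)$ and $\lambda_0(t) = I_{\gamma_0}(t) + \varepsilon(\gamma_0)\pi$, while the higher-order terms $\lambda_\ell$ come from the subsequent terms of the asymptotic WKB construction.

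\emph{Gluing and obstacle.} Matching the local model solution from the first step with the semi-global solution from the second on the overlap of their domains of definition, using that the holonomy of the singular Lagrangian foliation is generated by the flows of $q_1$ and $q_2$, yields a single-valuedness condition $\lambda(t,k) + (\text{singular phase}) \in 2\pi\Z + O(k^{-\infty})$; inserting the singular phase computed in the first step gives the displayed equation. The main obstacle is the first step: establishing the microlocal normal form for commuting Toeplitz operators at a focus-focus point to all orders in $k^{-1}$, and in particular pinning down the exact $n$- and $k^{-1}$-dependence of the model's monodromy phase while carrying the half-form bundle through the reduction. The pseudodifferential analogue is already technically delicate, and the half-form bookkeeping in the Toeplitz setting has not been carried out in full generality, which is why the statement is offered as a conjecture supported by numerical evidence rather than as a theorem.
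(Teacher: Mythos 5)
The statement you are treating is offered in the paper as a conjecture: the authors give no proof, only numerical evidence (the minimal spectral gap test in Figure \ref{fig:gap} and the height invariant check) together with pointers to the pseudodifferential analogue of V{\~u} Ng{\d{o}}c \cite{VN} and to \cite[Conjecture $5.2$]{PelVN}. So there is no proof in the paper to compare yours against, and your text is not a proof either --- it is a strategy outline, and by your own closing paragraph the decisive step is left open. Concretely, the gap is your first step: a quantum (Berezin--Toeplitz) analogue of Eliasson's normal form at a focus-focus point, exact to all orders in $k^{-1}$, implemented by a Fourier integral operator of Toeplitz type that simultaneously transports the half-form bundle and the subprincipal symbols. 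No such result exists in the Toeplitz literature invoked by the paper, and without it none of the key ingredients is justified: not the quantized variable $e^{(1)}(t,k)=k^{-1}n+O(k^{-\infty})$, not the model monodromy phase $|n|\tfrac{\pi}{2}-e^{(2)}\ln(2k^{-1})-2\arg\Gamma\bigl(\tfrac{ie^{(2)}+|n|+1}{2}\bigr)$, and not the half-form contribution $\varepsilon(\gamma_0)\pi$ in $\lambda_0$. The gluing step likewise presupposes a sheaf-of-microlocal-solutions calculus over the singular fiber for Toeplitz operators that would itself have to be built along the lines of \cite{VN}. These are precisely the assertions the conjecture packages, so your argument, as written, establishes nothing beyond what the conjecture already claims.

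That said, your outline is faithful to what the authors evidently have in mind: the predicted structure (prequantum holonomy $c_0(\gamma_0)$ at order $k$, regularized subprincipal integral plus $\varepsilon(\gamma_0)\pi$ at order one, the matrix $B$ determining $e^{(1)}_0,e^{(2)}_0$ via the linearized Eliasson data, and the $\Gamma$-phase of the focus-focus model) matches the statement term by term, and identifying the quantum normal form with half-form bookkeeping as the crux is the correct diagnosis of why this remains open. Treat your text as a credible research plan for upgrading the conjecture to a theorem, not as a proof of the statement.
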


This conjecture seems very plausible because a similar result exists for pseudodifferential operators \cite[Theorem $7.5$]{VN}, the so-called Bohr-Sommerfeld conditions. Furthermore, Berezin-Toeplitz operators are microlocally equivalent to pseudodifferential operators, and Bohr-Sommerfeld conditions, similar to the ones for pseudodifferential operators, have been obtained for these operators near elliptic \cite{LFell} and hyperbolic points \cite{LFhyp} in one degree of freedom. Thus, there is very little doubt that the conjecture could be proved, in a straightforward but tedious way, by adapting the methods in \cite{VN}.

Now, assume for simplicity that for every integer $k \geq 1$, zero is an eigenvalue of $A_k$. Let $E_0^{(k)} \leq E_1^{(k)} \leq \ldots$ be the eigenvalues of the restriction of $B_k$ to $\ker A_k$. Following \cite[Theorem $7.6$]{VN} (see also \cite[Section $5.3$]{PelVN}), if true, the above conjecture would imply in particular that
\[ \min_p \left( k \left(E_{p+1}^{(k)} - E_p^{(k)}\right) \right) = \frac{2\pi \alpha}{\ln k + a_2 + \ln 2 + \gamma} + O(k^{-1}) \]
where $\alpha = \left\| B^{-1} \begin{pmatrix} 0 \\ 1 \end{pmatrix} \right\|$, $\gamma$ is the Euler-Mascheroni constant and $S^{\infty} = a_1 X + a_2 Y + \sum_{i+j > 1} b_{ij} X^i Y^j$ is the Taylor series defined in Section \ref{sect:symp_invariant}. Coming back to our system $(J,H)$ of coupled angular momenta with $t=1/2$ and its quantum counterpart $(\hat{J}_k,\hat{H}_k)$ (taking into account the fact that the focus-focus value is not $(0,0)$ but $(R_1 - R_2,0)$), it is clear from Lemma \ref{lm:ev_J} that $R_1 - R_2$ is always in the spectrum of $\hat{J_k}$, and $B$ is of the form
\[ B = \begin{pmatrix} 1 & 0 \\ a_{21} & a_{22} \end{pmatrix},  \]
so the above formula gives
\[ \min_p \left( k \left(E_{p+1}^{(k)} - E_p^{(k)}\right) \right) = \frac{2\pi}{|a_{22}|(\ln k + a_2 + \ln 2 + \gamma)} + O(k^{-1}). \]
Let us test this numerically on our example with $R_1 = 1$ and $R_2 = 5/2$; in this case, we know from Propositions \ref{prop:linear_eliasson} and \ref{prop:a2} that
\[  B = \begin{pmatrix} 1 & 0 \\ -\frac{1}{3} & \frac{10}{3} \end{pmatrix}, \quad a_2 = \frac{7}{2} \ln 2 + 3 \ln 3 - \frac{3}{2} \ln 5. \]
Consequently, the asymptotics
\[ \min_p \left( k \left(E_{p+1}^{(k)} - E_p^{(k)}\right) \right) = \frac{3\pi}{5(\ln k + \frac{9}{2} \ln 2 + 3 \ln 3 - \frac{3}{2} \ln 5 + \gamma)} + O(k^{-1}) \]
should hold; this is checked in Figure \ref{fig:gap}.

\begin{figure}
\begin{center}
\includegraphics[scale=0.4]{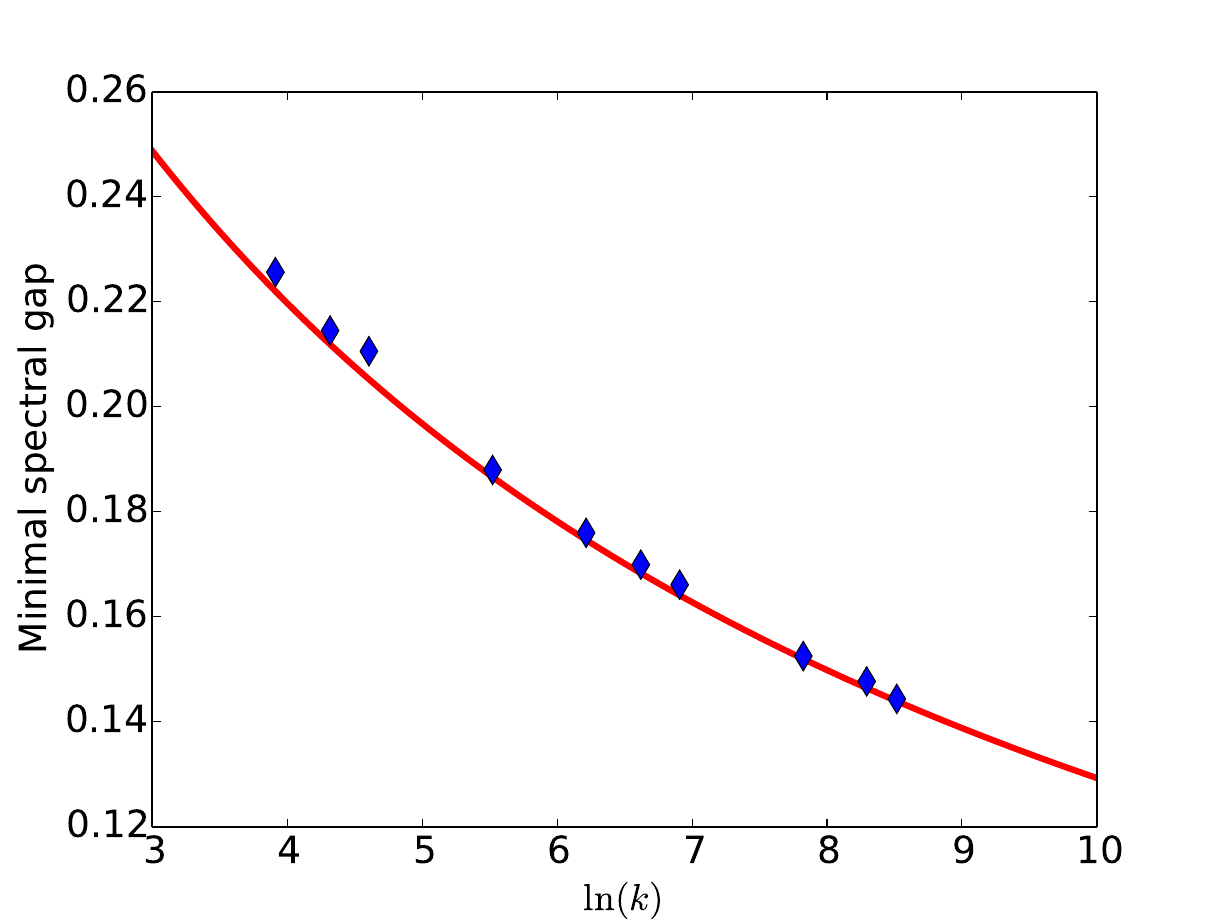}
\end{center}
\caption{The blue diamonds correspond to numerical computations of the minimal spectral gap $\min\left( k \left(E_{p+1}^{(k)} - E_p^{(k)}\right) \right)$ as a function of $\ln k$, for $t=1/2$, $R_1 = 1$ and $R_2 = 5/2$. The red line corresponds to the curve $\ln x \mapsto \frac{3\pi}{5\left(\ln x + \frac{9}{2} \ln 2 + 3 \ln 3 - \frac{3}{2} \ln 5 + \gamma\right)}$, giving the theoretical equivalent of this minimal spectral gap.}
\label{fig:gap}
\end{figure}

\subsection{Recovering the height invariant from the joint spectrum}

To conclude, let us briefly indicate how we can also use the joint spectrum to check that we have found the right formula for the height invariant $h$ in Proposition \ref{prop:height}. Indeed, assuming once again that Conjecture \ref{conj:BS} holds, one can use a  Weyl law to relate $h$ to the number of negative eigenvalues of the restriction of $\hat{H}_k$ to the kernel of $\hat{J}_k-(R_1 - R_2)\mathrm{Id}$. More precisely,
\begin{equation} h = \lim_{k \to +\infty} k^{-1} \#\left( \mathrm{Sp}\left( \hat{H}_{k  |\ker(\hat{J}_k-(R_1 - R_2)\mathrm{Id})} \right) \cap (-\infty,0) \right). \label{eq:weyl}\end{equation}
We check this formula in Figure \ref{fig:height}.

\begin{figure}
\begin{center}
\includegraphics[scale=0.4]{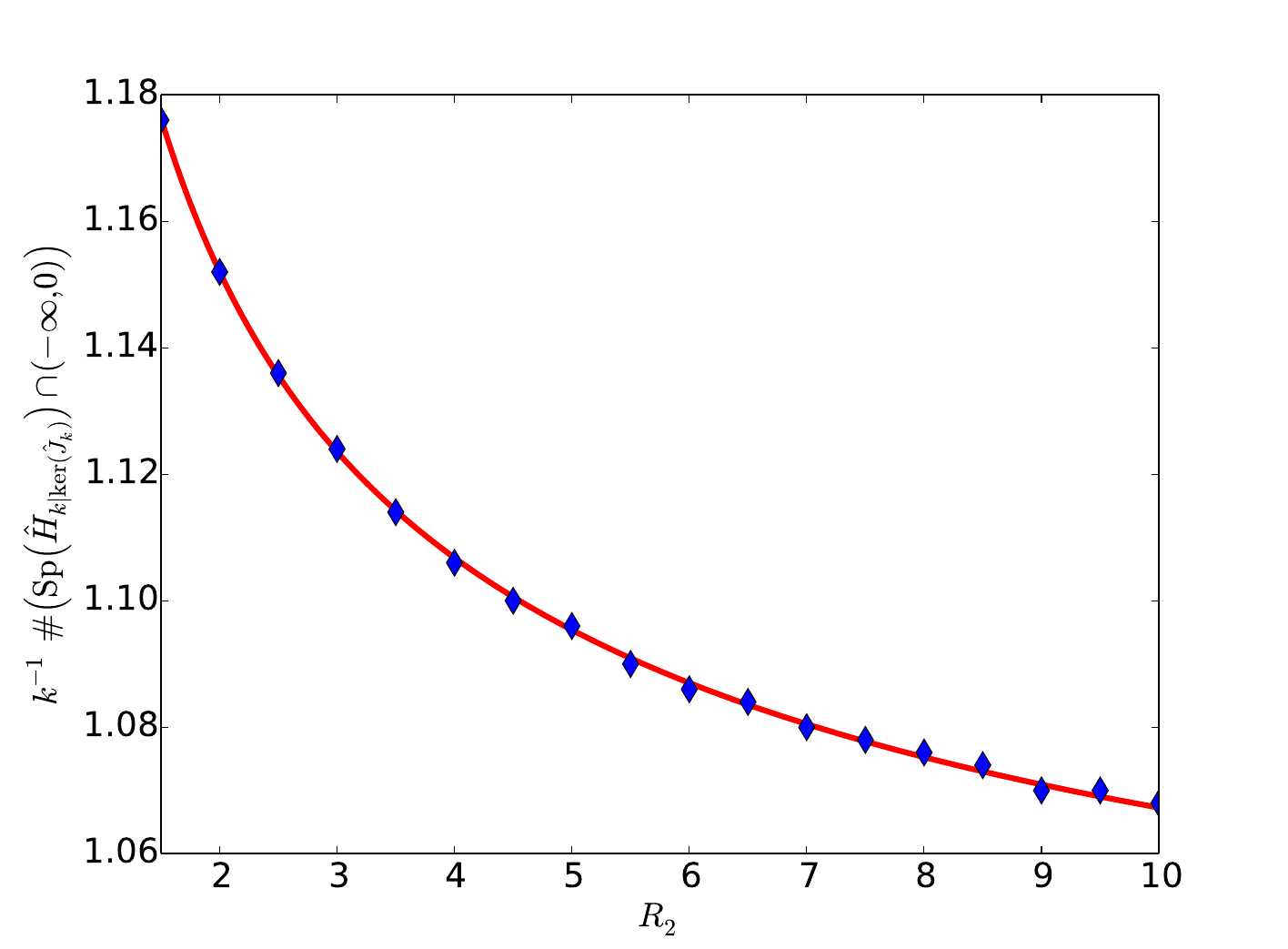}
\end{center}
\caption{Comparison of the two terms in Equation (\ref{eq:weyl}) for $k=500$, $t=1/2$, $R_1 = 1$ and $R_2 = 3/2,2,5/2, \ldots, 10$. The red line corresponds to the theoretical value of $h$ obtained in Proposition \ref{prop:height}, while the blue diamonds correspond to the numerical computation of the number of negative eigenvalues of the restriction of $\hat{H}_k$ to $\ker \hat{J}_k$, divided by $k$.}
\label{fig:height}
\end{figure}

\appendix

\section{Appendix: proofs of technical results}

\begin{proof}[Proof of Lemma \ref{lm:XJ_XH}]
A standard computation shows that the symplectic form becomes
\[ (\Psi^{-1})^* \omega = \frac{2 i R_1 dz \wedge d\bar{z}}{(1+|z|^2)^2} + \frac{2 i R_2 dw \wedge d\bar{w}}{(1+|w|^2)^2}. \]
Moreover, we deduce from Equation (\ref{eq:JH_comp}) that
\[ \frac{\partial J}{\partial z} = \frac{-2R_1 \bar{z}}{(1+|z|^2)^2}, \qquad \frac{\partial J}{\partial w} = \frac{2R_2 \bar{w}}{(1+|w|^2)^2}. \]
Since $J$ is real-valued, $\frac{\partial J}{\partial \bar{z}} = \overline{\frac{\partial J}{\partial z}}$ and $\frac{\partial J}{\partial \bar{w}} = \overline{\frac{\partial J}{\partial w}}$, and we obtain the desired result. Furthermore,
\[ \frac{\partial H}{\partial z} = \frac{1}{2} \left( \frac{-2\bar{z}}{(1+|z|^2)^2} + \frac{(2w-\bar{z}|w|^2+\bar{z})(1+|z|^2) - \bar{z}(2(zw + \bar{z}\bar{w}) + (1-|z|^2)(|w|^2-1))}{(1+|z|^2)^2(1+|w|^2)} \right), \]
which yields, after simplification
\[ \frac{\partial H}{\partial z} = \frac{w-2\bar{z}|w|^2-\bar{z}^2\bar{w}}{(1+|z|^2)^2(1+|w|^2)}. \]
A similar computation shows that
\[ \frac{\partial H}{\partial w} = \frac{z + \bar{w}-\bar{w}|z|^2-\bar{z}\bar{w}^2}{(1+|z|^2)(1+|w|^2)^2}, \]
and we conclude by using the same argument as above.
\end{proof}

\begin{proof}[Proof of Lemma \ref{lm:XH_Lambda}]
Let $\lambda_1(z,w)$, $\lambda_2(z,w)$ be as in the previous lemma, and let $(z,w) \in \Lambda_0 \setminus \{(0,0)\}$. Since $z \neq 0$, we get, by multiplying both the numerator and the denominator in $\lambda_1$ by $\bar{z}$, that
\[ \lambda_1(z,w) = \frac{\bar{z} \bar{w} - 2 |z|^2 |w|^2 - zw |z|^2}{R_1 \bar{z}(1+|w|^2)}. \]
But the second equation in (\ref{eq:lambda_0}) yields $\bar{z} \bar{w} = - zw - |w|^2 + |z|^2 |w|^2$, hence
\[ \lambda_1(z,w) = \frac{-zw - |w|^2 -  |z|^2 |w|^2 - zw |z|^2}{R_1 \bar{z}(1+|w|^2)} = -\frac{w(z+\bar{w})(1+|z|^2)}{R_1 \bar{z}(1+|w|^2)}. \]
The first equation in (\ref{eq:lambda_0}) allows us to further simplify this expression and to obtain
\[ \lambda_1(z,w) = -\frac{zw(z+\bar{w})(1+|z|^2)}{R_2 |w|^2 (1+|z|^2)} = -\frac{z(z+\bar{w})}{R_2 \bar{w}}. \]
Now, since $w \neq 0$, we have that
\[ \lambda_2(z,w) = \frac{\bar{z} |w|^2 + w |w|^2 - w |z|^2 |w|^2- z w^2 |w|^2}{R_2 |w|^2 (1+|z|^2)} = \frac{\bar{z} |w|^2 - z w^2 |w|^2 + w |w|^2 (1 -|z|^2)}{R_2 |w|^2 (1+|z|^2)}. \]
The second equation in (\ref{eq:lambda_0}) gives
\[ \bar{z} |w|^2 - z w^2 |w|^2 + w |w|^2 (1 -|z|^2) = \bar{z} |w|^2 - z w^2 |w|^2 - z w^2 - \bar{z} |w|^2 = -zw^2(1+|w|^2),  \]
thus, using the first one again, we finally obtain that $ \lambda_2(z,w) = -\frac{w^2}{R_1 \bar{z}}$.
\end{proof}

\begin{proof}[Proof of Proposition \ref{prop:linear_eliasson}]
We start by diagonalizing $A$ over $\C$. With our choice of parameters, it follows from Equation (\ref{eq:hessians}) that
\[ A = \begin{pmatrix} 0 & 0 & 0 & \frac{1}{2} \\ 0 & 0 & -\frac{1}{2} & 0 \\ 0 & -\frac{1}{5} & 0 & -\frac{1}{5} \\ \frac{1}{5} & 0 & \frac{1}{5} & 0  \end{pmatrix}. \]
$A$ has eigenvalues $\lambda_1 = \frac{3+i}{10}, \bar{\lambda}_1, \lambda_2 = - \lambda_1, \bar{\lambda}_2$ with respective eigenvectors $X_1, \overline{X_1}, X_2, \overline{X_2}$ where
\[ X_1 = \frac{1}{2} \begin{pmatrix} 3 - i \\ -1 - 3i \\ 2i \\  2 \end{pmatrix} , \qquad X_2 = \frac{1}{2} \begin{pmatrix} -3 + i \\ - 1 - 3i \\ -2i \\ 2 \end{pmatrix}. \]
Let $E_{\lambda}$ be the eigenspace associated with $\lambda$, and let $F = E_{\lambda_1} \oplus E_{\bar{\lambda}_1}$ and $G = E_{\lambda_2} \oplus E_{\bar{\lambda}_2}$; then $T_{m_0}M = F \oplus G$ and a real basis of $F$ is given by
\[ Y_1 = X_1 + \overline{X_1} = \begin{pmatrix} 3 \\ -1 \\ 0 \\ 2 \end{pmatrix} , \qquad Y_2 = -i(X_1 - \overline{X_1}) = \begin{pmatrix} -1 \\ -3 \\ 2 \\ 0 \end{pmatrix}.  \]
There exists a unique basis $(Z_1,Z_2)$ of $G$ such that $(Y_1,Y_2,Z_1,Z_2)$ is a symplectic basis of $T_{m_0}M$ (see for instance \cite[Lemma $3.2.3$]{Meyer}); in the latter, $A$ will take the form displayed in Equation (\ref{eq:mat_focus}). Since
\[ G = \mathrm{Span}\left( \begin{pmatrix} -3 \\ -1 \\ 0 \\ 2 \end{pmatrix}, \begin{pmatrix} 1 \\ -3 \\ -2 \\ 0 \end{pmatrix} \right), \]
there exists $a,b,c,d \in \R$ such that
\[ Z_1 = \begin{pmatrix} -3a + b \\ -a - 3b \\ -2b \\ 2a \end{pmatrix} , \qquad Z_2 = \begin{pmatrix} -3c + d \\ -c - 3d \\ -2d \\ 2c \end{pmatrix}.  \]
Recall that $\omega_{m_0} = - dx_1 \wedge dy_1 + \frac{5}{2} dx_2 \wedge dy_2$. Hence we need to solve the system
\[ \begin{cases} 1 = \omega_{m_0}(Y_1,Z_1) = 6a + 18b \\ 0 = \omega_{m_0}(Y_1,Z_2) = 6c + 18d \\ 0 = \omega_{m_0}(Y_2,Z_1) = 18a - 6b \\ 1 = \omega_{m_0}(Y_2,Z_2) = 18c - 6d \end{cases}. \]
We find $a = \frac{1}{60}, b = \frac{1}{20}, c = \frac{1}{20}$ and $d = -\frac{1}{60}$, hence
\[ Z_1 = \frac{1}{30} \begin{pmatrix} 0 \\ -5 \\ -3 \\ 1 \end{pmatrix} , \qquad Z_2 = \frac{1}{30} \begin{pmatrix} -5 \\ 0 \\ 1 \\ 3 \end{pmatrix}.  \]
Consequently, the matrix $P$ of change of basis from the basis $\mathcal{B}$ associated with $(x_1,y_1,x_2,y_2)$ to the basis $(Y_1,Y_2,Z_1,Z_2)$ satisfies
\[ P = \begin{pmatrix} 3 & -1 & 0 & -\frac{1}{6} \vspace{2mm}\\ -1 & -3 & - \frac{1}{6} & 0 \vspace{2mm}\\ 0 & 2 & -\frac{1}{10} & \frac{1}{30} \vspace{2mm}\\ 2 & 0 & \frac{1}{30} & \frac{1}{10} \end{pmatrix}, \quad P^{-1} = \begin{pmatrix} \frac{1}{6} & 0 & \frac{1}{12} & \frac{1}{4} \vspace{2mm}\\ 0 & -\frac{1}{6} & \frac{1}{4} & -\frac{1}{12} \vspace{2mm}\\ -1 & -3 & -5 & 0 \vspace{2mm}\\ -3 & 1 & 0 & 5 \end{pmatrix}, \]
which yields the desired expression for the coordinates $u_1,u_2,\xi_1,\xi_2$. Moreover, $P^{-1}AP$ is as in Equation $(\ref{eq:mat_focus})$, with $\alpha = \frac{-3}{10}$ and $\beta =\frac{1}{10}$; this gives 
\[ B = \begin{pmatrix} 1 & 0 \\ \frac{1}{10} & -\frac{3}{10} \end{pmatrix}^{-1} = \begin{pmatrix} 1 & 0 \\ \frac{1}{3} & -\frac{10}{3} \end{pmatrix}. \]
This is not satisfactory since we want the lower right coefficient in this matrix to be positive. In order to obtain a $B$ satisfying this requirement, it suffices to perform the symplectic change of coordinates $(u_1,u_2,\xi_1,\xi_2) \mapsto (-\xi_1,-\xi_2, u_1, u_2)$.
\end{proof}

\begin{proof}[Proof of Lemma \ref{lm:compute_int}]
Let $g(\rho) = \arccos\left( \frac{\rho^2-1}{2} \sqrt{\frac{R_1}{R_2+(R_2-R_1)\rho^2}}\right) = \arccos\left( \frac{\rho^2-1}{2} \sqrt{\frac{1}{\Theta+(\Theta-1)\rho^2}}\right)$; a tedious but straightforward computation yields
\[ g'(\rho) =  \frac{-((\Theta-1)\rho^3 + (3\Theta-1)\rho)}{(\Theta+(\Theta-1)\rho^2)\sqrt{(1+\rho^2)(4\Theta-1-\rho^2)}}, \]
and therefore an integration by parts leads to
\[ I = \frac{1}{2} \arccos\left(\frac{-1}{2\sqrt{\Theta}}\right) - \frac{J}{2} = \frac{\pi}{2} - \frac{1}{2} \arccos\left(\frac{1}{2\sqrt{\Theta}}\right) - \frac{J}{2}  \]
where $J$ is the integral
\[ J = \int_0^{\sqrt{4\Theta-1}} \frac{(\Theta-1)\rho^3 + (3\Theta-1)\rho}{(1+\rho^2)(\Theta+(\Theta-1)\rho^2) \sqrt{(1+\rho^2)(4\Theta-1-\rho^2)}} \ d\rho. \]
Now, one can check that
\[ \frac{(\Theta-1)\rho^2 + 3\Theta-1}{(1+\rho^2)(\Theta+(\Theta-1)\rho^2)} = \frac{2\Theta}{1+\rho^2} - \frac{(2\Theta-1)(\Theta-1)}{\Theta+(\Theta-1)\rho^2}, \]
and consequently $J = K - L$ with
\[ K = \int_0^{\sqrt{4\Theta-1}} \frac{2\Theta \rho \ d\rho}{(1+\rho^2)^{3/2} \sqrt{4\Theta-1-\rho^2}}, \ L = \int_0^{\sqrt{4\Theta-1}} \frac{(2\Theta-1)(\Theta-1)\rho \ d\rho}{(\Theta+(\Theta-1)\rho^2)\sqrt{(1+\rho^2)(4\Theta-1-\rho^2)}} .  \]
One readily checks that
\[ K = - \frac{1}{2} \int_0^{\sqrt{4\Theta-1}} h'(\rho) \ d\rho, \quad h(\rho) = \sqrt{\frac{4\Theta-1-\rho^2}{1+\rho^2}},  \]
and therefore $K = \frac{1}{2} \sqrt{4\Theta-1}$. Regarding $L$, we start by making the change of variables $u = \rho^2$:
\[ L = \frac{(2\Theta-1)(\Theta-1)}{2} \int_0^{4\Theta-1} \frac{du}{(\Theta+(\Theta-1)u)\sqrt{4\Theta^2-(u+1-2\Theta)^2}}. \]
Now, we perform the changes of variables $v = u+1-2\Theta$ and $v = 2\Theta \sin \theta$ to get
\[ L = \frac{(2\Theta-1)(\Theta-1)}{2} \int_{\arcsin\left(\frac{1}{2\Theta}-1\right)}^{\frac{\pi}{2}} \frac{d\theta}{2\Theta(\Theta-1) \sin \theta + 2 \Theta^2 - 2\Theta + 1}. \]
Setting $w = \tan(\theta/2)$, and using the fact that $\tan\left(\frac{1}{2}\arcsin\left(\frac{1}{2\Theta}-1\right)\right) = \frac{2\Theta - \sqrt{4\Theta-1}}{1-2\Theta}$ yields
\[ L = (2\Theta-1)(\Theta-1) \int_{ \frac{2\Theta - \sqrt{4\Theta-1}}{1-2\Theta}}^{1} \frac{dw}{4\Theta(\Theta-1) w + (1+w^2)(2 \Theta^2 - 2\Theta + 1)}.   \]
This can be rewritten as
\[ L = (2\Theta-1)(\Theta-1) \int_{ \frac{2\Theta - \sqrt{4\Theta-1}}{1-2\Theta}}^{1} \frac{dw}{\left( w \sqrt{2\Theta^2-2\Theta+1} + \frac{2\Theta(\Theta-1)}{\sqrt{2\Theta^2-2\Theta+1} } \right)^2 + \frac{4\Theta^2-4\Theta+1}{2\Theta^2-2\Theta+1} }. \]
Setting $z = w \sqrt{2\Theta^2-2\Theta+1} + \frac{2\Theta(\Theta-1)}{\sqrt{2\Theta^2-2\Theta+1} }$, we finally arrive at
\[ L = (\Theta-1) \int_{ \frac{2\Theta^2 - (2\Theta^2-2\Theta+1) \sqrt{4\Theta-1}}{1+2\Theta}}^{2\Theta-1} \frac{dw}{1+w^2}. \]
This means that
\[ L = (\Theta-1) \left( \arctan(2\Theta-1)  - \arctan\left( \frac{(2\Theta^2-2\Theta+1)\sqrt{4\Theta-1}-2\Theta^2}{(2\Theta-1)^2} \right)\right), \]
and we obtain that 
\[ I = \frac{\pi}{2} - \frac{1}{2} \arccos\left(\frac{1}{2\sqrt{\Theta}}\right)  - \frac{\sqrt{4\Theta-1}}{4}  + \left(\frac{\Theta-1}{2} \right) \left( \arctan(2\Theta-1)   -  \arctan\left( \frac{(2\Theta^2-2\Theta+1)\sqrt{4\Theta-1}-2\Theta^2}{(2\Theta-1)^2} \right) \right). \]
We conclude the proof by using the arctan addition formula.
\end{proof}

\section{Critical points of corank one}

We explain how to prove the claim about the critical points of corank one of $F=(J,H_t)$ in the proof of Corollary \ref{cor:semitor}, namely that they are non-degenerate for every $t \in ]0,1]$ (the case $t=0$ is clear since the system is toric up to vertical scaling). It suffices to prove that for every $E$ in the image of $J$, except the ones corresponding to critical points of corank two, the critical points of the restriction of $H_t$ to the symplectic quotient $M^{\text{red}}_E = J^{-1}(E) / S^1$ with respect to the action generated by $J$ are non-degenerate. Although this is a folk result, it seems that a proof only appeared very recently in the literature \cite[Corollary 2.5]{HohPal}. Coming back to our particular case, let $E \in (-(R_1 + R_2), R_1 + R_2) \setminus \{ R_1 - R_2, R_2 - R_1 \}$; since the poles do not give rise to critical points on $J^{-1}(E)$, we may work with cylindrical coordinates as in Section \ref{sect:height} (or as in \cite[Section 3.3]{HohPal} where a similar computation is performed)
\[ (x_j, y_j, z_j) = \left( \sqrt{1-z_j^2} \cos \theta_j,  \sqrt{1-z_j^2} \sin \theta_j, z_j \right), \quad j = 1,2. \] 
In these coordinates, $H_t$ reads
\[ H_t(\theta_1, z_1, \theta_2, z_2) = (1-t) z_1 + t \left( \sqrt{(1-z_1^2)(1-z_2^2)} \cos(\theta_1 - \theta_2) + z_1 z_2 \right).  \]
Since $z_2$ can be deduced from $z_1$ on $J^{-1}(E)$, namely 
\[ z_2 = \frac{E-R_1 z_1}{R_2}, \qquad \max\left(-1, \frac{E - R_2}{R_1} \right) < z_1 < \min\left(1, \frac{E + R_2}{R_1} \right) \]
and since the action of $J$ preserves the angle $\theta = \theta_1 - \theta_2$, we can use $(z_1, \theta)$ as coordinates on $M^{\text{red}}_E$:
\[ H_t(z_1,\theta) = (1-t) z_1 + \frac{t z_1(E-R_1 z_1)}{R_2} + \frac{t \cos \theta}{R_2} \sqrt{P_E(z_1)}, \quad P_E(z_1) = (1-z_1^2)\left(R_2^2 - (E - R_1 z_1)^2 \right).  \]
The first derivatives of $H_t$ read
\[ \frac{\partial H_t}{\partial \theta}(z_1,\theta) = \frac{-t \sqrt{P_E(z_1)}\sin \theta}{R_2}, \quad  \frac{\partial H_t}{\partial z_1}(z_1,\theta) = 1-t + \frac{tE}{R_2} - \frac{2 t R_1 z_1}{R_2} + \frac{t P_E'(z_1) \cos \theta}{2 R_2 \sqrt{P_E(z_1)}}. \]
Hence if $(z_1^*,\theta^*)$ is a critical point, then necessarily $\theta^* \in \{0, \pi\}$, and $\frac{\partial^2 H_t}{\partial \theta \partial z_1}(z_1^*,\theta^*) = 0$. Let $\varepsilon = \cos \theta^* \in \{-1,1\}$; then one readily checks that 
\[ \frac{\partial^2 H_t}{\partial \theta^2}(z_1^*,\theta^*) = \frac{-\varepsilon t \sqrt{P_E(z_1^*)}}{R_2}, \quad \frac{\partial^2 H_t}{\partial z_1^2}(z_1^*,\theta^*) = \frac{t}{4 R_2} \left( - 8 R_1 + \varepsilon \left( \frac{2 P_E''(z_1^*) P_E(z_1^*) - P_E'(z_1^*)^2}{P_E(z_1^*)^{3/2}} \right) \right) \] 
We claim that this last quantity has the sign of $-\varepsilon$; this follows from the fact that
\[ Q(E,z_1) = \frac{2 P_E''(z_1) P_E(z_1) - P_E'(z_1)^2}{P_E(z_1)^{3/2}} < - 8 R_1 \]  
for any $E, z_1$ satisfying the above bounds. In order to prove this, one may check that $Q$ is minimal at $(E,z_1) = (0,0)$; since $Q(0,0) = -4 \left( R_2 + \frac{R_1^2}{R_2} \right)$ and since the function $x > 0 \mapsto x + \frac{R_1^2}{x}$ is minimal at $x = R_1$ with value $2 R_1$, we obtain the desired result because $R_2 > R_1$.

In fact, this analysis gives us the sign of the determinant of the Hessian of $H_t$ at a critical point, so we can deduce from it that the corank one critical points are of elliptic-transverse type. Hence if one is only interested in proving this, and not in finding a parametrization of the boundary of the image of the momentum map, this appendix constitutes a faster way to obtain Corollary \ref{cor:semitor}.

\bibliographystyle{abbrv}
\bibliography{angular_momenta}

\vspace{1cm}

\begin{minipage}{0.50\linewidth}
     {\bf Yohann Le Floch} \\
Institut de Recherche Math\'ematique Avanc\'ee\\
UMR 7501, Universit\'e de Strasbourg et CNRS\\
67000 Strasbourg, France\\
{\em E\--mail:} \texttt{ylefloch@unistra.fr}\\
      
   \end{minipage}\hfill
   \begin{minipage}{0.50\linewidth}   
{\bf {{\'A}lvaro Pelayo}}\\
Department of Mathematics\\
University of California, San Diego\\
9500 Gilman Drive  \# 0112\\
La Jolla, CA 92093-0112, USA\\
{\em E\--mail}: \texttt{alpelayo@math.ucsd.edu}
\end{minipage}

\end{document}